\documentclass[11pt]{article}
\usepackage[margin=1in]{geometry}
\usepackage{amsmath,amssymb,amsfonts,amsthm}
\usepackage{graphicx,xcolor,array,url,ifthen}
\RequirePackage{tgtermes}
\RequirePackage{newtxtext}
\RequirePackage{newtxmath}
\RequirePackage{bm}
\RequirePackage{endnotes}

\usepackage{algorithm}
\usepackage{algpseudocode}
\usepackage{tikz}

\usepackage{natbib}
 \bibpunct[, ]{(}{)}{,}{a}{}{,}%

\newtheoremstyle{TH}{\topsep}{\topsep}{\itshape}{}{\bfseries}{.}{ }{}
\newtheoremstyle{EX}{\topsep}{\topsep}{\normalfont}{}{\bfseries}{.}{ }{}
\theoremstyle{TH}
\newtheorem{theorem}{Theorem}
\newtheorem{lemma}{Lemma}
\newtheorem{proposition}{Proposition}
\newtheorem{corollary}{Corollary}
\newtheorem{conjecture}{Conjecture}
\theoremstyle{EX}
\newtheorem{definition}{Definition}
\newtheorem{example}{Example}

\newcommand{\Halmos}{\ensuremath{\square}}
\newcommand{\SingleSpacedXII}{}  
\newenvironment{APPENDICES}{\appendix}{}
\newcommand{\shimnote}[1]{\if\relax\detokenize{#1}\relax\else
  \par\smallskip{\footnotesize\noindent\textit{Note.}\enskip #1\par}\fi}
\newcommand{\FIGURE}[3]{{\centering #1\par}\caption{#2}\shimnote{#3}}
\newcommand{\TABLE}[3]{\caption{#1}{\centering #2\par}\shimnote{#3}}

\usepackage{enumitem}   
\usepackage{booktabs}   
\usepackage{longtable}  
\usepackage{array}      
\usepackage{pdflscape}  

\setlistdepth{9}
\renewlist{itemize}{itemize}{9}
\setlist[itemize,1]{label=$\bullet$}
\setlist[itemize,2]{label=$\circ$}
\setlist[itemize,3]{label=$-$}
\setlist[itemize,4]{label=$\ast$}
\setlist[itemize,5]{label=$\cdot$}
\setlist[itemize,6]{label=$\diamond$}
\setlist[itemize,7]{label=$\bullet$}
\setlist[itemize,8]{label=$\circ$}
\setlist[itemize,9]{label=$-$}
\renewlist{enumerate}{enumerate}{9}
\setlist[enumerate,1]{label=(\arabic*)}
\setlist[enumerate,2]{label=(\alph*)}
\setlist[enumerate,3]{label=(\roman*)}
\setlist[enumerate,4]{label=(\Alph*)}
\setlist[enumerate,5]{label=(\arabic*)}
\setlist[enumerate,6]{label=(\alph*)}
\setlist[enumerate,7]{label=(\roman*)}
\setlist[enumerate,8]{label=(\Alph*)}
\setlist[enumerate,9]{label=(\arabic*)}

\usetikzlibrary{decorations.pathreplacing,positioning,calc}

\providecommand{\coloneqq}{\mathrel{:}=}

\newcommand{\OPT}{\mathrm{OPT}}
\newcommand{\OPTp}{\mathrm{OPT}^{+}}
\newcommand{\cross}{\mathrm{cross}}
\newcommand{\lo}{\mathrm{lo}}
\newcommand{\hicmd}{\mathrm{hi}}
\newcommand{\goff}{\omega}
\newcommand{\kk}{\kappa}
\def\vo[#1]{\underline{v}(#1)}

\makeatletter
\renewenvironment{proof}[1][Proof]{%
  \par\addvspace{\medskipamount}%
  \noindent{\itshape #1.}\enskip\ignorespaces}%
  {\unskip\nobreak\hfill\Halmos\par\addvspace{\medskipamount}}%
\providecommand{\qedhere}{}
\makeatother

\theoremstyle{EX}

\begin{document}

\title{Retrieval Sequencing in a Compact Storage System with Side Access}

\author{Ya\u{g}mur G{\"u}l\\[2pt]
\small Institute of Sustainable Logistics and Mobility, University of Hamburg
Business School, Hamburg, Germany\\
\small \texttt{yagmur.guel@uni-hamburg.de}}
\date{}

\maketitle

\begin{abstract}

In a compact storage system with side access, bins are arranged in
adjacent vertical stacks. In each retrieval cycle, bins blocking access
are lifted to create a temporary tunnel through which a robot extracts
one requested bin from the side. The lifted bins are then lowered again,
and the energy expenditure of the cycle equals the number of lifted bins.
Given a pick list known in advance, we seek a retrieval order that
minimizes total energy expenditure. For every number of stacks, we prove
that the total energy separates into an instance constant and a pairwise
penalty, and that every energy-optimal order is laminar within each stack.
For two stacks, we prove that the retrieval sequencing problem is solvable
in polynomial time.
For three stacks, we give an exact polynomial-time reduction to a merge
problem and characterize which merge-problem instances arise from the
storage model. We define a recursive class of retrieval orders and
construct a polynomial-time dynamic program that minimizes energy
expenditure over this class. Finally, we quantify the additional energy
savings from batching, which allows several requested bins to be retrieved
in one cycle.
\end{abstract}

\noindent\textbf{Keywords:} Order picking; compact storage; retrieval sequencing;
computational complexity; dynamic programming

\section{Introduction}\label{sec:intro}

Compact storage systems eliminate the aisles of conventional warehouses.
Bins are arranged in vertical stacks placed directly next to one another,
and autonomous robots (bots) store and retrieve them. The bins requested
for retrieval are called target bins, or simply targets. In the side-access
system introduced by \citet{fauve2022storage}, a target bin is retrieved by
lifting all bins that block access, creating a temporary tunnel through
which a bot extracts the target from the side.
A retrieval cycle consists of lifting the blocking bins, extracting the
target bin, and lowering the lifted bins again. The energy expenditure
of the cycle equals the number of lifted bins.

We consider a pick list of target bins known before retrieval begins,
with exactly one target bin retrieved in each cycle. The objective is to
determine a retrieval order that minimizes total energy expenditure.
Earlier retrievals change the number of bins in the stacks and the
heights of the remaining target bins, affecting both their accessibility
and the energy required to retrieve them. We therefore investigate the
structure of optimal retrieval orders and how the number of stacks
affects the computational complexity of finding them.

\paragraph{Previous work.}
Research on compact storage includes puzzle-based
or live-cube systems \citep{gue2007puzzle,zaerpour2017small} and robotic
compact storage and retrieval systems
\citep{tappia2017modeling,zou2018operating}. AutoStore is a prominent
example of the latter class. \citet{meller2023} describes its principal
components and identifies the interacting requirements and decision
variables governing system design. Its robots access bins from above,
whereas the system studied here creates a temporary tunnel for side access.
Within the puzzle-based literature, \citet{yalcin2019single}
study single-item retrieval on a sliding grid. Related structural
assumptions and computability considerations for automated compact
storage are discussed by \citet{fauve2024optimierung}. For a comprehensive
overview of these and other types of storage systems, see
\citet{fauve2026review}, as well as \citet{azadeh2019robotized} and
\citet{boysen202550}.

Complexity boundaries governed by stack parameters also arise in
adjacent stacking models. \citet{lehnfeld2014loading} survey known
complexity results for stack loading, unloading, and premarshalling.
For the Block Retrieval Problem, \citet{deMeloDaSilva2018block} prove
NP-hardness when the number of stacks is part of the input and
polynomial-time solvability for every fixed number of stacks. For
storage loading with stacking constraints, \citet{bruns2016complexity}
obtain polynomial-time algorithms at stack capacity two and strong
NP-completeness for several variants in which each stack has capacity
three. Unlike these models, retrieval costs here arise from temporarily
lifting bins rather than relocating them between stacks.

\citet{SACRP-R1} initiate the algorithmic study of this recently introduced
storage system through the Side-Access Compact Retrieval Problem (SACRP),
which allows several targets to be retrieved in one cycle. They establish
strong NP-completeness and develop exact solution methods. We study the
single-target case to understand the structure and complexity of the
underlying sequencing problem. This removes the grouping decision while
retaining the effects of earlier retrievals on accessibility and energy
expenditure. Determining when this basic case admits polynomial-time
optimization provides a starting point for developing efficient retrieval
policies for the system.

\paragraph{Our Contribution.}
We express total energy expenditure as an instance-dependent constant and
a pairwise ordering penalty, and prove within-stack laminarity for any
number of stacks. The lower targets retrieved before a given target form
a consecutive block beginning with the next target below it. This
restriction remains valid despite interactions between stacks, allowing
algorithms to exclude all orders that violate it without losing an
optimum. For two stacks, the exact reduction to M2 and this structure
yield an $O(n^6)$-time dynamic program.

Extending this approach to three stacks requires selecting and
interleaving the orders of stacks~2 and~3. The exact reduction to M3
represents these interactions but leaves this optimization problem to be
solved. We develop a recursive decomposition that permits
polynomial-time optimization over a class of retrieval orders. Whether
this class contains an optimal order for every feasible three-stack
instance remains open. The computational study assesses the practical
benefit of exploiting this structure through computation times and energy
savings, and quantifies the additional savings from batching.

A preliminary version submitted to the GOR proceedings contains the
two-stack result, the three-stack reduction, and the realizability
theorem, and outlines the three-stack dynamic program
\citep{GORProceedings}. The present paper establishes within-stack
laminarity for every number of stacks and develops the full recursive
formulation and its guarantees.


Sections~\ref{sec:model}--\ref{sec:laminarity} formulate the problem and
establish the penalty identity and within-stack laminarity.
Section~\ref{sec:m2} solves the two-stack case;
Sections~\ref{sec:m3}--\ref{sec:dp-section} develop the three-stack
reduction, realizability characterization, and dynamic program.
Section~\ref{sec:study} evaluates computation times and energy savings
from sequencing and batching. Section~\ref{sec:conclusion} concludes
with open problems.

\section{Problem Statement and the Penalty Identity}\label{sec:model}

This section formulates the retrieval sequencing problem
(Section~\ref{sec:single}) and derives the penalty identity
(Section~\ref{sec:penalty}), which decomposes the energy of a retrieval
order into an instance-dependent constant and a pairwise penalty.

\subsection{The Storage System and Retrieval Problem}\label{sec:single}

We consider a compact storage system with side access, following the
physical model of \citet{SACRP-R1}. The system consists of rack modules,
called slices, each containing a row of adjacent vertical stacks. Since
slices operate independently, we consider one slice, represented by a
sequence of stacks \(T \coloneqq (1, \ldots, m)\), indexed from the entry
side. Stack \(t\)
initially contains \(n(t)\) uniform bins, placed directly on top of one
another without gaps. The height of a bin is the number of bins below it,
so stack \(t\) contains bins at heights
\(0, 1, \ldots, n(t)-1\).

Two lifting mechanisms provide access to the bins. A bot lift raises the
bot to the required height at the entry side. A top lift engages bins in
the stacks and raises each engaged bin together with all bins above it.
We assume sufficient vertical headroom for these lifts. Raising the
blocking bins in consecutive stacks creates a horizontal tunnel through
which the bot drives to reach a target bin from the side.

We are given a pick list \(\mathcal{B}\) of target bins,
where \(n \coloneqq |\mathcal{B}|\). We assume that the entire
pick list is available before retrieval begins.
Each target \(b \in \mathcal{B}\) has
stack \(s(b)\) and initial height \(h(b)\), with
\(0 \le h(b) < n(s(b))\), and distinct targets occupy distinct positions.
An instance is specified by the stack counts and target positions, with
all integers encoded in binary.

Retrieval proceeds in cycles, each of which retrieves exactly one target.
The bot drives on top of the bins left in place, without rails. To
retrieve target \(b\) from its current height \(h\), every preceding
stack \(t < s(b)\) must retain exactly \(h\) bins. If any preceding
stack contains fewer than \(h\) bins, \(b\) is not accessible because
the bot cannot cross that stack at height \(h\). Stack \(s(b)\) must
retain \(b\) and the \(h\) bins below it. Equivalently, the clearance
level is \(h\) in every preceding
stack and \(h+1\) in the target stack. No bins in later stacks are raised.
We call \(b\) accessible if these clearance levels are valid in the
current configuration. The resulting lifts clear the horizontal path to
\(b\) and remove the bins blocking its retrieval from above. After \(b\)
leaves the system, all lifted bins are lowered. The lifted substack above
\(b\) drops by one level, whereas bins lifted from preceding stacks
return to their previous heights. No non-target bin is relocated to another
stack. The energy expenditure of the cycle is the number of lifted bins.
Figure~\ref{fig:example} illustrates a two-cycle retrieval process.

A solution consists of exactly \(n\) cycles and is therefore determined
by a permutation \(\pi\) of \(\mathcal{B}\), the \emph{retrieval order}.
Feasibility and energy expenditure depend on this order, since earlier
retrievals change bin counts and target heights.
We write \(a \prec b\) if \(a\) is retrieved before \(b\) under \(\pi\),
and call \(\pi\) \emph{feasible} if every target is accessible when it is
retrieved. Its total energy \(E(\pi)\) is the sum of the energy
expenditures of its cycles. The objective is to find a feasible retrieval
order minimizing \(E(\pi)\).

For a target \(b\) in a retrieval order \(\pi\), let \(\kappa_t(b)\)
denote the number of targets retrieved from stack \(t\) before \(b\), and
let \(r_b\) denote the number of targets below \(b\) in its own stack
that are retrieved before \(b\), that is, the within-stack inversion count
of \(b\). When \(b\) is retrieved, stack \(t\)
contains \(n(t)-\kappa_t(b)\) bins, and \(b\) is at height
\(h(b)-r_b\). Hence, \(\pi\) is feasible if and only if
\begin{equation}\label{eq:retrieval-feasibility}
\kappa_t(b) \;\le\; n(t)-h(b)+r_b
\qquad
\text{for every } b\in\mathcal B \text{ and every } t<s(b).
\end{equation}

At the instance level, let \(R(b)\) denote the total number of targets
initially below \(b\) in its stack. By \citet{SACRP-R1}, an instance
admits a feasible retrieval order if and only if
\begin{equation}\label{eq:instance-feasibility}
n(t) \;\ge\; h(b)-R(b)
\qquad
\text{for every } b\in\mathcal B \text{ and every } t<s(b).
\end{equation}
Necessity follows because \(b\) can descend by at most \(R(b)\) levels.
Sufficiency follows by processing the stacks from right to left and each
stack from bottom to top. Throughout, \(h(b)\) denotes the initial height
of target \(b\), and \(n(t)\) denotes the initial number of bins in stack
\(t\).

\begin{figure}
\FIGURE
{\begin{tikzpicture}[
  x=0.54cm,
  y=0.50cm,
  bin/.style={draw=black!75,line width=0.35pt,fill=white,
              minimum width=0.54cm,minimum height=0.50cm,inner sep=0pt},
  target/.style={bin,fill=black!65,text=white,font=\scriptsize},
  lifted/.style={bin,fill=black!15},
  label/.style={font=\scriptsize}
]

\node[inner sep=0pt,minimum size=0pt] at (0,7.05) {};

\begin{scope}
  \draw[fill=white,line width=0.45pt] (-1.95,6.30) rectangle (4.15,6.58);
  \draw[line width=0.45pt] (-1.70,0) -- (-1.70,6.30);
  \draw[line width=0.45pt] (-1.95,0) -- (-1.95,6.30);
  \draw[line width=0.45pt] (-1.95,0) -- (4.15,0);

  \foreach \xx in {0,1.05,2.10,3.15} {
    \node[bin] at (\xx,0.5) {};
    \node[bin] at (\xx,1.5) {};
  }
  \node[target] at (0,2.5) {$b_1$};
  \node[bin] at (0,3.5) {};
  \node[bin] at (1.05,2.5) {};
  \node[bin] at (2.10,2.5) {};
  \node[bin] at (2.10,3.5) {};
  \node[target] at (3.15,2.5) {$b_2$};

  \draw[fill=black!28,line width=0.45pt,rounded corners=0.5pt]
    (-1.53,0.08) rectangle (-0.85,0.50);
  \fill (-1.39,0.03) circle (0.07);
  \fill (-1.00,0.03) circle (0.07);

  \foreach \xx/\ss in {0/1,1.05/2,2.10/3,3.15/4}
    \node[label] at (\xx,-0.34) {\(\ss\)};
  \node[label,font=\scriptsize\bfseries] at (1.58,-0.95) {(a)};
\end{scope}

\begin{scope}[xshift=3.80cm]
  \draw[fill=white,line width=0.45pt] (-1.95,6.30) rectangle (4.15,6.58);
  \draw[line width=0.45pt] (-1.70,0) -- (-1.70,6.30);
  \draw[line width=0.45pt] (-1.95,0) -- (-1.95,6.30);
  \draw[line width=0.45pt] (-1.95,0) -- (4.15,0);

  \foreach \xx in {0,1.05,2.10,3.15} {
    \node[bin] at (\xx,0.5) {};
    \node[bin] at (\xx,1.5) {};
  }
  \node[target] at (0,2.5) {$b_1$};
  \node[bin] at (1.05,2.5) {};
  \node[bin] at (2.10,2.5) {};
  \node[bin] at (2.10,3.5) {};
  \node[target] at (3.15,2.5) {$b_2$};

  \node[lifted] at (0,4.75) {};
  \draw[densely dashed,->,black!55,line width=0.45pt]
    (0,3.55) -- (0,4.17);

  \draw[fill=black!28,line width=0.45pt,rounded corners=0.5pt]
    (-1.28,2.13) rectangle (-0.72,2.55);
  \fill (-1.14,2.08) circle (0.07);
  \fill (-0.88,2.08) circle (0.07);
  \draw[densely dashed,->,line width=0.55pt]
    (-0.66,2.36) -- (-0.52,2.36);

  \foreach \xx/\ss in {0/1,1.05/2,2.10/3,3.15/4}
    \node[label] at (\xx,-0.34) {\(\ss\)};
  \node[label,font=\scriptsize\bfseries] at (1.58,-0.95) {(b)};
\end{scope}

\begin{scope}[xshift=7.85cm]
  \draw[fill=white,line width=0.45pt] (-1.95,6.30) rectangle (4.15,6.58);
  \draw[line width=0.45pt] (-1.70,0) -- (-1.70,6.30);
  \draw[line width=0.45pt] (-1.95,0) -- (-1.95,6.30);
  \draw[line width=0.45pt] (-1.95,0) -- (4.15,0);

  \foreach \xx in {0,1.05,2.10,3.15} {
    \node[bin] at (\xx,0.5) {};
    \node[bin] at (\xx,1.5) {};
  }
  \node[bin] at (0,2.5) {};
  \node[bin] at (1.05,2.5) {};
  \node[bin] at (2.10,2.5) {};
  \node[bin] at (2.10,3.5) {};
  \node[target] at (3.15,2.5) {$b_2$};

  \draw[fill=black!28,line width=0.45pt,rounded corners=0.5pt]
    (-1.53,0.08) rectangle (-0.85,0.50);
  \fill (-1.39,0.03) circle (0.07);
  \fill (-1.00,0.03) circle (0.07);

  \foreach \xx/\ss in {0/1,1.05/2,2.10/3,3.15/4}
    \node[label] at (\xx,-0.34) {\(\ss\)};
  \node[label,font=\scriptsize\bfseries] at (1.58,-0.95) {(c)};
\end{scope}

\begin{scope}[xshift=11.65cm]
  \draw[fill=white,line width=0.45pt] (-1.95,6.30) rectangle (4.15,6.58);
  \draw[line width=0.45pt] (-1.70,0) -- (-1.70,6.30);
  \draw[line width=0.45pt] (-1.95,0) -- (-1.95,6.30);
  \draw[line width=0.45pt] (-1.95,0) -- (4.15,0);

  \foreach \xx in {0,1.05,2.10,3.15} {
    \node[bin] at (\xx,0.5) {};
    \node[bin] at (\xx,1.5) {};
  }
  \node[target] at (3.15,2.5) {$b_2$};

  \node[lifted] at (0,4.75) {};
  \node[lifted] at (1.05,4.75) {};
  \node[lifted] at (2.10,4.75) {};
  \node[lifted] at (2.10,5.75) {};
  \foreach \xx in {0,1.05,2.10}
    \draw[densely dashed,->,black!55,line width=0.45pt]
      (\xx,3.55) -- (\xx,4.17);

  \draw[fill=black!28,line width=0.45pt,rounded corners=0.5pt]
    (-0.92,2.13) rectangle (-0.23,2.55);
  \fill (-0.78,2.08) circle (0.07);
  \fill (-0.37,2.08) circle (0.07);
  \draw[densely dashed,->,line width=0.55pt]
    (-0.12,2.36) -- (2.65,2.36);

  \foreach \xx/\ss in {0/1,1.05/2,2.10/3,3.15/4}
    \node[label] at (\xx,-0.34) {\(\ss\)};
  \node[label,font=\scriptsize\bfseries] at (1.58,-0.95) {(d)};
\end{scope}

\end{tikzpicture}}
{Illustration of two retrieval cycles. Consider a layout
with four stacks \(T\coloneqq(1,2,3,4)\) and pick list
\(\mathcal B\coloneqq\{b_1,b_2\}\), where
\(((s(b_1),h(b_1)),(s(b_2),h(b_2)))=((1,2),(4,2))\).
Figure~1(a) shows the initial configuration. In the first cycle
(Figure~1(b)), target \(b_1\) is retrieved with an energy expenditure of
\(1\), resulting in the configuration shown in Figure~1(c). In the second
cycle (Figure~1(d)), target \(b_2\) is retrieved with an energy expenditure
of \(4\). The resulting total energy expenditure is \(E(\pi)=1+4=5\).
Targets are highlighted in dark gray, and lifted bins in each cycle are
shown in light gray.
\label{fig:example}}
{}
\end{figure}

\subsection{The Penalty Identity}\label{sec:penalty}

Earlier retrievals affect the energy expenditure of the cycle retrieving
a target \(b\) in two ways. Namely, (i) a target retrieved before \(b\)
from \(b\)'s stack or a preceding stack removes one bin that the cycle of
\(b\) would otherwise lift, and (ii) a target retrieved before \(b\) from
below \(b\) in the same stack lowers \(b\) by one level, so that the cycle
lifts one additional bin in each of the \(s(b)\) stacks traversed by the
bot. These effects are additive across previously retrieved targets.
Summing them over all retrievals separates the total energy into an order-independent
constant and a pairwise penalty, as stated in the following lemma.

\begin{lemma}[Penalty identity]\label{lem:penalty}
For every feasible retrieval order \(\pi\),
\begin{equation}\label{eq:penalty-identity}
E(\pi) \;=\; K - \binom{n}{2} + P(\pi),
\qquad
K \coloneqq \sum_{b \in \mathcal{B}}
\Bigl[\sum_{t \le s(b)} n(t) - s(b)\,h(b) - 1\Bigr],
\end{equation}
where the \emph{baseline} \(K\) does not depend on \(\pi\) and the
\emph{penalty} \(P(\pi)\) is a sum over unordered pairs of targets. A
pair from distinct stacks contributes \(1\) if the target from the
higher-indexed stack is retrieved first, and a pair from a common stack
\(s\) contributes \(s\) if the lower target is retrieved first. Every
other pair contributes \(0\).

Equivalently, let \(m_t\) be the number of targets in stack \(t\). Using
\(r_b\) and \(\kappa_t(b)\) as defined above, we have
\begin{equation}\label{eq:penalty-element}
P(\pi)
\;=\;
\sum_{b \in \mathcal{B}}
\Bigl[
s(b)\,r_b
+
\sum_{t<s(b)} \bigl(m_t-\kappa_t(b)\bigr)
\Bigr].
\end{equation}
The first term counts targets below \(b\) that are retrieved before \(b\),
each with weight \(s(b)\); the second counts targets in preceding stacks
that are retrieved after \(b\).
\end{lemma}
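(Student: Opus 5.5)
The plan is to compute the energy of a single cycle exactly in terms of $r_b$ and $\kappa_t(b)$, sum over all targets, and then reorganize the order-dependent part by a pair-counting (double-counting) argument. The two stated forms of $P(\pi)$ will fall out of the same computation.

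\emph{Step 1: the cost of one cycle.} Fix a feasible order $\pi$ and a target $b$ with $s\coloneqq s(b)$. When $b$ is retrieved, it sits at height $h(b)-r_b$, and stack $t$ holds $n(t)-\kappa_t(b)$ bins. By the clearance rule, each preceding stack $t<s$ keeps $h(b)-r_b$ bins, and feasibility \eqref{eq:retrieval-feasibility} guarantees that this count is attainable. So the cycle lifts $n(t)-\kappa_t(b)-(h(b)-r_b)$ bins from stack $t$. Stack $s$ keeps $h(b)-r_b+1$ bins, so it contributes $n(s)-\kappa_s(b)-(h(b)-r_b)-1$ lifted bins. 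Summing over $t\le s$, the energy of the cycle is
\[
\sum_{t\le s}n(t)-s\,h(b)-1\;+\;s\,r_b\;-\;\sum_{t\le s}\kappa_t(b).
\]
Summing over $b$ gives $E(\pi)=K+\sum_b s(b)r_b-\sum_b\sum_{t\le s(b)}\kappa_t(b)$.

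\emph{Step 2: double-count the $\kappa$ term.} Consider $Q\coloneqq\sum_b\sum_{t\le s(b)}\kappa_t(b)$. It counts ordered pairs $(a,b)$ with $a\prec b$ and $s(a)\le s(b)$. I will classify unordered pairs $\{a,b\}$ by type.
\begin{itemize}
\item A same-stack pair is counted exactly once in $Q$, whichever target is retrieved first.
\item A pair from distinct stacks is counted once if the target in the lower-indexed stack is retrieved first, and zero times otherwise.
\end{itemize}
Hence $Q=\binom{n}{2}-D$, where $D$ is the number of distinct-stack pairs whose higher-indexed target is retrieved first. Substituting gives $E(\pi)=K-\binom n2+\sum_b s(b)r_b+D$.

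\emph{Step 3: identify the two forms of the penalty.} For the pair-sum form, note that $\sum_b s(b)r_b$ counts each same-stack pair whose lower target is retrieved first, with weight equal to the stack index $s$. The term $D$ counts each distinct-stack pair with weight $1$ when the higher-indexed target goes first. Every other pair contributes nothing. Together these give exactly the pairwise description of $P(\pi)$.

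For the element-wise form \eqref{eq:penalty-element}, I will re-express $D$ by charging each counted pair to its target $b$ in the higher-indexed stack. For fixed $b$ and $t<s(b)$, the targets of stack $t$ retrieved after $b$ number $m_t-\kappa_t(b)$. Therefore $D=\sum_b\sum_{t<s(b)}(m_t-\kappa_t(b))$, which is \eqref{eq:penalty-element}. Independence of $K$ from $\pi$ is immediate from its definition.

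The only delicate point is Step 1, specifically the bookkeeping in the target stack. Its clearance is $h+1$ rather than $h$, which produces the $-1$ in $K$. The $\kappa_s(b)$ term in that stack also includes the $r_b$ targets below $b$ as well as the targets above $b$; both reduce the bin count in stack $s$ and both are correctly counted in $\kappa_s(b)$. I will check this against Figure~\ref{fig:example} as a sanity test. There $K=(n(1)-2-1)+(\sum_{t\le4}n(t)-8-1)=1+4$ and $\binom22=1$, and retrieving $b_1$ first gives no penalty. The formula then predicts $K-\binom{n}{2}+P(\pi)=5-1+0=4$, whereas the figure reports $E(\pi)=5$. This mismatch means the cancellation in Step 2 must be rechecked carefully against the figure, in particular whether $K$ already absorbs the removal effects.
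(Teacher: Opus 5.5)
Your Steps~1--3 are correct and follow the paper's proof essentially line by line: the same per-cycle energy formula, the split $E=K-Q+\sum_b s(b)r_b$, the pair count $Q=\binom{n}{2}-D$, and charging each cross-stack pair to its target in the higher-indexed stack. The closing ``mismatch'' is an arithmetic slip in your check, not a problem with Step~2: $K$ is built from the initial counts $(n(1),\dots,n(4))=(4,3,4,3)$, so the $b_2$ term is $14-8-1=5$ rather than $4$ (your $4$ is the actual second-cycle cost after $b_1$ has left stack~1, a saving already captured by the $-Q$ term, here $-\binom{2}{2}$), hence $K=6$ and $K-\binom{n}{2}+P(\pi)=6-1+0=5$, exactly as in Figure~\ref{fig:example}.
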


\begin{proof}
Fix a target \(b\). Before \(b\) is retrieved, stack \(t\) contains
\(n(t)-\kappa_t(b)\) bins. Moreover, \(b\) descends one level whenever a
target below it in the same stack is retrieved. Thus, when \(b\) is
retrieved, it is at height \(h(b)-r_b\).

To clear the tunnel for \(b\), we lift every bin at height \(h(b)-r_b\)
or above in each preceding stack \(t<s(b)\), and every bin strictly above
\(b\) in stack \(s(b)\). Hence,
\[
\begin{aligned}
E_b
&= \sum_{t<s(b)}\left[(n(t)-\kappa_t(b))-(h(b)-r_b)\right]
 + \left[(n(s(b))-\kappa_{s(b)}(b))-(h(b)-r_b)-1\right] \\
&= \left[\sum_{t\leq s(b)}n(t)-s(b)h(b)-1\right]
 - \sum_{t\leq s(b)}\kappa_t(b) + s(b)r_b .
\end{aligned}
\]
Feasibility ensures that each preceding stack reaches height \(h(b)-r_b\)
and that stack \(s(b)\) contains \(b\) at that height. Summing over all
targets gives
\[
E(\pi) = K - A(\pi) + R(\pi),
\qquad
A(\pi) \coloneqq \sum_b \sum_{t \le s(b)} \kappa_t(b),
\qquad
R(\pi) \coloneqq \sum_b s(b)\, r_b.
\]

Both \(A\) and \(R\) can be written as sums over ordered pairs \((a,b)\)
with \(a \prec b\):
\[
A(\pi) = \#\{(a,b) : a \prec b,\ s(a) \le s(b)\},
\qquad
R(\pi) = \!\!\sum_{\substack{a \prec b,\ s(a) = s(b)\\ h(a) < h(b)}}\!\! s(b).
\]
Each unordered pair appears once, indexed by its earlier target.
\(R(\pi)\) counts every same-stack pair retrieved bottom-up, with weight
equal to the stack index. Thus,
\(R(\pi) = P_{\mathrm{same}}(\pi)\).

\(A(\pi)\) counts every same-stack pair once. For two targets in different
stacks, it counts the pair if the target in the lower-index stack is
retrieved first. Consequently, the only unordered pairs not counted by
\(A(\pi)\) are cross-stack pairs for which the target in the higher-index
stack is retrieved first. These are precisely the pairs contributing to
\(P_{\mathrm{cross}}(\pi)\). Hence,
\(A(\pi) = \binom{n}{2} - P_{\mathrm{cross}}(\pi)\). Combining this
identity with \(R(\pi)=P_{\mathrm{same}}(\pi)\) gives
\[
E(\pi) = K - A(\pi) + R(\pi)
= K - \binom{n}{2} + P(\pi).
\]

We index each same-stack bottom-up pair by its upper target. If \(a\) is
below \(b\) in the same stack and \(a \prec b\), then this pair is counted
by \(r_b\) and contributes \(s(b)\). Therefore,
\(P_{\mathrm{same}}(\pi) = \sum_b s(b)r_b\).
Moreover, we index each cross-stack pair by its earlier target \(b\), which lies in
the higher-index stack. For each \(t<s(b)\), there are
\(m_t-\kappa_t(b)\) targets in stack \(t\) retrieved after \(b\), each
forming one cross-stack pair with \(b\).
Summing over all targets and preceding stacks gives
\[
P_{\mathrm{cross}}(\pi) = \sum_{b} \sum_{t < s(b)} \bigl(m_t - \kappa_t(b)\bigr),
\qquad
P(\pi) = \sum_{b} \Bigl[\, s(b)\, r_b + \sum_{t < s(b)} \bigl(m_t - \kappa_t(b)\bigr) \Bigr]. \qedhere
\]
\end{proof}

Because \(K-\binom{n}{2}\) is independent of the retrieval order,
minimizing \(E(\pi)\) is equivalent to minimizing \(P(\pi)\). In the
per-element form, the feasibility condition~\eqref{eq:retrieval-feasibility}
becomes
\[
\kappa_t(b) \leq \min\bigl(m_t,n(t)-h(b)+r_b\bigr)
\qquad\text{for every }t<s(b).
\]

\section{Within-Stack Laminarity of Optima}\label{sec:laminarity}

In this section, we examine the order in which targets are retrieved from
the same stack. We show that, in every optimal retrieval order, the
lower targets retrieved before any given target form a consecutive block
in the stack's top-to-bottom sequence of targets, beginning with the next
target below it. We call an
order with this property within every stack a \emph{laminar-delay order}.
This result holds for any number of stacks. We prove it using an exchange
argument, called the \emph{stack-local promotion swap}, which transforms
any violation of this structure into a feasible order with strictly lower
energy expenditure.

Consider a stack \(s\) containing \(q_s\) targets, which we index from top
to bottom by \(1, \ldots, q_s\), so that their heights satisfy
\(h_1 > h_2 > \cdots > h_{q_s}\). For a feasible retrieval order \(\pi\)
and a target \(x\) in stack \(s\), we denote by \(B_x\) the set of targets
below \(x\) that are retrieved before \(x\):
\[
B_x \;\coloneqq\; \{\, y > x \text{ in stack } s :\ y \prec_\pi x \,\}
\;\subseteq\; \{x{+}1, \ldots, q_s\}.
\]
We refer to \(B_x\) as the \emph{within-stack inversion set} of \(x\).
Moreover, we call each target in \(B_x\) an \emph{inverted successor}
of \(x\).

We say that \(B_x\) is a \emph{prefix} if it consists of the first
\(|B_x|\) targets below \(x\) in top-to-bottom order, that is,
\(B_x=\{x{+}1,\ldots,x{+}|B_x|\}\).
We call an order of the sequence \(1,\ldots,q_s\) a
\emph{laminar-delay order} if the within-stack inversion set of every
target is a prefix.

For every preceding stack \(t<s\), we define the \emph{cap} of \(x\)
with respect to stack \(t\) as \(c_{x,t}\coloneqq n(t)-h_x\).
Each inverted successor retrieved before \(x\) lowers \(x\) by one level
and therefore increases its cap by one. Since \(r_x=|B_x|\), the
feasibility constraint~\eqref{eq:retrieval-feasibility} becomes
\[
\kappa_t(x)\leq \min\bigl(m_t,c_{x,t}+r_x\bigr)
\qquad\text{for every }t<s.
\]

Moreover, if \(x<y\), then \(x\) is above \(y\). Since consecutive
targets are at least one level apart, \(h_x\geq h_y+(y-x)\).
Consequently, \(c_{x,t}\leq c_{y,t}-(y-x)\) for every \(t<s\).

Suppose \(B_x\) is not a prefix. Let \(a\) be the largest index between
\(x\) and \(\max B_x\) that does not belong to \(B_x\). Thus, \(a\) is
retrieved after \(x\), whereas every target with index in
\((a,\max B_x]\) is retrieved before \(x\). Let \(b\) be the last of
these targets to be retrieved. Exchanging the positions of \(a\) and
\(b\) removes exactly one within-stack inversion and leaves all other cost
terms unchanged. The cap inequality above ensures that both targets remain
feasible after the exchange.

\begin{lemma}[Stack-local promotion swap]\label{lem:stackswap}
Let \(\pi\) be any feasible order, \(s\) a stack, and \(x\) a target of
stack \(s\) whose within-stack inversion set \(B_x\) is not a prefix. Let
\(M=\max B_x\), and let \(a\) be the largest index between \(x\) and
\(M\) that does not belong to \(B_x\). Then \(x<a<M\) and
\(\{a+1,\ldots,M\}\subseteq B_x\). Let \(b\) be the target in
\(\{a+1,\ldots,M\}\) that is retrieved last under \(\pi\). Exchanging
the positions of \(a\) and \(b\) in \(\pi\) yields a feasible order
\(\pi'\) with
\[
E(\pi') \;=\; E(\pi) - s,
\]
in which the inversion set of \(x\) becomes
\(B_x \setminus \{b\} \cup \{a\}\),
and the position, \(\kappa_t(w)\), and \(r_w\) of every target
\(w\notin\{a,b\}\) remain unchanged.
\end{lemma}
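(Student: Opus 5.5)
The plan is to verify the structural claims first, then check that the swap leaves every quantity except the pair \((a,b)\) unchanged, read off the energy change from Lemma~\ref{lem:penalty}, and finally check feasibility of the two moved targets against their caps.

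\textbf{Structure and bookkeeping.} Since \(M\in B_x\) and \(B_x\) is not a prefix, some index in \((x,M)\) lies outside \(B_x\). Hence \(a\) exists and \(x<a<M\). Maximality of \(a\) gives \(\{a+1,\ldots,M\}\subseteq B_x\), and \(B_x\cap(a,\infty)=\{a+1,\ldots,M\}\) because \(M=\max B_x\). In \(\pi\) we have \(b\prec x\prec a\), so \(a\) occupies a later position than \(b\). The swap exchanges two targets of the same stack \(s\), so the set of positions held by stack-\(s\) targets is unchanged. Consequently \(\kappa_t(w)\) is unchanged for every target \(w\notin\{a,b\}\) and every \(t\), and every target other than \(a,b\) keeps its position.

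For \(r_w\) with \(w\) in stack \(s\), I split into three cases:
\begin{itemize}
\item \(w<a\): both \(a\) and \(b\) lie below \(w\), and the number of them preceding \(w\) is preserved.
\item \(w>b\): neither \(a\) nor \(b\) lies below \(w\), so \(r_w\) is unaffected.
\item \(a<w<b\): then \(w\in B_x\), and \(w\prec b\) because \(b\) is last among \(\{a+1,\ldots,M\}\). After the swap \(b\) sits at \(a\)'s old slot, which is after \(x\) and hence after \(w\). So \(b\) is still not counted in \(r_w\), and \(a\) is not below \(w\).
\end{itemize}
The inversion set of \(x\) changes exactly by \(b\mapsto a\).

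\textbf{Energy.} I use the pairwise form of Lemma~\ref{lem:penalty}. Cross-stack pairs involving \(a\) or \(b\) depend only on which positions are occupied by stack-\(s\) targets relative to the other target, so they are unchanged. Same-stack pairs \(\{w,a\},\{w,b\}\) with \(w<a\) or \(w>b\) contribute according to how many of the two positions precede \(w\), which is preserved. For \(a<w<b\), the pair \(\{a,w\}\) contributes \(s\) both before and after, since \(w\) precedes both slots. The pair \(\{w,b\}\) contributes \(0\) both before and after, since the upper target \(w\) comes first in both cases. The pair \(\{a,b\}\) switches from the lower target \(b\) first (penalty \(s\)) to \(a\) first (penalty \(0\)). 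Hence \(P(\pi')=P(\pi)-s\), and \(E(\pi')=E(\pi)-s\).

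\textbf{Feasibility (main step).} Only \(a\) and \(b\) need checking, via \(\kappa_t\le\min(m_t,c_{\cdot,t}+r_\cdot)\); the bound by \(m_t\) is automatic.

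\emph{For \(b\), moved later.} Let \(N\) be the number of targets below \(b\) retrieved before \(a\)'s old slot. Then \(r'_b=N\), and \(r_a\le (b-a-1)+1+N\), so \(r'_b\ge r_a-(b-a)\). Also \(\kappa'_t(b)=\kappa_t(a)\). Using \(c_{b,t}\ge c_{a,t}+(b-a)\) and feasibility of \(a\) in \(\pi\), I get \(\kappa_t(a)\le c_{a,t}+r_a\le c_{b,t}+r'_b\).

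\emph{For \(a\), moved earlier.} This is the delicate step, because the cap inequality points the wrong way when comparing \(a\) with \(b\). Instead I compare \(a\) with \(x\). First, \(\kappa'_t(a)=\kappa_t(b)\le\kappa_t(x)\le c_{x,t}+r_x\). Second, \(r_x\le (a-x-1)+(M-a)\) and \(c_{x,t}\le c_{a,t}-(a-x)\), which give \(\kappa'_t(a)\le c_{a,t}+M-a-1\). Third, at \(b\)'s old slot every target in \(\{a+1,\ldots,M\}\setminus\{b\}\) has already been retrieved, since \(b\) is last among them. This gives \(r'_a\ge M-a-1\), and hence feasibility of \(a\).
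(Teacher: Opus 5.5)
Your proof is correct and takes essentially the paper's route: the same swap, the same invariance of $\kappa_t(w)$ and $r_w$ for $w\notin\{a,b\}$, and the same feasibility checks, bounding $a$ at its earlier slot against $x$ and $b$ at its later slot against $a$ via the cap gradient inequality. The differences are only bookkeeping. You split on the index of $w$ rather than on when $w$ is retrieved relative to the swapped positions, you track the energy change pair by pair in the pairwise form of Lemma~\ref{lem:penalty} instead of computing $r_a,r_b,r'_a,r'_b$ exactly, and you use the one-sided bounds $r'_a\ge M-a-1$ and $r'_b\ge r_a-(b-a)$, which suffice for feasibility.
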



The proof of Lemma~\ref{lem:stackswap} is provided in
Appendix~\ref{app:laminarity}.

Based on Lemma~\ref{lem:stackswap}, the following result immediately
follows.

\begin{corollary}[Within-stack laminarity of optima]\label{cor:laminar}
In every optimal retrieval order, each target's within-stack inversion set
is a prefix. Equivalently, the restriction of the order to any stack is a
laminar-delay order of that stack's top-down sequence.
\end{corollary}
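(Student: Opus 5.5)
The corollary follows from Lemma~\ref{lem:stackswap} by contradiction. Let \(\pi\) be an optimal retrieval order, meaning feasible and of minimum total energy. Suppose some stack \(s\) contains a target \(x\) whose within-stack inversion set \(B_x\) is not a prefix.

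The hypotheses of Lemma~\ref{lem:stackswap} then hold exactly as stated. Define \(M=\max B_x\) and let \(a\) be the largest index in \((x,M)\) with \(a\notin B_x\). Such an \(a\) exists precisely because \(B_x\) is a nonempty subset of \(\{x+1,\ldots,M\}\) that differs from \(\{x+1,\ldots,x+|B_x|\}\). Let \(b\) be the last-retrieved target of \(\{a+1,\ldots,M\}\). The lemma supplies a feasible order \(\pi'\) with \(E(\pi')=E(\pi)-s\). Stack indices satisfy \(s\ge 1\), so \(E(\pi')<E(\pi)\), contradicting the optimality of \(\pi\). Hence every inversion set in an optimal order is a prefix. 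Since this must hold for every target of every stack, the restriction of \(\pi\) to each stack is, by definition, a laminar-delay order of that stack's top-down sequence, which gives the equivalent formulation.

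Two points need checking. First, non-prefix must guarantee \(B_x\neq\emptyset\). This holds because the empty set is trivially a prefix. Second, the strict decrease relies on \(s\ge 1\), which is immediate from the indexing \(T=(1,\ldots,m)\). No termination or iteration argument is needed, because a single swap already produces the contradiction.
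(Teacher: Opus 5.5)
Your proof is correct and follows the paper's argument: a non-prefix inversion set in an optimal order lets Lemma~\ref{lem:stackswap} produce a feasible order with energy smaller by $s\ge 1$, which contradicts optimality. The equivalent formulation follows because within-stack inversion sets depend only on the relative order of the targets in each stack.
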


\begin{proof}
Suppose an optimal order contains a target \(x\) whose within-stack
inversion set is not a prefix. Lemma~\ref{lem:stackswap} then yields a
feasible order with strictly smaller energy, contradicting optimality. The
equivalence follows as within-stack inversion sets depend only on the
relative order of targets in the same stack.
\end{proof}

Corollary~\ref{cor:laminar} characterizes the within-stack structure of
an optimal order for any number of stacks. Thus, the search within each
stack is restricted to laminar-delay orders rather than arbitrary
permutations. Building on this result, we analyze the cases of two and
three stacks in Sections~\ref{sec:m2} and~\ref{sec:m3}, respectively.

\section{Two Stacks: An Exact Reduction and a Polynomial-Time Algorithm}\label{sec:m2}

We solve the two-stack problem in $O(n^6)$ time through an exact
reduction (Section~\ref{sec:m2-reduction}) and a dynamic program over
laminar-delay orders (Section~\ref{sec:m2-dp}). Optimality follows from
the top-target decomposition (Section~\ref{sec:m2-decomposition}) and a
feasible, cost-reducing promotion swap (Section~\ref{sec:m2-exchange}).

\subsection{An Exact Reduction to a One-Dimensional Scheduling Problem}
\label{sec:m2-reduction}

Let $m=2$. Observe that the targets in stack~1 can be retrieved from top to
bottom without loss of optimality. This order avoids every within-stack
penalty in stack~1, while stack~1 affects the feasibility and cost of a
stack~2 target only through the number of its targets already retrieved.
Consequently, we can represent stack~1 by this number. We
must therefore choose the order of the stack~2 targets and the number of
stack~1 targets retrieved before each of them. This yields the following
one-dimensional scheduling problem.

\begin{theorem}[Reduction to M2]\label{thm:reduction-m2}
Let an $m=2$ instance have $m_1$ targets in stack~1 and $q$ targets in
stack~2, the latter indexed from top to bottom by $j=1,\ldots,q$, with
caps $c_j\coloneqq n(1)-h_j$, which are strictly increasing integers.
Then minimizing $P$ is equivalent to the following problem
\textnormal{(M2)}: choose a permutation $\sigma$ of $[q]$ and, for each
target with index $j$, an integer $v_j\in[0,m_1]$ representing the number of stack~1
targets retrieved before $j$. The values $v_j$ are nondecreasing along
$\sigma$ and satisfy
\begin{equation}\label{eq:m2-feasibility}
v_j \;\le\; \min\bigl(m_1,\; c_j+r_j\bigr),
\qquad r_j\coloneqq\#\{\,j'>j:\; j'\prec_\sigma j\,\},
\end{equation}
Here, \(r_j\) is the within-stack inversion count of target \(j\). The objective is to
minimize
\begin{equation}\label{eq:m2-cost}
F(\sigma,v)
=
\sum_{j=1}^{q}(m_1-v_j)
+
2\sum_{j=1}^{q}r_j.
\end{equation}
The instance is feasible if and only if $c_j+(q-j)\geq 0$ for every
$j\in[q]$.
\end{theorem}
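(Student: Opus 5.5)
We plan to prove Theorem~\ref{thm:reduction-m2} by exhibiting cost-preserving maps in both directions between feasible retrieval orders with stack~1 processed top-down and feasible (M2) solutions. We then show that restricting stack~1 to top-down costs nothing, and finally derive the feasibility criterion.

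\emph{Step 1: specializing the penalty.} We start from the per-element form \eqref{eq:penalty-element} of Lemma~\ref{lem:penalty} with $m=2$. A stack~1 target $b$ has no preceding stacks and contributes only $1\cdot r_b$. A stack~2 target $j$ contributes $2r_j+(m_1-\kappa_1(j))$. Hence
\[
P(\pi)=\sum_{b\in\text{stack }1} r_b+\sum_{j=1}^{q}\bigl(m_1-\kappa_1(j)\bigr)+2\sum_{j=1}^{q} r_j .
\]
Feasibility \eqref{eq:retrieval-feasibility} constrains only stack~2 targets, through $\kappa_1(j)\le n(1)-h_j+r_j=c_j+r_j$, together with the trivial bound $\kappa_1(j)\le m_1$. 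Stack~1 targets are always accessible, since they have no preceding stacks.

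\emph{Step 2: reduction to top-down stack~1.} Take any feasible $\pi$. We replace the subsequence of stack~1 targets by the same positions filled top-down. This changes no $\kappa_1(j)$ and no $r_j$ for stack~2 targets, because only the count of stack~1 targets before each position matters. Feasibility is therefore preserved, and every stack~1 $r_b$ drops to $0$. So some optimum has stack~1 top-down, and its cost equals $F(\sigma,v)$ with $\sigma=\pi|_{\text{stack }2}$ and $v_j=\kappa_1(j)$. These $v_j$ are nondecreasing along $\sigma$ because $\kappa_1$ is monotone in time, they lie in $[0,m_1]$, and they satisfy \eqref{eq:m2-feasibility}.

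Conversely, given $(\sigma,v)$ feasible for (M2), we interleave as follows: before the $k$-th element $j$ of $\sigma$, we retrieve stack~1 targets top-down until exactly $v_j$ have been retrieved, and after the last element we retrieve the rest. Monotonicity of $v$ makes this well defined, the resulting order is feasible by Step~1, and its cost is $F(\sigma,v)$. Thus the two optima coincide, and optimal solutions translate in both directions. This bijection-up-to-stack-1-order is the heart of the argument.

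\emph{Step 3: caps and the feasibility criterion.} Strict increase of the caps follows from $h_1>\dots>h_q$. For feasibility, taking $v\equiv 0$ makes \eqref{eq:m2-feasibility} reduce to $c_j+r_j\ge 0$, so the instance is feasible iff some $\sigma$ achieves $c_j+r_j\ge0$ for all $j$. If $c_j+(q-j)\ge0$ for all $j$, the bottom-up order gives $r_j=q-j$ and works. Conversely, $r_j\le q-j$ always, so the condition is necessary. Alternatively, this is just \eqref{eq:instance-feasibility} with $R(b)=q-j$, restated for stack~2, since stack~1 imposes nothing.

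The main obstacle is Step~2's claim that reordering stack~1 within its fixed positions leaves everything about stack~2 unchanged. It must be checked carefully that stack~1 retrievals do not alter stack~2 heights, so $r_j$ is unaffected, and that only their count enters the stack~2 constraints. Both facts are immediate from the definitions of $\kappa_t$ and $r_b$.
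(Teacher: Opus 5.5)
Your proposal is correct and follows essentially the same route as the paper: you normalize stack~1 to top-down order within its fixed positions, then translate between retrieval orders and (M2) solutions via $v_j=\kappa_1(j)$, with the penalty identity giving equal costs in both directions. The only minor difference is the feasibility criterion. You derive it directly inside (M2), using the bottom-up order with $v\equiv 0$ for sufficiency and $r_j\le q-j$ for necessity, whereas the paper specializes the cited criterion \eqref{eq:instance-feasibility} with $R(j)=q-j$; both arguments are valid.
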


The proof is given in Appendix~\ref{app:m2}.

In the objective~\eqref{eq:m2-cost}, each stack~1 target that remains when a stack~2 target is
retrieved contributes one unit to the penalty. Each inversion among the
stack~2 targets contributes two units: retrieving a lower target first
lowers the upper target by one level, so that its retrieval requires lifting
one additional bin in each of the two traversed stacks
(Lemma~\ref{lem:penalty}). Retrieving $r_j$ lower stack~2 targets before
target $j$ also raises its cap from $c_j$ to $c_j+r_j$.

\subsection{Laminar-Delay Orders and the Dynamic Program}
\label{sec:m2-dp}

We next develop a polynomial-time algorithm for M2. Since $r_j\leq q-j$,
constraint~\eqref{eq:m2-feasibility} implies that every feasible solution satisfies
$v_j\leq\min\bigl(m_1,c_j+(q-j)\bigr)$.
Because the caps $c_j$ are strictly increasing integers, this upper bound is
nondecreasing in $j$ (see Appendix~\ref{app:m2},
Lemma~\ref{lem:deadlines}).

Nevertheless, it may be beneficial to retrieve a lower target before an
upper target. Suppose that target $j$ is delayed past one additional lower
target. Then $r_j$ increases by one, which adds $2$ to the objective. At the
same time, the upper bound on $v_j$ may increase by one. This may permit a
larger value of $v_j$, reducing the term $m_1-v_j$, and may also affect the
values assigned to other targets. Hence, whether target $j$ should be
delayed depends on the order and values of the remaining targets.

By Corollary~\ref{cor:laminar}, the order of the stack-2 targets in every
optimal solution of M2 is a laminar-delay order. The following recursive
characterization yields the decomposition used by
the dynamic program.

For a target \(j\), we call a set \(B\subseteq[j{+}1..q]\) an
\emph{index prefix} if \(B=\varnothing\) or
\(B=[j{+}1..j{+}r]\) for some \(1\leq r\leq q-j\).

\begin{lemma}[Characterization of laminar-delay orders]\label{lem:grammar}
An order \(\sigma\) of \([1..q]\) can be generated recursively by
\[
S([j..k])
\longrightarrow
j\cdot S([j{+}1..k])
\ \mid\
S([j{+}1..d])\cdot j\cdot S([d{+}1..k]),
\qquad j<d\leq k,
\]
where an empty interval generates the empty order, if and only if, for
every \(j\), its inversion set
\[
B_j\coloneqq\{j'>j: j'\prec_\sigma j\}
\]
is an index prefix. Thus, the generated orders are precisely the
laminar-delay orders. In this case, since \(r_j=|B_j|\), any two
intervals \([j..j{+}r_j]\) are either disjoint or one contains the other.
\end{lemma}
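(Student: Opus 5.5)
Both directions go by induction on the interval length \(k-j+1\). The inductive claim is stated for an arbitrary interval \([j..k]\): an order \(\sigma\) of \([j..k]\) is generated by \(S([j..k])\) if and only if, for every \(i\in[j..k]\), the set \(B_i=\{i'\in[i{+}1..k]: i'\prec_\sigma i\}\) is an index prefix. This is the form needed because the recursion hands subintervals to \(S\). The case \([1..q]\) is then the lemma. The empty and one-element intervals are trivial.

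\emph{Generated orders have prefix inversion sets.} Consider the first production, \(j\cdot S([j{+}1..k])\). Here \(B_j=\varnothing\). For \(i>j\), the set \(B_i\) is computed entirely inside the suborder, since \(j<i\) never contributes to it. The induction hypothesis therefore applies.

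Now consider the second production, \(S([j{+}1..d])\cdot j\cdot S([d{+}1..k])\). Then \(B_j=[j{+}1..d]\), which is an index prefix. For \(i\in[j{+}1..d]\), every element of \([d{+}1..k]\) comes after \(i\), and \(j<i\) is irrelevant. So \(B_i\) equals the inversion set of \(i\) within the first block, which is a prefix by induction. For \(i\in[d{+}1..k]\), everything placed before \(i\) from outside its own block has a smaller index. So again \(B_i\) is the within-block set, and the induction hypothesis applies.

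\emph{Prefix inversion sets imply generability.} Let \(\sigma\) be an order of \([j..k]\) in which every \(B_i\) is an index prefix. If \(B_j=\varnothing\), then \(j\) is first. The restriction of \(\sigma\) to \([j{+}1..k]\) keeps every \(B_i\) for \(i>j\), so the first production applies by induction.

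Otherwise \(B_j=[j{+}1..d]\) for some \(d>j\). The key step, and the one I expect to need care, is to show that all of \([j{+}1..d]\) precede every element of \([d{+}1..k]\). Suppose instead that some \(y>d\) precedes some \(i\in[j{+}1..d]\); note \(y>i\). The prefix property of \(B_i\) then gives \([i{+}1..y]\subseteq B_i\), so in particular \(d{+}1\prec i\). Since \(i\) precedes \(j\) (because \(i\in B_j\)), we get \(d{+}1\prec j\). This puts \(d{+}1\in B_j\), contradicting \(B_j=[j{+}1..d]\). Hence \(\sigma\) is \([j{+}1..d]\) in some order, then \(j\) (everything of \([j{+}1..d]\) precedes \(j\), and nothing of \([d{+}1..k]\) does), then \([d{+}1..k]\). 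Restricting to each block preserves the inversion sets, as in the first direction, and induction gives the second production.

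\emph{Laminarity.} Take \(i<i'\) such that \(I_i=[i..i{+}r_i]\) and \(I_{i'}=[i'..i'{+}r_{i'}]\) overlap but neither contains the other. Then \(i<i'\le i{+}r_i<i'{+}r_{i'}\). Because \(B_i=[i{+}1..i{+}r_i]\), we have \(i'\prec i\). Because \(i{+}r_i{+}1\in B_{i'}\), we have \(i{+}r_i{+}1\prec i'\). Together these give \(i{+}r_i{+}1\prec i\), so \(i{+}r_i{+}1\in B_i\), which contradicts \(|B_i|=r_i\). Hence any two such intervals are disjoint or nested.
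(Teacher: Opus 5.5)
Your proof is correct and follows the paper's argument essentially step for step: induction that decomposes at the top target in both directions, with inversion sets unchanged under restriction to the blocks, and the same contradiction via the element $i+r_i+1$ for laminarity. The step you flag as delicate is automatic, and the paper uses this implicitly: since $j$ is the smallest index in its interval, $B_j$ is exactly the set of targets preceding $j$, so $B_j=[j{+}1..d]$ already forces $\sigma$ to be an order of $[j{+}1..d]$, then $j$, then an order of $[d{+}1..k]$ (your contradiction argument is valid but unnecessary).
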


The proof is given in Appendix~\ref{app:m2}.

This recursive decomposition leads to the following dynamic program.

\begin{proposition}[Two-stack dynamic program]\label{thm:dp}
For $1\leq j\leq k+1\leq q+1$ and
$0\leq\lo\leq\hicmd\leq m_1$, let
$f(j,k,\lo,\hicmd)$ denote the minimum cost of a feasible laminar-delay
order on $[j..k]$ whose values lie in $[\lo,\hicmd]$. Set
$f(j,k,\cdot,\cdot)=0$ when $j>k$.

For a nonempty interval, target $j$ is either retrieved first with value
$v$, leaving the state $(j{+}1,k,v,\hicmd)$, or delayed past
$[j{+}1..d]$. In the latter case, its inversion count satisfies
$r_j=d-j$, and the
two remaining states are $(j{+}1,d,\lo,v)$ and
$(d{+}1,k,v,\hicmd)$. Hence,
\begin{equation}\label{eq:m2-dp}
\begin{aligned}
f(j,k,\lo,\hicmd)=\min\Bigl\{
&\min_{\substack{\lo\le v\le\hicmd\\ v\le c_j}}
   \bigl[(m_1-v)+f(j{+}1,k,v,\hicmd)\bigr],\\
&\min_{\substack{j<d\le k\\ \lo\le v\le\hicmd\\ v\le c_j+d-j}}
   \bigl[f(j{+}1,d,\lo,v)+(m_1-v)+2(d-j)+f(d{+}1,k,v,\hicmd)\bigr]
\Bigr\}.
\end{aligned}
\end{equation}

The dynamic program has $O(q^2m_1^2)$ states. For each state, the first
case contains $O(m_1)$ feasible values of $v$, while the second contains
$O(qm_1)$ feasible pairs $(d,v)$. Therefore, it runs in
$O(q^3m_1^3)\subseteq O(n^6)$ time.
\end{proposition}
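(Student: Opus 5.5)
The plan is to show that the recurrence~\eqref{eq:m2-dp} computes the minimum of $F$ over feasible pairs $(\sigma,v)$ in which $\sigma$ is a laminar-delay order of $[j..k]$ and every value lies in $[\lo,\hicmd]$. I would argue by induction on the interval length $k-j+1$, with the empty interval as the base case. The grammar of Lemma~\ref{lem:grammar} supplies the structure. Any laminar-delay order on $[j..k]$ either begins with $j$, or has the form $S([j{+}1..d])\cdot j\cdot S([d{+}1..k])$ with $r_j=d-j$. Conversely, every such concatenation of laminar-delay orders is again laminar-delay. This gives a bijection between orders and the branches of the recurrence.

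\textbf{Step 1 (decomposition of cost and constraints).} Fix a branch and a value $v=v_j$. Because values are nondecreasing along $\sigma$, the elements before $j$ take values in $[\lo,v]$ and those after take values in $[v,\hicmd]$. This produces exactly the sub-state bounds $(j{+}1,d,\lo,v)$ and $(d{+}1,k,v,\hicmd)$, and $(j{+}1,k,v,\hicmd)$ in the first branch. Monotonicity across the junctions is implied by these bounds, so no further coupling is needed.

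Next I would check that every $r_{j'}$ with $j'\neq j$ is determined inside its own block. An element $j'\in[j{+}1..d]$ precedes $j$, but $j<j'$, so $j$ does not enter $r_{j'}$. Elements of $[d{+}1..k]$ come after $[j{+}1..d]$, so they contribute no inversions to that block. For $j'\in[d{+}1..k]$, every element placed earlier has a smaller index, so $r_{j'}$ is internal to its block. Hence both the objective~\eqref{eq:m2-cost} and the feasibility constraints~\eqref{eq:m2-feasibility} split as $(m_1-v)+2(d-j)$ for $j$ plus independent subproblem terms. The constraint on $j$ itself is $v\le c_j+r_j$ with $r_j\in\{0,d-j\}$, and $v\le m_1$ holds since $\hicmd\le m_1$.

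\textbf{Step 2 (optimality over all orders).} The induction yields $f(1,q,0,m_1)=\min F$ over laminar-delay orders. To conclude it is the global M2 optimum, I would appeal to Corollary~\ref{cor:laminar} via Theorem~\ref{thm:reduction-m2}: M2 is exactly the two-stack problem, so some optimum is laminar. The main obstacle is making this transfer rigorous. The swap of Lemma~\ref{lem:stackswap} exchanges $a$ and $b$ in the full order, and I must confirm that this corresponds to an M2 solution that keeps the same $v$-sequence by positions. Since $a$ and $b$ exchange positions, each inherits the other's $v$-value. Feasibility then needs $v_b\le c_a+r_a'$ and $v_a\le c_b+r_b'$, which follows from the cap inequality $c_a\le c_b-(b-a)$ together with the change in inversion counts, as in the lemma. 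If this requires care, I would simply cite the lemma's conclusion that $\kappa_t$ values of other targets are unchanged and the energy drops by $2$. In that case every non-laminar M2 solution can be improved, so the restriction loses nothing.

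\textbf{Step 3 (complexity).} This is routine. There are $O(q^2)$ intervals times $O(m_1^2)$ bound pairs. Each state minimizes over $O(m_1)$ choices of $v$ in the first branch and $O(qm_1)$ pairs $(d,v)$ in the second. Evaluating states in increasing interval length gives $O(q^3m_1^3)$ time. Since $q,m_1\le n$, this is $O(n^6)$.
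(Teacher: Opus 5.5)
Your proposal is correct and follows the paper's own justification. You decompose a laminar-delay order at the top target $j$ via Lemma~\ref{lem:grammar}, check that inversion counts, windows, and cost terms of the block and the suffix are internal to their subintervals, and count $O(q^2m_1^2)$ states with $O(qm_1)$ transitions each. Your Step~2 is not needed for this proposition; the paper proves optimality separately in Theorem~\ref{thm:main}, for every window, via Proposition~\ref{prop:decomp} and the core exchange. Still, your top-level route is valid and suffices for the $O(n^6)$ conclusion: map an optimal M2 solution to an optimal retrieval order by Theorem~\ref{thm:reduction-m2}, then invoke Corollary~\ref{cor:laminar}.
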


\subsection{The Decomposition at the Top Target}
\label{sec:m2-decomposition}

In recurrence~\eqref{eq:m2-dp}, the set of targets retrieved
before $j$ is either empty or equal to $[j{+}1..d]$ for some
$j<d\leq k$. We now show that restricting the recurrence to these sets does
not exclude an optimal solution.

Consider an arbitrary order on an interval $I=[j..k]$ with window
$[\lo,\hicmd]$, and let $L=I\setminus\{j\}$. Let $B\subseteq L$ denote the
set of targets retrieved before $j$, and set $r=|B|$. The targets in $B$
are retrieved first, followed by $j$, while all targets in $L\setminus B$
are retrieved afterward. If $B$ is not a consecutive block, two additional
effects may arise. First, a target in $B$ and a target in $L\setminus B$ may
form an additional inversion. Second, retrieving the targets in $B$ may
increase the caps of some targets in $L\setminus B$. We formalize these two
effects as follows.

Because every target in $B$ is retrieved before every target in
$L\setminus B$, a pair $(b,s)\in B\times(L\setminus B)$ is an inversion
exactly when $b>s$. Let
$\cross(B)\coloneqq\#\{(b,s)\in B\times(L\setminus B):b>s\}$ denote the
number of these additional inversions. If $B=\{b_1<\dots<b_r\}$, then
$\cross(B)=\sum_{t=1}^{r}(b_t-j-t)$, since $b_t-j-t$ is the number of
targets in $L\setminus B$ with index smaller than $b_t$.

For a set $X$ of targets, consider the M2 instance obtained by deleting
every target outside $X$. The value $m_1$ and the caps of the targets in
$X$ remain unchanged, and the targets are relabeled consecutively in
increasing order of their original indices. For a window $W$, let
$\OPT(X,W)$ denote the minimum cost of this instance when all values are
restricted to $W$. Similarly, $\OPTp(X,W\mid B)$ denotes the minimum cost
under the same objective after increasing the cap of each $s\in X$ by
$\lvert\{b\in B:b>s\}\rvert$.

\begin{proposition}[Decomposition at the top target]\label{prop:decomp}
Let $B\subseteq L$, $r=|B|$, and
$v\in[\lo,\min(m_1,\hicmd,c_j+r)]$.

\emph{(a)} Suppose a feasible solution on $B$, with values in $[\lo,v]$,
has cost $C_B$. Suppose also that a feasible solution on $L\setminus B$,
with values in $[v,\hicmd]$, has cost $C_R$ after the cap of each
$s\in L\setminus B$ is increased by $\#\{b\in B:b>s\}$. Retrieving the
targets in $B$ according to the first solution, then $j$ with value $v$,
and finally the remaining targets according to the second solution gives
a feasible solution on $I$ with cost
$\Phi(B,v)=C_B+(m_1-v)+2r+C_R+2\,\cross(B)$.
Any solution on $L\setminus B$ that is feasible under the original caps
may also be used, since increasing the caps preserves feasibility.

\emph{(b)} Conversely, consider a feasible solution on $I$ in which
precisely the targets in $B$ are retrieved before $j$, and $j$ is assigned
value $v$. Its restriction to $B$ is feasible with values in $[\lo,v]$.
Its restriction to $L\setminus B$ uses values in $[v,\hicmd]$ and is
feasible under the increased caps. The total cost remains $\Phi(B,v)$.

If $B=P_r\coloneqq[j{+}1..j{+}r]$, then $b<s$ for every $b\in B$ and
$s\in L\setminus B$. Consequently, no cap is increased and
$\cross(B)=0$.
\end{proposition}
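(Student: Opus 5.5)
The plan is a direct verification, carried out from the M2 formulation of Theorem~\ref{thm:reduction-m2}. Fix the concatenated order: first $B$ in the order of the first solution, then $j$, then $L\setminus B$ in the order of the second solution. Each target receives the value it has in its sub-solution, and $j$ receives value $v$. Part (b) is then the same bookkeeping read in reverse.

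The key step is to compute the inversion count $r_y$ of every target in this concatenated order. There are three cases.
\begin{enumerate}
\item For $y\in B$, the only targets preceding $y$ are in $B$. So $r_y$ equals its inversion count within $B$. Relabeling $B$ consecutively preserves both the relative order of indices and the caps, so the cost contributions and feasibility constraints for $B$ coincide with those of the sub-instance.
\item For $j$, every element of $B$ lies below $j$ and precedes it, so $r_j=r$. Feasibility requires $v\le\min(m_1,c_j+r)$, which holds by the choice of $v$. The contribution of $j$ is $(m_1-v)+2r$.
\item For $s\in L\setminus B$, the predecessors below $s$ are its inversions within $L\setminus B$ together with $\#\{b\in B:b>s\}$; the target $j$ does not count since $j<s$. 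So $r_s$ exceeds its sub-instance count by exactly this number, which is precisely the cap increase in the definition of $\OPTp$. Hence $c_s+r_s$ in $I$ equals the raised cap plus the sub-instance inversion count, and feasibility transfers.
\end{enumerate}

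Summing the extra terms $2\#\{b\in B:b>s\}$ over $s\in L\setminus B$ gives $2\,\cross(B)$. This yields $\Phi(B,v)$.

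Monotonicity of the values along the order follows from the two windows: values in $B$ lie in $[\lo,v]$, then $j$ has value $v$, then values in $L\setminus B$ lie in $[v,\hicmd]$. All values stay in $[\lo,\hicmd]\cap[0,m_1]$. The remark about using the original caps is immediate, because raising the caps only loosens constraint~\eqref{eq:m2-feasibility} and does not change the cost.

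For (b), start from a feasible solution on $I$ in which exactly $B$ precedes $j$ and $j$ has value $v$. Since the values are nondecreasing, the values in $B$ lie in $[\lo,v]$ and those after $j$ lie in $[v,\hicmd]$. The inversion identities from the three cases hold verbatim. Restricting to $B$ therefore gives a feasible sub-solution, and restricting to $L\setminus B$ gives one that is feasible under the increased caps, with costs $C_B$ and $C_R$ as above; the total is then $\Phi(B,v)$.

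The final claim follows because $B=P_r$ consists of indices smaller than every element of $L\setminus B$. The main point of care is the third case: one must show that $r_s$ decomposes exactly into the internal count plus the $B$-count, with nothing contributed by $j$ or by elements of $B$ above $s$. Throughout, one must also keep track of the fact that relabeling is order-preserving.
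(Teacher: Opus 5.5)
Your proposal is correct and follows essentially the same route as the paper's proof. Both concatenate $B$, then $j$, then $L\setminus B$, and note that each $s\in L\setminus B$ gains exactly $\#\{b\in B:b>s\}$ extra inversions, which matches its cap increase. Feasibility, monotonicity of the values, and the cost $\Phi(B,v)$ with its $2\,\cross(B)$ term then follow directly, and (b) is the same bookkeeping read in reverse.
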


The proof is given in Appendix~\ref{app:m2}.

\subsection{The Core Exchange}
\label{sec:m2-exchange}

We now show that, in the decomposition of Proposition~\ref{prop:decomp},
an arbitrary set $B\subseteq L$ can be replaced by the index prefix $P_r$
of the same cardinality without increasing the cost. An adjacent exchange
may violate the cap of the upper target, whereas an arbitrary nonadjacent
exchange may also change the inversion counts of targets between the exchanged
positions. We therefore introduce the following promotion swap.

Consider a windowed M2 instance with targets $1,\dots,q$, strictly
increasing integer caps $c_1<\dots<c_q$, and a window $[\lo,\hicmd]$, where
$0\le\lo\le\hicmd\le m_1$. A feasible solution $(\sigma,v)$ has values that
are nondecreasing along $\sigma$ and satisfy
\[
\lo\le v_x\le\min(m_1,\hicmd,c_x+r_x)
\qquad\text{for every }x.
\]
Its cost is
$\sum_x(m_1-v_x)+2\sum_x r_x$.

Let $B_1\coloneqq\{x\in[2..q]:x\prec_\sigma 1\}$, and suppose that $B_1$ is
not an index prefix. Set $M\coloneqq\max B_1$ and
$a\coloneqq\max\{x<M:x\notin B_1\}$, and let $b$ be the target in $(a..M]$
retrieved last under $\sigma$. Then
$b\prec_\sigma 1\prec_\sigma a$. The promotion swap exchanges the
positions of $a$ and $b$, leaving the value assigned to each position
unchanged. By construction, every target $w$ between $b$ and $a$ in
$\sigma$ satisfies $w<a$ or $w>M$. Hence, its inversion count $r_w$ remains
unchanged. The strict increase of the cap sequence ensures that both
exchanged targets remain feasible. The swap replaces $B_1$ by
$(B_1\setminus\{b\})\cup\{a\}$, decreases the total inversion count by one,
and therefore decreases the cost by exactly $2$. The complete proof is
provided by Lemma~\ref{lem:promswap} in Appendix~\ref{app:core}.

Lemma~\ref{lem:promswap} implies that the inversion set of target~$1$ is an
index prefix in every optimal solution of a windowed instance (see
Appendix~\ref{app:core}, Corollary~\ref{cor:topprefix}). Applying this
property recursively to the two subproblems in Proposition~\ref{prop:decomp}
shows that the order of every optimal solution is a laminar-delay order
(see Appendix~\ref{app:core}, Corollary~\ref{cor:grammar}).

Accordingly, repeated promotion swaps transform a nonprefix inversion set
into the prefix of the same size, preserve the top value, and strictly
decrease the cost.

\begin{theorem}[Core exchange]\label{thm:core}
With the notation of Proposition~\ref{prop:decomp}, let $B\subseteq L$,
set $r=|B|$, and let $v\in[\lo,\min(m_1,\hicmd,c_j+r)]$.
Define
\[
\Phi(B,v)\coloneqq\OPT(B,[\lo,v])+(m_1-v)+2r
+\OPTp(L\setminus B,[v,\hicmd]\mid B)+2\,\cross(B).
\]
Then
\[
\Phi(P_r,v)\le\Phi(B,v).
\]
If $B\neq P_r$, then $\Phi(P_r,v)\le\Phi(B,v)-2$.
\end{theorem}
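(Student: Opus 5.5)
The argument combines Proposition~\ref{prop:decomp} with the promotion swap of Section~\ref{sec:m2-exchange} (Lemma~\ref{lem:promswap}). The idea is to realize $\Phi(B,v)$ by an actual feasible solution on $I$, apply promotion swaps at the top target $j$ until its inversion set becomes $P_r$, and then read off a decomposition for $P_r$.

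\textbf{Step 1 (realize $\Phi(B,v)$).} If either $\OPT(B,[\lo,v])$ or $\OPTp(L\setminus B,[v,\hicmd]\mid B)$ is infeasible, then $\Phi(B,v)=+\infty$ and there is nothing to prove. Otherwise, take optimal solutions of both subproblems. By Proposition~\ref{prop:decomp}(a), concatenating them around $j$ with value $v$ gives a feasible solution $(\sigma,w)$ on $I$ with window $[\lo,\hicmd]$ and cost exactly $\Phi(B,v)$. In this solution the inversion set of $j$ is $B$ and $w_j=v$.

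\textbf{Step 2 (promotion swaps).} Regard $I$ as a windowed M2 instance with $j$ in the role of target~$1$; the caps are still strictly increasing. While the inversion set $B_j$ is not the index prefix $P_r$, apply the promotion swap of Lemma~\ref{lem:promswap}. Each swap does the following:
\begin{itemize}[label=$\circ$]
\item it exchanges the positions of $a$ and $b$ but keeps the value attached to each position, so $j$ keeps position and value $v$;
\item it replaces $B_j$ by $(B_j\setminus\{b\})\cup\{a\}$, so $|B_j|=r$ is preserved;
\item it keeps the solution feasible within the window $[\lo,\hicmd]$;
\item it lowers the cost by exactly $2$.
\end{itemize}

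Termination needs a potential. The quantity $\sum_{x\in B_j}x$ strictly decreases with each swap because $a<b$. At fixed cardinality $r$, this sum is minimized exactly by $P_r$. Hence the process stops after $k\ge 1$ swaps when $B\neq P_r$, and after $k=0$ swaps otherwise. The final solution has cost $\Phi(B,v)-2k$, inversion set $P_r$, and value $v$ at $j$.

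\textbf{Step 3 (decompose the final solution).} Apply Proposition~\ref{prop:decomp}(b) with $B=P_r$:
\begin{itemize}[label=$\circ$]
\item the restriction to $P_r$ is feasible with values in $[\lo,v]$, so its cost is at least $\OPT(P_r,[\lo,v])$;
\item the restriction to $L\setminus P_r$ is feasible with values in $[v,\hicmd]$ under unchanged caps, because no cap is increased for a prefix, so its cost is at least $\OPTp(L\setminus P_r,[v,\hicmd]\mid P_r)$;
\item the cross term is $\cross(P_r)=0$.
\end{itemize}
The total cost of the final solution equals the sum of these pieces plus $(m_1-v)+2r$. Therefore $\Phi(P_r,v)\le\Phi(B,v)-2k$. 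This gives $\Phi(P_r,v)\le\Phi(B,v)$ in general and $\Phi(P_r,v)\le\Phi(B,v)-2$ when $B\neq P_r$.

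\textbf{Main obstacle.} The delicate step is checking that Lemma~\ref{lem:promswap} applies unchanged to the windowed subinstance on $I$ with top target $j$. Three points need care:
\begin{itemize}[label=$\circ$]
\item the values stay inside the window, which holds because values are attached to positions;
\item the swap exchanges only $a$ and $b$, never $j$, so $j$'s value $v$ is untouched;
\item feasibility of $a$ at $b$'s old position and of $b$ at $a$'s old position relies on strict cap monotonicity, via the cap comparison of the stack-local swap. Moving earlier lowers $a$'s inversion count, but $a$'s cap is compensated because $a<b$. Moving later raises $b$'s inversion count, which only helps $b$.
\end{itemize}
I would verify these points against the statement of Lemma~\ref{lem:promswap}, in particular its hypothesis that caps are strictly increasing, and if needed re-derive the inequality $c_a+r'_a\ge v_{\text{pos}(b)}$ from $c_a\ge c_b-(b-a)$ and the count of targets in $(a..M]$ retrieved before $b$.
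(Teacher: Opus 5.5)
Your proposal is correct and follows the paper's proof essentially step for step. You realize $\Phi(B,v)$ via Proposition~\ref{prop:decomp}(a), apply Lemma~\ref{lem:promswap} at $j$ until its inversion set is $P_r$, and bound the resulting cost from below via Proposition~\ref{prop:decomp}(b). Your potential $\sum_{x\in B_j}x$ is the paper's $\cross(B_j)$ plus the constant $rj+r(r+1)/2$.

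One correction to your closing remark, which does not affect the argument because you use the lemma as proved. Strictly increasing caps give $c_a\le c_b-(b-a)$, not $\ge$. Moreover, the feasibility of $a$ at $b$'s old position comes from the top target, via $v_b\le v_j\le c_j+r_j\le c_j+(M-j-1)$ together with $c_j\le c_a-(a-j)$, not from the data of $b$.
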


The proof is given in Appendix~\ref{app:core}.

Propositions~\ref{thm:dp} and~\ref{prop:decomp}, together with
Theorem~\ref{thm:core}, yield the following theorem.

\begin{theorem}[Polynomial-time solvability for two stacks]\label{thm:main}
For every window $[\lo,\hicmd]$ and every index interval $I=[j..k]$,
\[
\OPT(I,[\lo,\hicmd])=f(j,k,\lo,\hicmd).
\]
Consequently, the retrieval sequencing problem with two stacks can be
solved in $O(n^6)$ time.
\end{theorem}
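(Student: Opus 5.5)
The plan is to prove $\OPT(I,[\lo,\hicmd])=f(j,k,\lo,\hicmd)$ by strong induction on the interval length $k-j+1$, simultaneously for all windows. For $j>k$ both sides equal $0$: the empty order is feasible with zero cost, and the recurrence sets $f(j,k,\cdot,\cdot)=0$. For a nonempty interval $I=[j..k]$ with $L=I\setminus\{j\}$, we prove two inequalities.

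\emph{Achievability, $f\ge\OPT$.} Take a minimizing branch of~\eqref{eq:m2-dp}. In the first branch, combine an optimal solution for the state $(j{+}1,k,v,\hicmd)$ (cost $f$ by induction) with $j$ retrieved first at value $v\le c_j$. This is the case $B=\varnothing=P_0$ of Proposition~\ref{prop:decomp}(a), and it gives a feasible solution on $I$ with the stated cost. In the second branch, $B=P_{d-j}=[j{+}1..d]$. By induction the subproblems on $[j{+}1..d]$ with window $[\lo,v]$ and on $[d{+}1..k]$ with window $[v,\hicmd]$ have optimal values $f(j{+}1,d,\lo,v)$ and $f(d{+}1,k,v,\hicmd)$. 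Because $B$ is an index prefix, Proposition~\ref{prop:decomp} gives no cap increase and $\cross(B)=0$. The constraint $v\le c_j+(d-j)$ is exactly $v\le c_j+r$, so part~(a) yields a feasible solution of cost equal to the bracketed term. We also need the relabeling convention: $\OPT$ of a consecutive subinterval under its original caps coincides with the DP state on that subinterval, since the caps remain strictly increasing and unchanged.

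\emph{Optimality, $\OPT\ge f$.} Take an optimal feasible solution on $I$ with window $[\lo,\hicmd]$. Let $B$ be the set of targets retrieved before $j$, let $r=|B|$, and let $v$ be the value of $j$. Then $v\in[\lo,\min(m_1,\hicmd,c_j+r)]$. By Proposition~\ref{prop:decomp}(b), the restrictions are feasible for the respective subproblems, so the cost is at least $\Phi(B,v)$ as defined in Theorem~\ref{thm:core}, with $\OPT$ and $\OPTp$ in place of the actual restriction costs. Theorem~\ref{thm:core} gives $\Phi(B,v)\ge\Phi(P_r,v)$. For $P_r$ there is no cap increase, so $\OPTp(L\setminus P_r,\cdot\mid P_r)=\OPT(L\setminus P_r,\cdot)$ and $\cross(P_r)=0$. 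Moreover $P_r=[j{+}1..j{+}r]$ and $L\setminus P_r=[j{+}r{+}1..k]$ are intervals strictly shorter than $I$, so by induction their $\OPT$ values equal the corresponding $f$ values. With $d=j+r$, $\Phi(P_r,v)$ is then exactly the term of~\eqref{eq:m2-dp}: the first branch when $r=0$, the second when $r\ge1$, and the index ranges match. Hence $\OPT\ge f$. The case in which a subproblem is infeasible should be handled by setting $\OPT=f=+\infty$ consistently, and the window $[\lo,v]$ with $v\ge\lo$ is always nonempty.

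The main obstacle is the bookkeeping that identifies $\Phi(P_r,v)$ with the DP terms. In particular, the values of the targets in $B$ must lie in $[\lo,v]$ and those after $j$ in $[v,\hicmd]$, which follows from monotonicity of the values along $\sigma$. The substantive exchange work, however, is already contained in Theorem~\ref{thm:core}.

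For the complexity claim, Theorem~\ref{thm:reduction-m2} makes minimizing $P$ equivalent to M2 on $[1..q]$ with window $[0,m_1]$. So the optimum equals $f(1,q,0,m_1)$, computed in $O(q^3m_1^3)\subseteq O(n^6)$ time by Proposition~\ref{thm:dp}. An optimal order is recovered by standard backtracking, and stack~1 is retrieved top-down, interleaved according to the values $v_j$. Bit-length is not an issue because the states range over counts bounded by $n$, and the caps enter only through comparisons.
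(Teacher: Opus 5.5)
Your proposal is correct and follows the paper's proof essentially step for step. You use strong induction on the interval length, obtain achievability from Proposition~\ref{prop:decomp}(a) applied to prefix blocks, and obtain optimality from Proposition~\ref{prop:decomp}(b) together with Theorem~\ref{thm:core}, which replaces $B$ by $P_r$ before you invoke the induction hypothesis; as in the paper, the $O(n^6)$ bound then comes from Theorem~\ref{thm:reduction-m2} and Proposition~\ref{thm:dp}.
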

The proof is given in Appendix~\ref{app:core}.

Thus, for two stacks, it is sufficient to consider retrieval orders in
which every delayed target is preceded by a consecutive block of targets
immediately below it. The dynamic program enumerates exactly these
recursive delay decisions and therefore computes an optimal retrieval
order in polynomial time.

\section{Three Stacks: Exact Reduction and Realizability}\label{sec:m3}

We extend the two-stack reduction to the merge problem M3, which
interleaves the retrieval orders of stacks~2 and~3
(Section~\ref{sec:m3reduction}), and characterize exactly which M3
instances arise from the storage model (Section~\ref{sec:realizability}).

\subsection{Reduction to the Merge Problem M3}\label{sec:m3reduction}

Consider an instance with $m=3$. As in the two-stack case, the targets in
stack~1 can be retrieved from top to bottom without loss of optimality.
This order avoids every within-stack penalty in stack~1, while stack~1
affects the feasibility and cost of targets in stacks~2 and~3 only through
the number of its targets already retrieved. Consequently,
we represent stack~1 by this number, denoted by $v=\kk_1$.

Let $S_2$ and $S_3$ denote the target sets of stacks~2 and~3,
respectively, and let $q_2\coloneqq |S_2|$ and
$q_3\coloneqq |S_3|$. Within each stack, we index the targets from top to
bottom, as in Section~\ref{sec:laminarity}. This yields the following
merge problem.

\begin{theorem}[Reduction to M3]\label{thm:reduction}
Consider an instance of the retrieval sequencing problem with $m=3$, and let $m_1$ denote the
number of targets in stack~1. For each $j\in S_2$, define
$\alpha_j\coloneqq n(1)-h_j$. For each $k\in S_3$, define
\[
\beta_k\coloneqq n(1)-h_k,
\qquad
\gamma_k\coloneqq n(2)-h_k=\beta_k+\goff,
\qquad
\goff\coloneqq n(2)-n(1).
\]
A solution to problem M3 consists of an order $\sigma$ of $S_2\cup S_3$
and, for each target $b$, an integer $v_b\in[0,m_1]$ representing the
number of stack-1 targets retrieved before $b$. The values $v_b$ are
nondecreasing along $\sigma$. Let
\[
r_j\coloneqq\#\{\,j'\in S_2:j'>j,\ j'\prec_\sigma j\,\},
\qquad
r_k\coloneqq\#\{\,k'\in S_3:k'>k,\ k'\prec_\sigma k\,\}.
\]
Here, $r_j$ and $r_k$ are the within-stack inversion counts of targets
$j$ and $k$, respectively.
For each $k\in S_3$, let
\[
\kk_2(k)\coloneqq\#\{\,j\in S_2:j\prec_\sigma k\,\}.
\]
The feasibility conditions are
\begin{equation}\label{eq:m3-feasibility}
\begin{aligned}
v_j&\le\min(m_1,\alpha_j+r_j) &&(j\in S_2),\\
v_k&\le\min(m_1,\beta_k+r_k) &&(k\in S_3),\\
\kk_2(k)&\le\min(q_2,\gamma_k+r_k) &&(k\in S_3),
\end{aligned}
\end{equation}
and the cost is
\begin{equation}\label{eq:m3-cost}
\sum_{j\in S_2}\bigl[2r_j+(m_1-v_j)\bigr]
+\sum_{k\in S_3}\bigl[3r_k+(m_1-v_k)+(q_2-\kk_2(k))\bigr].
\end{equation}
The minimum cost of problem M3 equals the minimum penalty $P(\pi)$ over
all feasible retrieval orders $\pi$.
\end{theorem}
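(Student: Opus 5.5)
The plan is to show the two optimization problems have the same minimum via two cost-preserving maps, using Lemma~\ref{lem:penalty} in its per-element form \eqref{eq:penalty-element} for $m=3$. First I reduce to retrieval orders in which stack~1 is retrieved top to bottom. Given any feasible $\pi$, reorder the stack-1 targets among their own positions in $\pi$ so that they are retrieved top to bottom. This keeps $\kk_1(b)$ unchanged for every $b$ in stacks~2 and~3. It also leaves $r_b$ and $\kk_2(b)$ unchanged for targets outside stack~1. Stack-1 targets have no preceding stacks, so their only constraint in \eqref{eq:retrieval-feasibility} is vacuous and their only penalty term is $1\cdot r_b$, which becomes $0$. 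Hence the new order is feasible and $P$ does not increase. It therefore suffices to minimize over orders whose restriction to stack~1 is top-down.

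Next I define the forward map from such an order $\pi$ to an M3 solution. Let $\sigma$ be the restriction of $\pi$ to $S_2\cup S_3$, and set $v_b\coloneqq\kk_1(b)$. The values are nondecreasing along $\sigma$ because $\kk_1$ counts earlier stack-1 retrievals, and $v_b\in[0,m_1]$. The definitions $\kk_2(k)$, $r_j$ and $r_k$ coincide with those of the retrieval model. Feasibility then follows by substituting into \eqref{eq:retrieval-feasibility}. For $j\in S_2$ and $t=1$ this gives $v_j\le n(1)-h_j+r_j=\alpha_j+r_j$. For $k\in S_3$, the cases $t=1,2$ give $v_k\le\beta_k+r_k$ and $\kk_2(k)\le\gamma_k+r_k$. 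The caps by $m_1$ and $q_2$ hold trivially.

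For the cost, I expand \eqref{eq:penalty-element} term by term. Stack-1 targets contribute $0$. A stack-2 target contributes $2r_j+(m_1-\kk_1(j))$. A stack-3 target contributes $3r_k+(m_1-\kk_1(k))+(q_2-\kk_2(k))$, since $m_t$ counts the targets of stack $t$ and so $m_2=q_2$. This is exactly \eqref{eq:m3-cost}.

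Conversely, given an M3 solution $(\sigma,v)$, I build $\pi$ by scanning $\sigma$ and, immediately before each $b$, inserting the next stack-1 targets in top-down order until exactly $v_b$ of them have been retrieved. This is possible because $v$ is nondecreasing and $v_b\le m_1$. Any remaining stack-1 targets are appended at the end. Then $\kk_1(b)=v_b$ holds exactly, and the same substitution shows that \eqref{eq:retrieval-feasibility} holds and $P(\pi)$ equals the M3 cost. The two maps together give equality of the minima, including the case where both problems are infeasible.

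I expect the main obstacle to be the bookkeeping. The point needing care is that $r_b$ in the storage model counts only same-stack lower targets, which matches $r_j$ and $r_k$ here, and that the intermediate positions of stack-1 targets never affect feasibility. Both follow directly from the per-element forms already established.
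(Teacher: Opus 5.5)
Your proposal is correct and follows essentially the same route as the paper. Like the paper, you first reorder the stack-1 targets top-down within their own positions, then map between retrieval orders and M3 solutions in both directions via $v_b=\kappa_1(b)$ and top-down insertion of stack-1 targets, and match the costs term by term through the per-element form of Lemma~\ref{lem:penalty}.
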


Problem M3 differs from problem M2 through the interaction between
stacks~2 and~3. For each $k\in S_3$, retrieving more stack-2 targets before
$k$ reduces its contribution to~\eqref{eq:m3-cost} but may violate its
feasibility bound in~\eqref{eq:m3-feasibility}. The interleaving must
therefore balance these two effects.

Consider a feasible retrieval order in which the targets in stack~1 are
retrieved from top to bottom. Deleting these targets yields the order
$\sigma$, and setting $v_b$ equal to the number of stack-1 targets preceding
$b$ yields a feasible solution to problem M3. Conversely, any feasible
solution $(\sigma,v)$ to problem M3 defines a feasible retrieval order by
inserting the stack-1 targets into $\sigma$, in top-to-bottom order, so that
exactly $v_b$ of them precede each target $b$, and placing any remaining
stack-1 targets last. By Lemma~\ref{lem:penalty}, the two solutions have the
same cost. The complete proof is provided in
Appendix~\ref{app:m3reduction}.

\subsection{Realizability}\label{sec:realizability}

By Theorem~\ref{thm:reduction}, a three-stack instance is represented by a
tuple \((m_1,\alpha,\beta,\goff)\), where \(m_1\) is a nonnegative
integer, \(\goff\) is an integer, and \(\alpha\) and \(\beta\) are finite
integer sequences, possibly empty. We refer to this tuple as the
\emph{cap vector} of the instance. A cap vector is \emph{admissible} if both
\(\alpha\) and \(\beta\) are strictly increasing, and \emph{realizable}
if it is induced by an \(m=3\) instance.

Observe that every realizable cap vector is admissible, as target heights
strictly decrease with their top-to-bottom indices. If \(\alpha\) is
nonempty, the height range in stack~2 further implies
\(\alpha_1+\goff\ge 1\). Theorem~\ref{thm:real} shows that these necessary
conditions are also sufficient.

\begin{theorem}[Realizability]\label{thm:real}
Let \(m_1\) be a nonnegative integer, let \(\alpha\) and \(\beta\) be
finite integer sequences, possibly empty, and let \(\goff\) be an integer.
Then the tuple \((m_1,\alpha,\beta,\goff)\) is realizable if and only if it
is admissible and either \(\alpha\) is empty or
\(\alpha_1+\goff\ge 1\).
\end{theorem}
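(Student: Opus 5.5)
The plan is to prove the two directions separately. Necessity is essentially the observation made just before the theorem. Sufficiency is an explicit construction of an \(m=3\) instance, in which a single large parameter \(N\coloneqq n(1)\) absorbs every nonnegativity requirement.

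\emph{Necessity.} Suppose the tuple is induced by an \(m=3\) instance.
\begin{itemize}
\item Within stack~2 the targets are indexed top to bottom, so \(h_1>h_2>\cdots\). Hence \(\alpha_j=n(1)-h_j\) is strictly increasing in \(j\). The same argument applies to \(\beta\) in stack~3, so the tuple is admissible.
\item If \(\alpha\) is nonempty, the top target of stack~2 satisfies \(h_1<n(2)\). This gives \(\alpha_1+\goff=n(1)-h_1+n(2)-n(1)=n(2)-h_1\ge 1\).
\end{itemize}

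\emph{Sufficiency.} Given an admissible tuple with \(\alpha_1+\goff\ge1\) whenever \(\alpha\neq\varnothing\), choose an integer \(N\) with
\[
N\ \ge\ \max\bigl(m_1,\ -\goff,\ \max_j\alpha_j,\ \max_k\beta_k,\ 0\bigr),
\]
where the maxima over empty sequences are omitted. Then build the instance as follows.
\begin{itemize}
\item Set \(n(1)\coloneqq N\) and place the \(m_1\) stack-1 targets at heights \(0,\ldots,m_1-1\). This is valid because \(N\ge m_1\).
\item Set \(n(2)\coloneqq N+\goff\ge0\). Place the stack-2 targets at heights \(h_j\coloneqq N-\alpha_j\).
\item Place the stack-3 targets at heights \(h_k\coloneqq N-\beta_k\), and set \(n(3)\coloneqq N-\beta_1+1\) if \(\beta\) is nonempty, and \(n(3)\coloneqq0\) otherwise.
\end{itemize}

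It remains to check that these are valid positions.
\begin{itemize}
\item \emph{Nonnegativity.} We have \(h_j\ge0\) and \(h_k\ge0\) because \(N\ge\max\alpha_j\) and \(N\ge\max\beta_k\).
\item \emph{Distinctness.} Heights within a stack are distinct because \(\alpha\) and \(\beta\) are strictly increasing.
\item \emph{Order.} The top-to-bottom indexing is preserved, since \(h\) decreases as the cap increases.
\item \emph{Stack 3 fits.} The condition \(h_k<n(3)\) holds because \(h_k\le N-\beta_1<n(3)\).
\item \emph{Stack 2 fits.} The condition \(h_j<n(2)\) reads \(N-\alpha_j<N+\goff\), that is, \(\alpha_j+\goff\ge1\). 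It suffices to check this for \(j=1\), since \(\alpha\) is increasing, and for \(j=1\) it is exactly the hypothesis.
\end{itemize}

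By construction the induced caps are \(n(1)-h_j=\alpha_j\) and \(n(1)-h_k=\beta_k\). The offset is \(n(2)-n(1)=\goff\), and stack~1 contains \(m_1\) targets. Hence this instance induces exactly \((m_1,\alpha,\beta,\goff)\) in the sense of Theorem~\ref{thm:reduction}.

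The only delicate point is the stack-2 height bound \(h_j<n(2)\). It is the one constraint that cannot be met simply by enlarging \(N\), because \(N\) cancels from it. This is why \(\alpha_1+\goff\ge1\) appears as a genuine condition in the theorem, and why no analogous condition arises for \(\beta\): the height \(n(3)\) is a free parameter. Realizability as defined only asks that the tuple be induced by an instance, not that the instance satisfy~\eqref{eq:instance-feasibility}, so no further condition has to be verified.
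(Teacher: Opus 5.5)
Your proof is correct and follows the paper's argument: the same necessity observation, and the same construction with \(n(2)=n(1)+\goff\), stack-2 and stack-3 heights \(n(1)-\alpha_j\) and \(n(1)-\beta_k\), and \(n(3)\) just large enough to contain the top stack-3 target. The only difference is that the paper also requires \(n(1)\ge 1\), \(n(1)\ge 1-\goff\) and \(n(3)\ge 1\), so every stack holds at least one bin; your bounds \(N\ge\max(0,-\goff,\dots)\) and \(n(3)=0\) for empty \(\beta\) can produce empty stacks, which is harmless under the model as stated, and raising those bounds by one fixes it if nonempty stacks are intended.
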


The proof is given in Appendix~\ref{app:m3reduction}.
Its sufficiency direction is constructive.

\section{The Joint-Interval Dynamic Program and its
Soundness}\label{sec:dp-section}

In this section, we develop a dynamic programming approach for the
three-stack problem M3. The main idea is based on a recursive decomposition
of the target sequences of stacks~2 and~3. At each step,
the uppermost remaining target of either stack is retrieved immediately
or delayed behind a structured block of targets. We characterize the
class of retrieval orders generated by this decomposition and show that
a minimum-cost order in this class can be computed in polynomial time.

Consider a subproblem defined by one interval of targets from each of
stacks~2 and~3. The dynamic program selects the uppermost target of
either interval. It may retrieve this target first within the
subproblem. Alternatively, it may delay the target until after a block
containing a nonempty consecutive sequence of the next targets in the
same stack and, possibly, a consecutive sequence beginning with the
uppermost target of the other stack. The targets from the two sequences
may be interleaved within the block. The same decomposition is then
applied recursively to this block and to the targets retrieved after
the delayed target.

The solution space explored by the dynamic program consists of all
retrieval orders that can be constructed entirely through these two
cases. Figure~\ref{fig:delayedprod} illustrates the delayed case. A
formal definition of the recursive decomposition is given in
Definition~\ref{def:productions} in Appendix~\ref{app:dp}.

\begin{figure}
\FIGURE
{
%
\begin{tikzpicture}[
  x=1cm,
  y=0.6cm,
  elem/.style={draw=black!75,line width=0.35pt,fill=white,
               minimum width=1.15cm,minimum height=0.62cm,inner sep=0pt,
               font=\scriptsize},
  blockelem/.style={elem,fill=black!10},
  delayedelem/.style={elem,draw=black,line width=0.9pt,fill=black!22},
  lab/.style={font=\scriptsize},
  brc/.style={decorate,decoration={brace,amplitude=3pt},line width=0.45pt}
]
\node[blockelem]   (e1) at (0,0)    {$(3,1)$};
\node[blockelem]   (e2) at (1.3,0)  {$(2,2)$};
\node[delayedelem] (e3) at (2.6,0)  {$(2,1)$};
\node[elem]        (e4) at (3.9,0)  {$(2,3)$};
\node[elem]        (e5) at (5.2,0)  {$(3,2)$};

\draw[->,line width=0.45pt] (-0.7,-2.55) -- (5.9,-2.55);
\node[lab,anchor=north] at (2.6,-2.65) {retrieval order};

\draw[brc] (-0.55,0.48) -- (1.85,0.48);
\node[lab,align=center,anchor=south] at (0.65,0.72)
  {block: next stack-2 target $(2,2)$,\\ uppermost stack-3 target $(3,1)$};
\draw[brc] (2.05,0.48) -- (3.15,0.48);
\node[lab,align=center,anchor=south] (delab) at (2.6,2.05)
  {delayed target $(s,j_s)=(2,1)$};
\draw[line width=0.35pt,black!55] (2.6,2.0) -- (2.6,0.85);
\draw[brc] (3.35,0.48) -- (5.75,0.48);
\node[lab,anchor=south] at (4.55,0.72) {suffix};

\draw[brc] (1.85,-0.48) -- (-0.55,-0.48);
\node[lab,anchor=north] at (0.65,-0.72) {values in $[\lo,v]$};
\node[lab,anchor=north] at (2.6,-0.72) {$\{v\}$};
\draw[brc] (5.75,-0.48) -- (3.35,-0.48);
\node[lab,anchor=north] at (4.55,-0.72) {values in $[v,\hicmd]$};
\end{tikzpicture}}
{Illustration of the delayed case. Target \((2,1)\) is retrieved after a
block containing targets \((2,2)\) and \((3,1)\), followed by the remaining
targets. Values lie in \([\lo,v]\) within the block and in
\([v,\hicmd]\) afterward; the delayed target receives value \(v\).
Each target is labeled by its stack and top-to-bottom index.
\label{fig:delayedprod}}
{}
\end{figure}

\begin{proposition}[Three-stack dynamic program]\label{def:dp}
A state comprises two index intervals $[j_2..k_2]$ and $[j_3..k_3]$, a
value window $[\lo,\hicmd]$, and an integer $B$. The intervals identify
the stack-2 and stack-3 targets in the subproblem. The value window bounds
the number of stack-1 targets that may precede these targets, while $B$
equals the number of stack-2 targets retrieved before any target in the
subproblem.

Let $f(j_2,k_2,j_3,k_3,\lo,\hicmd,B)$ denote the minimum cost for this
state over feasible recursively decomposable orders, with $f=0$ when
both intervals are empty. For a nonempty state, cases
\textnormal{(I2)} and \textnormal{(I3)} immediately retrieve the uppermost remaining
target from stacks~2 and~3, respectively, while cases \textnormal{(D2)}
and \textnormal{(D3)} delay the corresponding target until after a block
of targets has been retrieved. In a delayed case, the block is assigned
values in $[\lo,v]$, the delayed target receives value $v$, and the
remaining targets are assigned values in $[v,\hicmd]$. The resulting
recurrence is
\begin{equation}\label{eq:m3-dp}
f(j_2,k_2,j_3,k_3,\lo,\hicmd,B)=\min\bigl\{\text{(I2)},\ \text{(I3)},\ \text{(D2)},\ \text{(D3)}\bigr\},
\end{equation}
where
\begin{align*}
\text{(I2)}\ &\min_{v}\ \bigl[(m_1{-}v)+f(j_2{+}1,k_2,j_3,k_3,v,\hicmd,B{+}1)\bigr],\\
  &\qquad v\in[\lo,\min(\hicmd,m_1,\alpha_{j_2})];\\[2pt]
\text{(I3)}\ &\min_{v}\ \bigl[(m_1{-}v)+(q_2{-}B)+f(j_2,k_2,j_3{+}1,k_3,v,\hicmd,B)\bigr],\\
  &\qquad v\in[\lo,\min(\hicmd,m_1,\beta_{j_3})],\ B\le\min(q_2,\gamma_{j_3});\\[2pt]
\text{(D2)}\ &\min_{d_2,d_3,v}\ \bigl[f(j_2{+}1,d_2,j_3,d_3,\lo,v,B)+2(d_2{-}j_2)+(m_1{-}v)\\
  &\qquad{}+f(d_2{+}1,k_2,d_3{+}1,k_3,v,\hicmd,B{+}(d_2{-}j_2{+}1))\bigr],\\
  &\qquad v\in[\lo,\min(\hicmd,m_1,\alpha_{j_2}{+}(d_2{-}j_2))];\\[2pt]
\text{(D3)}\ &\min_{d_2,d_3,v}\ \bigl[f(j_2,d_2,j_3{+}1,d_3,\lo,v,B)+3(d_3{-}j_3)+(m_1{-}v)+(q_2{-}K_2)\\
  &\qquad{}+f(d_2{+}1,k_2,d_3{+}1,k_3,v,\hicmd,K_2)\bigr],\\
  &\qquad v\in[\lo,\min(\hicmd,m_1,\beta_{j_3}{+}(d_3{-}j_3))],\ K_2\le\min(q_2,\gamma_{j_3}{+}(d_3{-}j_3)).
\end{align*}
Here $K_2=B+(d_2{-}j_2{+}1)$. Cases \textnormal{(I2)} and
\textnormal{(D2)} apply only when $j_2\le k_2$, while cases
\textnormal{(I3)} and \textnormal{(D3)} apply only when $j_3\le k_3$.
In \textnormal{(D2)}, the block boundaries satisfy
$j_2<d_2\le k_2$ and $j_3{-}1\le d_3\le k_3$. In \textnormal{(D3)},
they satisfy $j_2{-}1\le d_2\le k_2$ and $j_3<d_3\le k_3$.
These bounds ensure that the block contains at least one additional
target from the same stack as the delayed target. If no admissible value
of $v$ exists, the corresponding case has value $+\infty$.
In the delayed cases, the selected target remains unretrieved while the
block is processed. The block subproblem therefore retains the same
value of $B$, while the remaining subproblem uses the updated value
$K_2$.
\end{proposition}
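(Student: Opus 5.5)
The plan is to prove that the recurrence \eqref{eq:m3-dp} computes $f$ exactly, as the minimum over all feasible recursively decomposable orders (Definition~\ref{def:productions}), by induction on the total number of targets $(k_2-j_2+1)+(k_3-j_3+1)$. The base case of two empty intervals gives $f=0$. For the inductive step I will show two inequalities. First, each case value is attained by some feasible decomposable order for the state (soundness). Second, every such order falls under one of the four cases with cost at least the corresponding case value (completeness).

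The key tool is an \emph{additive cost accounting} for a state. By Theorem~\ref{thm:reduction}, the M3 cost is a sum over targets of per-target terms: $2r_j+(m_1-v_j)$ for $j\in S_2$, and $3r_k+(m_1-v_k)+(q_2-\kk_2(k))$ for $k\in S_3$. Each term depends on three things only: the target's own value, its within-stack inversion count, and, for stack~3, the number of stack-2 targets preceding it. I will therefore write $\kk_2(k)=B+(\text{stack-2 targets of the subproblem preceding }k)$. Under this convention the parameter $B$ carries exactly the external information that the subproblem's cost and feasibility need.

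The four cases are then checked as follows.
\begin{itemize}
\item \textbf{(I2).} Target $j_2$ is first, so $r_{j_2}=0$. Its feasibility bound is $v\le\min(m_1,\alpha_{j_2})$, and the remaining targets see one more stack-2 retrieval, which gives $B+1$.
\item \textbf{(I3).} Target $j_3$ is first with $r_{j_3}=0$ and $\kk_2=B$. This yields the term $q_2-B$ and the constraint $B\le\min(q_2,\gamma_{j_3})$.
\item \textbf{(D2).} The block consists of $[j_2{+}1..d_2]$ and $[j_3..d_3]$, so $r_{j_2}=d_2-j_2$. The delayed target costs $2(d_2-j_2)+(m_1-v)$ under the bound $\alpha_{j_2}+(d_2-j_2)$. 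The block is a smaller state with parameter $B$ because $j_2$ is not yet retrieved. The suffix has parameter $B+(d_2-j_2+1)$.
\item \textbf{(D3).} This case is symmetric to (D2), with $\kk_2(j_3)=K_2$ and weight $3$.
\end{itemize}
Monotonicity of values along $\sigma$ is ensured by the windows $[\lo,v]$, $\{v\}$ and $[v,\hicmd]$.

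The main obstacle is to show that inversion counts and caps of targets in the block and in the suffix are computed correctly when the subproblems are treated independently. I must verify that no inversion arises between a block target and a suffix target, and that no cap shift leaks across the split. This holds because the block takes a \emph{prefix} of each stack's remaining interval: every block target lies above every suffix target of the same stack, exactly as in the $P_r$ case of Proposition~\ref{prop:decomp} where $\cross=0$ and no cap increases. In addition, the delayed target sits above all of them. Its inversions are therefore exactly the $d_s-j_s$ same-stack block targets, and it causes no inversion for any other target. Hence each target's $r$ restricted to its subproblem equals its global $r$, and its stack-3 count $\kk_2$ is correctly offset by the passed $B$.

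With this in hand, completeness follows directly. Any decomposable order begins, by definition, with one of these productions. Restricting it to the block and to the suffix gives decomposable feasible orders for the child states, whose costs are bounded below by $f$ through the inductive hypothesis. Soundness is the reverse concatenation. Finally, for the running time I will count the states: the two intervals and the window contribute $O(n^6)$ choices and $B$ contributes $O(n)$. The transitions range over $(d_2,d_3,v)$, which gives polynomial total time.
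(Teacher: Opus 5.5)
Your proposal is correct and follows essentially the paper's route. Like you, the paper argues by induction on the number of targets: soundness (Proposition~\ref{prop:assembly}) concatenates child solutions case by case, and completeness (Proposition~\ref{prop:parsing-min}) restricts a fixed decomposition to its block and suffix. In both, the locality of inversion counts and the offset $\kappa_2(k)=B+{}$(number of stack-2 targets of the subproblem preceding $k$) come from the interval structure of the decomposition (Lemma~\ref{lem:book}). The only difference is presentational: you argue this locality one split at a time, whereas the paper states it globally, via the invariant that every previously retrieved target has a smaller index in its stack.
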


\begin{proposition}[State count and running time]\label{prop:dp-time}
The joint-interval DP has $O(q_2^3 q_3^2 m_1^2)$ states. The value of its
initial state, $f(1,q_2,1,q_3,0,m_1,0)$,
can be computed in $O(q_2^4 q_3^3 m_1^3)\subseteq O(n^{10})$ time by
evaluating the states in order of increasing interval size.
\end{proposition}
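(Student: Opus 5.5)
The argument is a direct count of states and per-state work, followed by a check that the recurrence~\eqref{eq:m3-dp} respects a suitable evaluation order. For the state count, I will bound each coordinate separately. The stack-2 interval $[j_2..k_2]$ admits $O(q_2^2)$ choices, counting the empty intervals $k_2=j_2-1$. Likewise, the stack-3 interval admits $O(q_3^2)$ choices. The window $[\lo,\hicmd]\subseteq[0,m_1]$ contributes $O(m_1^2)$ choices. The remaining coordinate $B$ counts stack-2 targets retrieved before the subproblem, so $0\le B\le q_2$, giving $O(q_2)$ values. Multiplying these bounds gives $O(q_2^3q_3^2m_1^2)$ states.

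For the per-state work, I will bound each case of~\eqref{eq:m3-dp} by the number of free parameters it minimizes over. Cases (I2) and (I3) range only over $v$, so each costs $O(m_1)$. Case (D2) ranges over $d_2\in(j_2..k_2]$, $d_3\in[j_3-1..k_3]$ and $v$, which gives $O(q_2q_3m_1)$ triples. Case (D3) is symmetric and has the same bound. In all four cases, the remaining quantities ($K_2$, the feasibility bounds $\alpha_{j_2}+(d_2-j_2)$, $\beta_{j_3}+(d_3-j_3)$ and $\gamma_{j_3}+(d_3-j_3)$, and the additive cost terms) are computed in $O(1)$ time, assuming unit-cost arithmetic on integers of input size. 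Each state therefore costs $O(q_2q_3m_1)$. Multiplying by the state count gives $O(q_2^4q_3^3m_1^3)$. Since $q_2,q_3,m_1\le n$, this is at most $O(n^{10})$.

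It remains to justify the evaluation order, which I expect to be the only step needing care. The state contains the counter $B$ and the window, neither of which shrinks, so the order must be driven by interval size alone. I will define the size of a state as $(k_2-j_2+1)+(k_3-j_3+1)$ and show that every subproblem referenced on the right-hand side has strictly smaller size. In (I2) and (I3), exactly one target is removed. In (D2), the block $[j_2{+}1..d_2]\times[j_3..d_3]$ and the suffix $[d_2{+}1..k_2]\times[d_3{+}1..k_3]$ are disjoint and together exclude $j_2$, so each has size at most one less than the parent. Case (D3) is handled the same way. The recurrence therefore refers only to strictly smaller sizes, whatever the values of $B$ and the window. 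Base states of size $0$ have value $0$, and states whose windows admit no feasible $v$ evaluate to $+\infty$ by convention. Tabulating in order of increasing size, with any order among states of equal size, thus computes every entry, including $f(1,q_2,1,q_3,0,m_1,0)$, within the stated bound.
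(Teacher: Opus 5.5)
Your proposal is correct and follows the paper's proof essentially step for step: you use the same coordinate-wise state count, the same $O(m_1)$ and $O(q_2q_3m_1)$ bounds for the individual cases, and the same evaluation in order of increasing interval size. Your explicit check that every referenced subproblem contains strictly fewer targets, regardless of $B$ and the window, spells out what the paper states in one sentence.
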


\begin{proof}
The interval of stack-2 targets, interval of stack-3 targets, value window, and parameter $B$
have $O(q_2^2)$, $O(q_3^2)$, $O(m_1^2)$, and $O(q_2)$ possible choices,
respectively, giving $O(q_2^3q_3^2m_1^2)$ states. Each case in
recurrence~\eqref{eq:m3-dp} refers only
to states with fewer remaining targets, so the states can be evaluated
in increasing interval size. Cases \textnormal{(I2)} and
\textnormal{(I3)} examine $O(m_1)$ values of $v$, whereas cases
\textnormal{(D2)} and \textnormal{(D3)} examine $O(q_2q_3m_1)$ triples
$(d_2,d_3,v)$. The total running time is therefore
\[
O(q_2^3q_3^2m_1^2)\cdot O(q_2q_3m_1)
=
O(q_2^4q_3^3m_1^3)
\subseteq O(n^{10}).
\]
\end{proof}

\begin{proposition}[Soundness of the dynamic program]\label{prop:assembly}
\leavevmode\newline
If the initial-state value $f(1,q_2,1,q_3,0,m_1,0)$ is finite, then there
exists a feasible schedule of that cost. Hence,
$\mathrm{DP}\ge\OPT$.
\end{proposition}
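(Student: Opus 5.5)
We prove a stronger, state-wise statement by induction on the number of targets in the state, namely
\[
|[j_2..k_2]|+|[j_3..k_3]|.
\]
The claim is this. Suppose $f(j_2,k_2,j_3,k_3,\lo,\hicmd,B)<\infty$. Then there is an order $\sigma$ of the targets in the two intervals, with values $v_b\in[\lo,\hicmd]$ nondecreasing along $\sigma$, with the following properties.

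\begin{enumerate}
\item Each target satisfies~\eqref{eq:m3-feasibility}. Here $r_b$ is computed \emph{within} the subproblem, and $\kk_2(k)$ is replaced by $B$ plus the number of stack-2 targets of the subproblem preceding $k$.
\item Its cost, computed by~\eqref{eq:m3-cost} with the same substitutions, equals $f$.
\end{enumerate}

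Applied to the initial state $(1,q_2,1,q_3,0,m_1,0)$, the local quantities coincide with the global ones. Theorem~\ref{thm:reduction} then converts the M3 solution into a feasible retrieval order of penalty equal to the DP value, so $\mathrm{DP}\ge\OPT$.

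\textbf{Base and immediate cases.} The base case is an empty state, which is trivial. For (I2), take the minimizing $v$ and place $j_2$ first with value $v$. The rest is the inductively built schedule of the child state $(j_2{+}1,\dots,v,\hicmd,B{+}1)$, whose values lie in $[v,\hicmd]$, so monotonicity holds. The target $j_2$ has $r_{j_2}=0$ and $v\le\min(m_1,\alpha_{j_2})$, so its constraint holds. Targets in the child keep their inversion counts, because $j_2$ precedes them and lies above them. Their $\kk_2$ offset increases by exactly one, which is what $B{+}1$ records. The cost adds $(m_1-v)$. Case (I3) is analogous: $k=j_3$ sees exactly $B$ stack-2 targets before it, giving the term $(q_2-B)$ and the checked bound $B\le\min(q_2,\gamma_{j_3})$. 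The child's $B$ is unchanged.

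\textbf{Delayed cases.} This is the main obstacle, because it needs bookkeeping across the three pieces. For (D2), concatenate three parts in order: the block schedule on $([j_2{+}1..d_2],[j_3..d_3])$ with values in $[\lo,v]$, then $j_2$ with value $v$, then the suffix schedule on $([d_2{+}1..k_2],[d_3{+}1..k_3])$ with values in $[v,\hicmd]$. We check four things.
\begin{enumerate}
\item \emph{Values are monotone}, by the window nesting.
\item \emph{The block is unaffected.} Block targets keep their local $r$ and $\kk_2$ offset $B$, because $j_2$ comes later and nothing precedes them.
\item \emph{The delayed target.} $j_2$ has $r_{j_2}=d_2-j_2$ exactly, since all of $[j_2{+}1..d_2]$ precedes it and none of $[d_2{+}1..k_2]$ does. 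The checked bound $v\le\alpha_{j_2}+(d_2-j_2)$ therefore gives feasibility, and the cost term is $2(d_2-j_2)+(m_1-v)$.
\item \emph{The suffix.} Every suffix target lies strictly below all earlier-retrieved targets of its stack. Indeed, block and delayed indices are at most $d_2$ (resp.\ $d_3$), while suffix indices exceed them. So no new inversions arise for suffix targets, and their local $r$ is unchanged. Each suffix stack-3 target has exactly $B+(d_2-j_2+1)=K_2$ stack-2 targets before it from outside the suffix, which matches the child's parameter.
\end{enumerate}

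Case (D3) follows the same pattern. The delayed target $j_3$ has $r=d_3-j_3$, and $\kk_2(j_3)=B+(d_2-j_2+1)=K_2$. The checked constraints $v\le\beta_{j_3}+(d_3-j_3)$ and $K_2\le\min(q_2,\gamma_{j_3}+(d_3-j_3))$ give feasibility, and the cost term is $3(d_3-j_3)+(m_1-v)+(q_2-K_2)$. The suffix inherits offset $K_2$, because $j_3$ is not a stack-2 target.

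\textbf{Cost accounting.} Summing the per-target contributions of the three pieces reproduces the recurrence value exactly. Each target's cost depends only on its own $r_b$, $v_b$ and $\kk_2(b)$, and all of these were shown to match the child states. The one point requiring care is that no cross-inversion between the block and the suffix is omitted. This is guaranteed by the index-interval structure: the block uses initial segments and the suffix uses final segments of the remaining intervals.
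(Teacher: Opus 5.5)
Your proof is correct and follows essentially the same route as the paper: induction on the number of targets in a state, verifying cases (I2), (I3), (D2), and (D3) by checking three things. First, the index-interval structure leaves inversion counts unchanged. Second, the stack-2 offset ($B$, $B{+}1$, or $K_2$) is passed on correctly. Third, the checked bounds give feasibility and the added terms reproduce the recurrence value. The only difference is presentational: you state the invariant locally, with inversions and $\kappa_2$ computed within the subproblem with offset $B$, and invoke Theorem~\ref{thm:reduction} at the root, whereas the paper phrases the same invariant in terms of the complete retrieval order.
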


The proof is given in Appendix~\ref{app:dp}.

The parameter $B$ cannot be inferred from the left endpoint $j_2$ of
the interval of stack-2 targets. When the uppermost stack-2 target is delayed, it
is excluded from the block subproblem but remains unretrieved until the
block is completed. Consequently, setting $B=j_2-1$ overcounts the
number of stack-2 targets retrieved before the block.
Example~\ref{ex:bstate} shows that this substitution can cause the
dynamic program to return a value below the optimum.

\begin{example}[failure of the substitution $B=j_2-1$]\label{ex:bstate}
Consider the instance with parameters
$m_1=4,\qquad \alpha=(2,3,5),\qquad \beta=(-1,0),\qquad \goff=4$.
The optimum is $20$, whereas the dynamic program obtained by setting
$B=j_2-1$ in every state returns $19$.
\end{example}

\begin{proposition}[Minimum over the explored solution space]\label{prop:parsing-min}
For every instance of problem M3, the dynamic program returns the
minimum cost among all feasible solutions whose retrieval orders it
explores. If no such solution exists, it returns $+\infty$.
\end{proposition}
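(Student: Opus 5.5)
We argue by strong induction on the number of targets in a state, proving a two-sided statement for every state $(j_2,k_2,j_3,k_3,\lo,\hicmd,B)$. Let $\mathcal{D}$ be the set of orders on the targets $[j_2..k_2]\cup[j_3..k_3]$ that are generated by the productions of Definition~\ref{def:productions}. Together with value assignments in $[\lo,\hicmd]$, these orders are feasible for the subproblem when exactly $B$ stack-2 targets precede it. The cost of such a subproblem solution is the restriction of \eqref{eq:m3-cost} to these targets. In this restriction, $\kk_2(k)$ is counted as $B$ plus the stack-2 targets of the subproblem retrieved before $k$, and inversions are counted only inside the subproblem. Feasibility is \eqref{eq:m3-feasibility} with the same convention. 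The claim is that $f$ equals the minimum of this cost over $\mathcal{D}$, and that it equals $+\infty$ if no such solution exists. The proposition is then the initial state $(1,q_2,1,q_3,0,m_1,0)$. There $B=0$ and the windows are the full range, so the subproblem notion coincides with feasible M3 solutions whose orders the dynamic program explores.

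The inequality $f\ge$ minimum over $\mathcal{D}$ is essentially Proposition~\ref{prop:assembly} applied at every state. We take the minimizing case of \eqref{eq:m3-dp} and, by induction, concatenate the optimal block solution, the chosen target with value $v$, and the suffix solution. We then check three things. First, values are monotone because the windows are $[\lo,v]$, $\{v\}$ and $[v,\hicmd]$. Second, the counters are correct: the block still sees $B$, and the suffix sees $B+1$ or $K_2$. Third, the added terms ($2(d_2-j_2)$, $3(d_3-j_3)$, $(m_1-v)$, $(q_2-B)$ or $(q_2-K_2)$) are exactly the cost of the delayed or immediate target.

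The reverse inequality uses the following decomposition. Take any solution in $\mathcal{D}$. By definition of $\mathcal{D}$, its outermost production either retrieves the uppermost target of one stack first, or delays it behind a block $[j_s{+}1..d_s]$ together with a prefix $[j_{s'}..d_{s'}]$ of the other stack, possibly empty. We read off $v$ as the value of that target and $d_2,d_3$ from the block. The restrictions to the block and to the suffix lie in $\mathcal{D}$ for the corresponding smaller states, and their values lie in the windows. We must also show that the total cost splits additively, with no cross terms. This is where the interval structure matters, as in the remark after Proposition~\ref{prop:decomp}. Every block target of a stack lies above every suffix target of that stack, so no inversion crosses the split. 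The inversion count of the delayed target is exactly $d_s-j_s$. For a stack-3 target in the suffix, $\kk_2$ equals $K_2$ plus the suffix stack-2 targets before it. Feasibility of the delayed target is then exactly the side constraint of (D2) or (D3), including $K_2\le\min(q_2,\gamma_{j_3}+(d_3-j_3))$. Feasibility of the parts follows from unchanged caps and consistent counters. The induction hypothesis then bounds each part's cost below by its $f$-value, so the total is at least the corresponding case term.

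The main obstacle is this additivity and counter bookkeeping, in particular the correct $B$ for the block and the suffix. Example~\ref{ex:bstate} shows that this is exactly where an error would arise. We would treat it carefully, verifying each of the four cases separately. We would also handle the infeasible situation: if no solution in $\mathcal{D}$ exists, every case is $+\infty$ by induction, and conversely, so $f=+\infty$ precisely then.
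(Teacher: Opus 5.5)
Your proposal is correct and follows essentially the paper's argument. One direction is the soundness construction of Proposition~\ref{prop:assembly}. The other is an induction over a fixed recursive decomposition, in which the interval structure makes inversion counts, the stack-2 counter ($B$, $B+1$, or $K_2$), and the value windows pass consistently to the block and the suffix. The only difference is organizational: you state both directions as one state-level identity $f=\min_{\mathcal{D}}$, whereas the paper cites soundness at the initial state and runs a separate induction for the upper bound.
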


The proof is given in Appendix~\ref{app:dp}.

Proposition~\ref{prop:parsing-min} establishes that the dynamic program
computes a minimum-cost solution over the recursively decomposable orders.
We conjecture that this class contains an optimal solution for every
feasible realizable instance of problem M3.

\begin{conjecture}[Completeness of the recursive decomposition]\label{lem:aweak}
For every feasible order \(\pi\) of a realizable instance of problem M3, there
exists a recursively decomposable feasible order whose cost
does not exceed the cost of \(\pi\).
\end{conjecture}

Across verification test sets with $q_2+q_3\leq14$, no feasible order was
found with cost below the recursive-decomposition minimum. The dynamic
program also agreed with exhaustive enumeration on 3,268 instances with
$q_2+q_3\leq13$, with zero mismatches.

\section{Computational Study}\label{sec:study}

We describe the experimental setup (Section~\ref{sec:study-setup}),
compare running times (Section~\ref{sec:study-runtime}), and measure
energy savings over greedy retrieval (Section~\ref{sec:study-savings})
and from batching (Section~\ref{sec:study-price}).

\subsection{Experimental Setup}\label{sec:study-setup}

As operational data are not available, our experiments use synthetic
instances. The storage layouts and target positions are generated
randomly, following the procedure of \citet{SACRP-R1}. We consider two
and three stacks with $n\in\{6,12,18,24,30,36,48,60\}$ targets.
For each stack containing at least one target, we calculate the proportion
of target bins. The instance's target density is the average of these
proportions. We distinguish between three target densities, namely,
low (below $0.50$), medium ($0.50$ to $0.70$), and high (above $0.70$).
We retain a generated
instance only if it is feasible. Moreover, every stack contains at
least one bin, and the instance's target density falls within the
specified range. We generate 30 instances for each of the 48 parameter
combinations, yielding 1,440 instances in total.

All algorithms are implemented in C\# and run under .NET~8 on a Windows
workstation with an AMD Ryzen Threadripper PRO 9965WX 24-core processor. Each algorithm runs in
a separate process restricted to a single core, with a time limit of
600 seconds and a memory limit of 12\,GiB.

\subsection{Computational Performance}\label{sec:study-runtime}\label{sec:study-verify}

We first examine the computational benefits of exploiting the structural
properties established in this paper. We compare the two-stack DP of
Section~\ref{sec:m2} and the three-stack DP of
Section~\ref{sec:dp-section} with an exact exponential-time dynamic
program whose states represent subsets of retrieved targets, referred
to as the \emph{subset DP}. Table~\ref{tab:completion} reports the number
of instances solved within the computational limits and the average
running time over all 90 instances, with timeouts recorded at 600
seconds. Figure~\ref{fig:runtime} illustrates how average running times
increase with the number of targets.

\begin{table}[htbp]
\TABLE
{Computational performance by number of stacks and targets. Each row
contains 90 instances across the three target densities. ``Completed''
gives the number of runs completed within the computational limits.
The time limit is set at 600 seconds.\label{tab:completion}}
{\begin{tabular}{r rr rr rr}
\toprule
 & \multicolumn{2}{c}{Two-/three-stack DP} & \multicolumn{2}{c}{Subset DP}
 & \multicolumn{2}{c}{Batched DP}\\
\cmidrule(lr){2-3}\cmidrule(lr){4-5}\cmidrule(lr){6-7}
$n$ & Completed & \shortstack{Average\\time [s]}
 & Completed & \shortstack{Average\\time [s]}
 & Completed & \shortstack{Average\\time [s]}\\
\midrule
\multicolumn{7}{l}{\emph{Two stacks}}\\
 6 & 90 & 0.004 & 90 &   0.001 & 90 &   0.001\\
12 & 90 & 0.004 & 90 &   0.009 & 90 &   0.028\\
18 & 90 & 0.004 & 90 &   0.396 & 90 &   1.507\\
24 & 90 & 0.006 & 90 &  16.116 & 77 & 312.921\\
30 & 90 & 0.008 &  0 & 600.000 &  0 & 600.000\\
36 & 90 & 0.017 &  0 & 600.000 &  0 & 600.000\\
48 & 90 & 0.060 &  0 & 600.000 &  0 & 600.000\\
60 & 90 & 0.218 &  0 & 600.000 &  0 & 600.000\\
\addlinespace
\multicolumn{7}{l}{\emph{Three stacks}}\\
 6 & 90 &   0.002 & 90 &   0.001 & 90 &   0.001\\
12 & 90 &   0.003 & 90 &   0.010 & 90 &   0.031\\
18 & 90 &   0.021 & 90 &   0.420 & 90 &   2.186\\
24 & 90 &   0.273 & 90 &  17.971 & 55 & 425.641\\
30 & 90 &   1.068 &  0 & 600.000 &  0 & 600.000\\
36 & 90 &   5.609 &  0 & 600.000 &  0 & 600.000\\
48 & 90 & 114.380 &  0 & 600.000 &  0 & 600.000\\
60 & 22 & 532.928 &  0 & 600.000 &  0 & 600.000\\
\bottomrule
\end{tabular}}
{}
\end{table}

\begin{figure}[htbp]
\FIGURE
{\includegraphics[width=\linewidth]{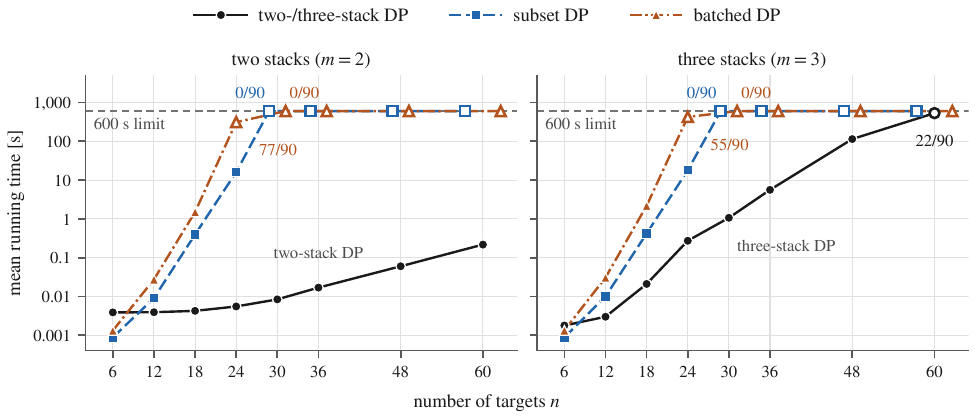}%
}
{Average running times by number of targets for two stacks (left) and
three stacks (right), on a logarithmic time axis. Each point averages
90 instances, with runs that did not complete within the computational
limits counted at 600 seconds (dashed line). Filled markers indicate
that all runs completed; hollow markers indicate otherwise. Annotations
give the number of completed runs out of 90.\label{fig:runtime}}
{}
\end{figure}

Both dynamic programs found optimal solutions for all instances with at
most 24 targets, matching the subset DP on all 360 instances for each
stack count. For the two-stack DP, this confirms the correctness of the
implementation. For the three-stack DP, these results provide further support for
Conjecture~\ref{lem:aweak}.

The two-stack DP solved all instances, with average runtimes increasing
from 0.004 seconds for 6 targets to 0.218 seconds for 60 targets. The
three-stack DP completed all runs with up to 48 targets, but its
runtimes increased more sharply, averaging 114.4 seconds at that size.
For 60 targets, it completed 22 of the 90 runs, with an average runtime
of 532.9 seconds including timeouts. In comparison, the subset DP
required an average of 16.1 and 18.0 seconds for 24 targets with two and
three stacks, respectively, and reached the time limit on all larger
instances.

\subsection{Energy Savings Relative to Greedy Retrieval}\label{sec:study-savings}

In this section, we compare optimized retrieval orders with a
greedy heuristic based on the approach of \citet{SACRP-R1}, modified to
retrieve exactly one target per cycle. At each step, the
heuristic retrieves the highest accessible target whose removal
preserves feasibility of the remaining instance, breaking ties by
choosing the target in the rightmost stack. The pseudocode of the
greedy heuristic is given in Algorithm~\ref{alg:greedy}.

\begingroup
\setlength{\intextsep}{0pt}
\begin{algorithm}[H]
\caption{Greedy heuristic}\label{alg:greedy}
\begingroup
\SingleSpacedXII
\begin{algorithmic}[1]
\Require A feasible instance of the retrieval sequencing problem
\State Initialize the retrieved set $X\gets\emptyset$ and retrieval order $\pi\gets()$
\While{$X\neq\mathcal B$}
    \State Let $A$\textsuperscript{*} contain every accessible $b\in\mathcal B\setminus X$ whose retrieval preserves feasibility of the remaining instance
    \State Select from $A$ a target $b$ of maximum current height, breaking ties by maximum $s(b)$
    \State Retrieve $b$, set $X\gets X\cup\{b\}$, and append $b$ to $\pi$
\EndWhile
\State \Return $\pi$
\end{algorithmic}
\endgroup
\end{algorithm}
\endgroup
\begingroup
\SingleSpacedXII
\noindent{\footnotesize \textsuperscript{*}The set $A$ is nonempty at every iteration because the remaining instance is feasible. Hence, the heuristic returns a feasible retrieval order.}\par
\endgroup

For each instance, we calculate the energy saving as
$100(E_{\mathrm G}-E_{\mathrm S})/E_{\mathrm G}$, where $E_{\mathrm G}$
is the energy expenditure of the greedy heuristic and $E_{\mathrm S}$
is the minimum energy expenditure for single-target retrieval. The
penalties returned by the two-stack and three-stack DPs are converted
to energy expenditure using Lemma~\ref{lem:penalty}. We consider the 720 instances
with at most 24 targets, for which the subset DP establishes the optimum.
On all these instances, $E_{\mathrm G}>0$.
Table~\ref{tab:energy-reductions} reports the average and median energy
savings by number of stacks and target density (see the columns under
``Saving over greedy'').

\begin{table}[htbp]
\TABLE
{Energy savings (\%) by number of stacks and target density.
``Saving over greedy'' compares optimal single-target retrieval with
the greedy heuristic ($n\leq24$). ``Reduction from batching'' compares
optimal batched retrieval with optimal single-target retrieval
($n\leq18$). $N$ denotes the number of
instances included in each average and median.\label{tab:energy-reductions}}
{\setlength{\tabcolsep}{8pt}%
\begin{tabular}{ll rrr@{\qquad}rrr}
\toprule
 & & \multicolumn{3}{c}{Saving over greedy}
 & \multicolumn{3}{c}{Reduction from batching\textsuperscript{*}}\\
 & & \multicolumn{3}{c}{(through 24 targets)}
 & \multicolumn{3}{c}{(through 18 targets)}\\
\cmidrule(lr){3-5}\cmidrule(lr){6-8}
Stacks & Target density & $N$ & Average & Median & $N$ & Average & Median\\
\midrule
2 & Low    & 120 & 13.61 & 13.57 &  90 & 28.77 & 31.83\\
2 & Medium & 120 & 22.68 & 22.46 &  90 & 35.86 & 40.00\\
2 & High   & 120 & 46.81 & 43.17 &  82 & 48.71 & 53.39\\
2 & All    & 360 & 27.70 & 22.60 & 262 & 37.45 & 36.51\\
\addlinespace
3 & Low    & 120 & 18.30 & 17.79 &  90 & 24.34 & 24.18\\
3 & Medium & 120 & 32.49 & 30.46 &  89 & 38.48 & 40.00\\
3 & High   & 120 & 59.11 & 55.56 &  73 & 46.44 & 50.00\\
3 & All    & 360 & 36.63 & 30.30 & 252 & 35.74 & 33.84\\
\addlinespace
\multicolumn{2}{l}{Overall} & 720 & 32.17 & 26.17 & 514 & 36.61 & 35.71\\
\bottomrule
\end{tabular}}
{\footnotesize \textsuperscript{*}Of the 90 instances per combination of
stack count and target density, those with zero optimal single-target
energy expenditure are excluded from the percentages: 8 (two stacks,
high), 1 (three stacks, medium), and 17 (three stacks, high); 26 in total.}
\end{table}

Optimizing the retrieval order reduces energy expenditure by $32.17\%$
on average compared with the greedy heuristic, with a median saving of
$26.17\%$. The optimal solution improves on the greedy solution in 719
of the 720 instances and has the same energy expenditure in one. The
average saving increases with target density from $13.61\%$ to
$46.81\%$ for two stacks, and from $18.30\%$ to $59.11\%$ for three
stacks. At every target density, the average saving is higher for three
stacks. Thus, optimizing the retrieval order is particularly beneficial
on the tested instances with high target density.

For two stacks, the average energy saving is $25.37\%$ and the median
is $21.83\%$ across all 720 two-stack instances. For three stacks with
30 to 60 targets, the DP completed 292 runs. In every completed run,
the resulting retrieval order consumed less energy than the greedy
solution, with an average saving of $30.51\%$ and a median of
$28.02\%$.

\subsection{Energy Benefit of Batching}\label{sec:study-price}

In this section, we allow several targets to be retrieved in one cycle,
as in the batched retrieval formulation of \citet{SACRP-R1}. We examine
the additional energy savings from batching and the computation time
required to obtain them. For batched retrieval, we use their original
dynamic programming implementation, including its dominance rules.
Note that every
single-target retrieval order is also a feasible batched solution, with
one target retrieved per cycle. Hence, the minimum energy expenditure
under batched retrieval cannot exceed that under single-target retrieval.

We consider the 540 instances with at most 18 targets, for which both
the subset DP and the batched DP found optimal solutions. For each
instance, we calculate the additional energy saving from batching as
$100(E_{\mathrm S}-E_{\mathrm B})/E_{\mathrm S}$, where $E_{\mathrm B}$
denotes the minimum energy expenditure under batched retrieval. Both
optimal energy expenditures are zero on 26 instances. We exclude these
instances from the percentages, leaving 514 instances.
Table~\ref{tab:energy-reductions} reports the average and median savings
by number of stacks and target density (see the columns under
``Reduction from batching'').

Batching reduces energy expenditure by $36.61\%$ on average, with a
median saving of $35.71\%$. The optimal batched solution improves on
the optimal single-target solution in 474 of the 540 instances and has
the same energy expenditure in 66. The average saving increases with
target density from $28.77\%$ to $48.71\%$ for two stacks, and from
$24.34\%$ to $46.44\%$ for three stacks. Overall, the average savings
are similar for both stack counts, at $37.45\%$ and $35.74\%$,
respectively.

At 24 targets, both single-target DPs completed all instances in under
0.3 seconds on average, whereas the batched DP completed only 77 of 90
two-stack instances and 55 of 90 three-stack instances within the
computational limits (Table~\ref{tab:completion}). The batched DP
completed no run with 30 or more targets.

\section{Conclusion}\label{sec:conclusion}
\label{sec:extensions}

Developing efficient retrieval policies for a recently introduced storage
system requires understanding which decisions can be optimized efficiently
and which interactions make the problem difficult. The single-target case
provides a starting point for this analysis by separating retrieval
sequencing from the grouping of targets into cycles. Our results show
that the accessibility and energy effects of side access permit a
structural restriction: every optimal retrieval order is laminar within
each stack, regardless of the number of stacks. Together with the
pairwise representation of energy expenditure, this structure yields a
polynomial-time algorithm for two stacks. For three stacks, the recursive
decomposition provides a way to coordinate retrievals across stacks and
optimize over the resulting class of orders in polynomial time. The
within-stack theorem therefore supplies a basis for algorithm design that
extends beyond the two-stack case.

The computational results show the practical value of exploiting this
structure. On all tested instances with at most 24 targets, both dynamic
programs matched the exact subset dynamic program, and optimizing the
retrieval order reduced energy expenditure by 32.17\% on average relative
to the greedy heuristic. At 24 targets, both algorithms required less
than 0.3 seconds on average. These results suggest that accounting for the
effects of retrieval decisions can yield substantial energy savings
within short computation times. Batching offers further savings,
averaging 36.61\% on the instances with at most 18 targets and positive
optimal single-target energy expenditure, but requires substantially more
computation time. Thus, the single-target case also provides a reference
for assessing the additional energy benefit and computational effort
associated with batching.

Whether the recursive decomposition contains an optimal retrieval order
for every feasible three-stack instance remains open. Establishing this
would extend polynomial-time solvability from two to three stacks. For
larger numbers of stacks, the within-stack structure established here
continues to restrict optimal retrieval orders, but exploiting it
algorithmically requires further understanding of the interactions
between stacks. This raises the question of whether retrieval sequencing
is polynomially solvable for every fixed number of stacks, and whether it
becomes NP-hard when the number of stacks is part of the input. Resolving
these questions would clarify how far the computational advantage of the
single-target case extends. The computational results also motivate
studying retrieval policies that limit the number of targets per cycle
and connect the single-target case with batched retrieval. Studying how
this limit affects energy expenditure and computation time would help
determine how much of the energy benefit of batching can be retained
while allowing faster optimization.

\bibliographystyle{informs2014}
\bibliography{refs}

\clearpage

\begin{APPENDICES}

\section*{Appendix}\label{app:proofs}
\renewcommand{\thesubsection}{A.\arabic{subsection}}

\subsection{Stack-Local Promotion for Every $m$}\label{app:laminarity}

We use the notation of Section~\ref{sec:laminarity}. If $x<y$ are targets
in stack $s$, then $x$ is at least $y-x$ levels above $y$. Consequently,
for every preceding stack $t<s$,
\begin{equation}\label{eq:gradient}
c_{x,t}\le c_{y,t}-(y-x).
\end{equation}

\begin{proof}[Proof of Lemma~{\normalfont\ref{lem:stackswap}}]
Since $b\in B_x$ and $a\notin B_x$, with $a>x$, we have
$b\prec_\pi x\prec_\pi a$. Moreover, since $b$ is retrieved last among the
targets in $\{a+1,\ldots,M\}$, every other target in this set is retrieved
before $b$.

\emph{Step 1 (Targets between the exchanged positions).} Let
$w\notin\{a,b\}$ be a target in stack $s$ that is retrieved strictly between
$b$ and $a$. We show that $w<a$ or $w>M$.

If $w=x$, the claim follows from $x<a$. Suppose first that $w\prec_\pi x$.
If $w>a$, then $w\in B_x\cap\{a+1,\ldots,M\}$. Since $b\prec_\pi w$, this
contradicts the choice of $b$ as the last target in $\{a+1,\ldots,M\}$
retrieved under $\pi$. Hence, $w<a$.

Now suppose that $x\prec_\pi w$. If $w<x$, then $w<a$. Otherwise, $w>x$ and
$w\notin B_x$. Because every target in $\{a+1,\ldots,M\}$ belongs to $B_x$,
and $w\neq a$, it follows that $w<a$ or $w>M$.

\emph{Step 2 (Effects on all other targets).} Consider any target
$w\notin\{a,b\}$. If $w$ belongs to a stack other than $s$, the exchange does
not change the relative order of the targets in its stack. Therefore, the
number $r_w$ of lower targets retrieved before $w$ remains unchanged.

Now suppose that $w$ belongs to stack $s$. The set of targets from stack $s$
retrieved before $w$ can change only if $w$ is retrieved strictly between
$b$ and $a$. In that case, $b$ is replaced by $a$ in this set. By Step~1,
either $w<a<b$ or $w>M\ge b$. In the first case, both $a$ and $b$ are below
$w$; in the second case, both are above $w$. Hence, replacing $b$ by $a$ does
not change the number of lower targets retrieved before $w$, and $r_w$
remains unchanged.

Furthermore, because $a$ and $b$ both belong to stack $s$, exchanging them
does not change the stack from which the target at any position in the
retrieval order is retrieved. Consequently, for every target
$w\notin\{a,b\}$ and every stack $t$, the number $\kappa_t(w)$ of stack-$t$
targets retrieved before $w$ remains unchanged. It follows that every such
target remains feasible and that its contribution to the penalty is
unchanged.

\emph{Step 3 (Inversion counts of the exchanged targets).} Let $\rho$ denote
the number of targets $z$ in stack $s$ with $z>M$ that are retrieved before
$a$. We first compute the inversion counts of $a$ and $b$ under $\pi$. Since
$b\prec_\pi x$, every target $z>b$ in stack $s$ that is retrieved before $b$
also belongs to $B_x$, and therefore satisfies $z\le M$. Conversely, because
$b$ is the last target in $\{a+1,\ldots,M\}$ retrieved under $\pi$, every
target in $\{b+1,\ldots,M\}$ is retrieved before $b$. Hence, $r_b=M-b$.
Moreover, every target in $\{a+1,\ldots,M\}$ is retrieved before $x$, and
therefore before $a$. Together with the $\rho$ targets with index greater
than $M$, this gives $r_a=M-a+\rho$.

Let $r_a'$ and $r_b'$ denote the inversion counts after the exchange. At its
new position, $a$ is preceded by every target in
$\{a+1,\ldots,M\}\setminus\{b\}$. No target $z>M$ precedes $a$ at its
new position, since any such target would belong to $B_x$ under $\pi$,
contradicting $M=\max B_x$. Therefore, $r_a'=M-a-1$. At its new
position, $b$ is preceded by every target in $\{b+1,\ldots,M\}$ and by the
$\rho$ targets with index greater than $M$ that were retrieved before $a$.
Consequently, $r_b'=M-b+\rho$, and hence
$r_a'+r_b'=r_a+r_b-1$.

\emph{Step 4 (Feasibility after the exchange).} Fix a preceding stack $t<s$.
After the exchange, $a$ is retrieved at the former position of $b$, so the
number of stack-$t$ targets retrieved before $a$ is $\kappa_t(b)$. Similarly,
the number of stack-$t$ targets retrieved before $b$ at its new position is
$\kappa_t(a)$. Both counts are at most $m_t$. It therefore remains to verify
the corresponding cap bounds.

First, consider target $a$. Since $b\prec_\pi x$, we have
$\kappa_t(b)\le\kappa_t(x)$. The feasibility of $x$ gives
$\kappa_t(x)\le c_{x,t}+r_x$. Moreover,
$B_x\subseteq\{x+1,\ldots,M\}\setminus\{a\}$, and hence $r_x\le M-x-1$.
Applying inequality~\eqref{eq:gradient} to $x<a$ gives
$c_{x,t}\le c_{a,t}-(a-x)$. Consequently,
$\kappa_t(b)\le c_{a,t}+M-a-1=c_{a,t}+r_a'$, so $a$ is feasible at its new
position.

Now consider target $b$. The feasibility of $a$ under $\pi$ gives
$\kappa_t(a)\le c_{a,t}+r_a=c_{a,t}+M-a+\rho$. Applying
inequality~\eqref{eq:gradient} to $a<b$ gives
$c_{a,t}\le c_{b,t}-(b-a)$. Therefore,
$\kappa_t(a)\le c_{b,t}+M-b+\rho=c_{b,t}+r_b'$, so $b$ is also feasible at
its new position.

Since $t<s$ was arbitrary, both exchanged targets satisfy every feasibility
constraint under $\pi'$.

\emph{Step 5 (Change in energy and the inversion set of $x$).} By Step~2,
every target other than $a$ and $b$ retains the same contribution to the
penalty. After the exchange, $a$ has the values $\kappa_t(b)$, while $b$ has
the values $\kappa_t(a)$, for every preceding stack $t<s$. Because $a$ and
$b$ belong to the same stack, the same preceding stacks enter their penalty
terms. Hence, by~\eqref{eq:penalty-element}, their combined cross-stack
contribution is unchanged.

By Step~3, $r_a'+r_b'=r_a+r_b-1$. Each within-stack inversion in stack $s$
contributes $s$ to the penalty, so the combined within-stack contribution of
$a$ and $b$ decreases by exactly $s$. Therefore,
$P(\pi')=P(\pi)-s$. Because $E=K-\binom{n}{2}+P$
by~\eqref{eq:penalty-identity} and the remaining terms are
independent of the retrieval order, the same decrease holds for the energy.

Finally, the exchange changes the relative order of $x$ only with respect to
$a$ and $b$: target $b$ moves from before $x$ to after $x$, while $a$ moves
from after $x$ to before $x$. Therefore, $\pi'$ is feasible, satisfies
$E(\pi')=E(\pi)-s$, and the inversion set of $x$ under $\pi'$ is
$(B_x\setminus\{b\})\cup\{a\}$.
\end{proof}

\subsection{The Two-Stack Reduction and Decomposition}\label{app:m2}

\begin{lemma}[Monotonicity]\label{lem:mono}
For a fixed set of targets, increasing one or more caps cannot increase
$\OPT$. Likewise, replacing the window $[\lo,\hicmd]$ by
$[\lo',\hicmd']$, where $\lo'\le \lo\le \hicmd\le \hicmd'$,
cannot increase $\OPT$.
\end{lemma}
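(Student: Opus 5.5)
The plan is to show that every solution feasible in the original windowed M2 instance stays feasible, with unchanged cost, after the caps are raised or the window is widened. Then $\OPT$, a minimum over a set that can only grow, cannot increase. I work with the windowed formulation from Section~\ref{sec:m2-exchange}. A solution $(\sigma,v)$ on a fixed target set is feasible if three conditions hold:
\[
\text{the values }v_x\text{ are nondecreasing along }\sigma,\qquad
\lo\le v_x\le\min(m_1,\hicmd,c_x+r_x)\quad\text{for every }x .
\]
The cost is
\[
\sum_x(m_1-v_x)+2\sum_x r_x .
\]

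For the cap statement, fix the target set and the window, and let $c'_x\ge c_x$ for every $x$. Take any feasible $(\sigma,v)$ under the caps $c$. The inversion counts $r_x$ depend only on $\sigma$ and the fixed indexing of the targets, so they do not change. The monotonicity of $v$ along $\sigma$ and the window constraints do not involve the caps at all. The cap constraint survives because $c_x+r_x\le c'_x+r_x$, so
\[
v_x\le\min(m_1,\hicmd,c_x+r_x)\le\min(m_1,\hicmd,c'_x+r_x).
\]
The cost involves only $v$ and $r$, so it is identical under both cap vectors. Taking the minimum over feasible solutions therefore gives $\OPT$ under $c'$ at most $\OPT$ under $c$. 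If the original instance is infeasible, its $\OPT$ is $+\infty$ and the inequality holds trivially.

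For the window statement, suppose $\lo'\le\lo\le\hicmd\le\hicmd'$. Any feasible $(\sigma,v)$ for $[\lo,\hicmd]$ has $v_x\ge\lo\ge\lo'$, and its upper bound loosens because $\min(m_1,\hicmd,\cdot)\le\min(m_1,\hicmd',\cdot)$. The order, the monotonicity of $v$, and the cost are untouched, so the same minimization argument gives the claim.

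The only step that needs care is a matter of bookkeeping. The caps must be the only instance data that change, so that the relabeling and the $r_x$ are unaffected. This is exactly the setting in which the lemma is used for $\OPTp(\cdot\mid B)$, where caps are increased on a fixed set $X$ with fixed relative index order. One subtlety is that raised caps need no longer be strictly increasing. Feasibility and cost as defined above never use that property, so the lemma holds for arbitrary integer caps, and I state the proof in that generality.
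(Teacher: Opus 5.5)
Your proposal is correct and follows the same route as the paper: raising caps or widening the window only relaxes constraints and leaves the objective unchanged, so every previously feasible solution stays feasible at the same cost and the minimum cannot increase. Your extra remarks, that the $r_x$ are unaffected and that strict monotonicity of the caps is never used, are accurate details that the paper's one-line proof leaves implicit.
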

\begin{proof}
Each change only relaxes a feasibility constraint and leaves the objective
function unchanged. Hence, every solution that was feasible before the
change remains feasible afterward, so the optimum cannot increase.
\end{proof}

\begin{lemma}\label{lem:gradient}
Let $P_r\coloneqq[j{+}1..j{+}r]$. If the top target $j$ is feasible at
inversion count $r$, that is, $\lo\le\min(m_1,\hicmd,c_j+r)$,
then, for every $v\in[\lo,\hicmd]$, the targets in $P_r$ admit a feasible
solution with values in $[\lo,v]$.
\end{lemma}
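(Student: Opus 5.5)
The plan is to give an explicit solution on $P_r$ in which every value equals $\lo$. Because $\lo\le v$ for every $v\in[\lo,\hicmd]$, such a solution lies in the window $[\lo,v]$ for every admissible $v$ at once. If $r=0$, then $P_r=\varnothing$ and the claim is trivial, so assume $r\ge 1$.

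The first thing to get right is the order. Retrieving $P_r$ from top to bottom is tempting because it creates no inversions. However, it needs $\lo\le c_{j+i}$ for each $i$. The hypothesis only gives $\lo\le c_j+r$, and strict increase only gives $c_{j+i}\ge c_j+i$, which is too weak when $i<r$. I will therefore retrieve $P_r$ \emph{bottom-up}, in the order $j{+}r,\,j{+}r{-}1,\,\dots,\,j{+}1$. Under this order, target $j{+}i$ is preceded by exactly the $r-i$ targets $j{+}i{+}1,\dots,j{+}r$ below it, so its inversion count is $r_{j+i}=r-i$. The caps are strictly increasing integers, so $c_{j+i}\ge c_j+i$. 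Hence the effective cap satisfies
\[
c_{j+i}+r_{j+i}\;\ge\; c_j+i+(r-i)\;=\;c_j+r\;\ge\;\lo .
\]
This is the one place where the gradient of the caps is used, and it is the step I regard as the crux: the bottom-up order exactly compensates the cap deficit of the upper targets in the block.

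Next I check the remaining constraints of the windowed M2 instance for the assignment $v_{j+i}\coloneqq\lo$ for all $i$. The values are constant, hence nondecreasing along the order. They lie in $[\lo,v]$ because $\lo\le v$. We have $\lo\ge 0$ by the definition of the window. The hypothesis gives $\lo\le m_1$ and $\lo\le\hicmd$. Together with the displayed inequality, this yields $v_{j+i}\le\min(m_1,\hicmd,c_{j+i}+r_{j+i})$ for every $i$.

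Finally, when $P_r$ is viewed as a subinstance, its targets are relabeled consecutively with their caps unchanged. The inversion counts computed above are therefore exactly those of the subinstance, and the solution is feasible in the sense used for $\OPT(P_r,[\lo,v])$. I do not expect any real obstacle beyond choosing the bottom-up order. The cost of this particular solution plays no role, since only existence is claimed.
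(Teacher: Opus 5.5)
Your proposal is correct and matches the paper's proof: retrieve $P_r$ bottom-up with every value equal to $\lo$. Then target $j{+}i$ has inversion count $r-i$, and $c_{j+i}+(r-i)\ge c_j+r\ge\lo$ follows from the strictly increasing integer caps. The remaining checks, $\lo\le m_1$ from the hypothesis and $\lo\le v$ from $v\in[\lo,\hicmd]$, are the same ones the paper makes.
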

\begin{proof}
When $r=0$, $P_r$ is empty and hence admits a feasible solution. Now suppose
that $r\ge 1$. Retrieve the targets in $P_r$ from bottom to top and assign
every target the value $\lo$. For $t\in\{1,\ldots,r\}$, target $j+t$ then
has inversion count $r-t$. Since the caps are strictly increasing integers,
$c_{j+t}\ge c_j+t$, and therefore
$c_{j+t}+(r-t)\ge c_j+r\ge\lo$. The latter inequality also gives
$\lo\le m_1$. Moreover, since
$v\ge\lo$, the assigned value $\lo$ lies in $[\lo,v]$. Thus, every target
satisfies its cap and window constraints, proving the claim.
\end{proof}

\begin{proof}[Proof of Theorem~{\normalfont\ref{thm:reduction-m2}}]
Consider any feasible retrieval order. Keep the positions occupied by stack
1 targets fixed and assign the targets to these positions from top to
bottom. The relative order of the stack 2 targets and the number of
preceding stack 1 retrievals for each such target remain the same. Thus,
their feasibility and all penalty terms involving them are unaffected.
Stack 1 targets have no preceding-stack feasibility constraints, while their
top-to-bottom order eliminates all within-stack penalties in stack 1. The
resulting order is therefore feasible and has no greater penalty. Hence, it
suffices to consider orders that retrieve the targets in stack 1 from top to
bottom.

Let $\pi$ be a feasible retrieval order with this property. Deleting the
stack 1 targets from $\pi$ gives a permutation $\sigma$ of $[q]$. For each
stack 2 target $j$, let $v_j\coloneqq\kappa_1(j)$ denote the number of stack
1 targets retrieved before $j$. Since this count cannot decrease during the
retrieval process, the values $v_j$ are nondecreasing along $\sigma$ and
satisfy $0\le v_j\le m_1$. Moreover, the targets below $j$ that are
retrieved before it are precisely the targets $j'>j$ that precede $j$ in
$\sigma$; their number is $r_j$. The feasibility
condition~\eqref{eq:retrieval-feasibility} therefore becomes
$v_j\le\min(m_1,c_j+r_j)$.
Thus, $(\sigma,v)$ satisfies the feasibility conditions of problem M2.

Since the stack 1 targets are retrieved from top to bottom, pairs of targets
in stack 1 contribute no penalty. Exactly $r_j$ lower targets from the same
stack are retrieved before each stack 2 target $j$, and each such pair
contributes $2$. Moreover, exactly $m_1-v_j$ stack 1 targets are retrieved
after $j$, and each forms a cross-stack pair that contributes $1$. Hence,
$P(\pi)=\sum_{j=1}^{q}\bigl(2r_j+m_1-v_j\bigr)=F(\sigma,v)$,
which equals the objective value of $(\sigma,v)$ in problem M2.

Conversely, consider a feasible solution $(\sigma,v)$ to problem M2. We
construct a corresponding retrieval order $\pi$. Now retrieve the stack 2
targets in the order $\sigma$. Before retrieving target $j$, retrieve
targets from stack 1, from top to bottom, until exactly $v_j$ stack 1 targets
have been retrieved. This is possible because the values $v_j$ are
nondecreasing along $\sigma$. After all stack 2 targets have been retrieved,
retrieve the remaining stack 1 targets from top to bottom.

Observe that every stack 1 target is accessible because there is no stack to
its left. Moreover, when stack 2 target $j$ is retrieved, we have
$\kappa_1(j)=v_j$, while its inversion count is $r_j$. Its feasibility
condition is therefore precisely the corresponding constraint of problem
M2. Consequently, $\pi$ is feasible, and Lemma~\ref{lem:penalty} yields
$P(\pi)=F(\sigma,v)$.

Thus, minimizing $P$ over feasible retrieval orders is equivalent to
minimizing $F$ over feasible solutions to problem M2. Finally,
by~\eqref{eq:instance-feasibility}, a two-stack instance is
feasible if and only if $n(1)\ge h_j-R(j)$ for every target $j$ in stack 2.
Since the targets are indexed from top to
bottom, $R(j)=q-j$. By $c_j=n(1)-h_j$, the feasibility condition is
equivalent to $c_j+(q-j)\ge 0$ for every $j\in[q]$.
\end{proof}

\stepcounter{lemma}
\begin{lemma}[Monotonicity of the upper bounds]\label{lem:deadlines}
For every $j\in[q]$, let
$D_j\coloneqq\min\bigl(m_1,c_j+(q-j)\bigr)$.
Then every feasible solution satisfies $v_j\le D_j$, and the sequence
$D_1,\ldots,D_q$ is nondecreasing.
\end{lemma}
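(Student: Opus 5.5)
The lemma has two parts: the bound $v_j\le D_j$ for every feasible solution, and the monotonicity of $D_1,\dots,D_q$. I would prove them in that order, both directly from the constraints of problem M2 in Theorem~\ref{thm:reduction-m2}.

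\emph{The bound.} For the first part, take any feasible $(\sigma,v)$. Constraint~\eqref{eq:m2-feasibility} gives $v_j\le\min(m_1,c_j+r_j)$. The inversion count $r_j$ counts targets $j'>j$ in $[q]$, so there are at most $q-j$ of them, and $r_j\le q-j$. Since $\min(m_1,\cdot)$ is nondecreasing, it follows that $v_j\le\min(m_1,c_j+(q-j))=D_j$.

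\emph{Monotonicity.} For the second part, it suffices to compare consecutive indices, that is, to show $D_j\le D_{j+1}$ for $1\le j<q$. The caps are strictly increasing integers, so $c_{j+1}\ge c_j+1$. Adding $q-j-1$ to both sides gives
\[
c_{j+1}+(q-j-1)\;\ge\; c_j+(q-j).
\]
Applying the nondecreasing map $x\mapsto\min(m_1,x)$ preserves this inequality and yields $D_{j+1}\ge D_j$. Chaining these comparisons over consecutive indices shows that the whole sequence $D_1,\ldots,D_q$ is nondecreasing.

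\emph{Where the care lies.} Neither part is a real obstacle. The only point that needs attention is integrality: strict increase alone gives only $c_{j+1}>c_j$, and it is the integrality of the caps that upgrades this to $c_{j+1}\ge c_j+1$. This gap of at least one is exactly what offsets the loss of one unit in the term $q-j$ when moving from $j$ to $j+1$. The same gradient fact was already used in Lemma~\ref{lem:gradient}.
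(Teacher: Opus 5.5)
Your proof is correct and follows the paper's argument essentially line by line: you bound $r_j\le q-j$ inside the feasibility constraint to get $v_j\le D_j$, then combine $c_{j+1}\ge c_j+1$ (from strictly increasing integer caps) with the monotonicity of $x\mapsto\min(m_1,x)$ to conclude $D_{j+1}\ge D_j$. Your remark that integrality is what turns strict increase into a gap of one is the same point the paper's proof relies on.
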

\begin{proof}
Since $r_j\le q-j$, the feasibility condition of problem M2 gives
\[
v_j\le\min(m_1,c_j+r_j)
\le\min\bigl(m_1,c_j+(q-j)\bigr)=D_j.
\]
Moreover, since the caps are strictly increasing integers,
$c_{j+1}\ge c_j+1$ for every $j<q$. Consequently,
$c_{j+1}+\bigl(q-(j+1)\bigr)\ge c_j+(q-j)$.
Taking the minimum with $m_1$ preserves this inequality, and hence
$D_{j+1}\ge D_j$.
\end{proof}

\begin{proof}[Proof of Lemma~{\normalfont\ref{lem:grammar}}]
Suppose first that $\sigma$ is generated by the recursive rule in the
statement. We show by induction
on $q$ that every inversion set is a prefix. For $q\le 1$, every inversion
set is empty. An order with $q\ge 2$ has either the form
$\sigma=1\cdot\sigma_R$ or $\sigma=\sigma_L\cdot1\cdot\sigma_R$,
where, in the second case, $\sigma_L$ orders $[2..d]$ and $\sigma_R$ orders
$[d{+}1..q]$. Hence, $B_1=\emptyset$ or $B_1=[2..d]$. For every target in
$\sigma_L$ or $\sigma_R$, targets outside the respective order are retrieved
later or have smaller indices and therefore do not enter its inversion set.
The induction hypothesis shows that all remaining inversion sets are
prefixes.

Conversely, suppose that every inversion set is a prefix. We show by
induction on $q$ that $\sigma$ is generated by the recursive rule. For
$q\le 1$, the rule generates the only possible order. Since
$B_1$ is a prefix, either $B_1=\emptyset$ or $B_1=[2..d]$. In the first
case, $\sigma=1\cdot\sigma_R$. In the second,
$\sigma=\sigma_L\cdot1\cdot\sigma_R$,
where $\sigma_L$ orders $[2..d]$ and $\sigma_R$ orders $[d{+}1..q]$. Both
$\sigma_L$ and $\sigma_R$ retain the prefix property. By induction, they
are generated by the recursive rule. Thus, the same rule generates
$\sigma$.

Finally, let $j<i$. If $i>j+r_j$, the intervals are disjoint. Otherwise,
$i\in B_j$. If $i+r_i>j+r_j$, then $\ell\coloneqq j+r_j+1$ belongs to
$B_i$, so $\ell\prec_\sigma i\prec_\sigma j$,
which contradicts $\ell\notin B_j$. Hence, $i+r_i\le j+r_j$, and
$[i..i+r_i]\subseteq[j..j+r_j]$.
\end{proof}

\begin{proof}[Proof of Proposition~{\normalfont\ref{prop:decomp}}]
\emph{(a)} Retrieve the targets in $B$ according to the feasible solution
on $B$, followed by target $j$ with value $v$, and then the targets in
$L\setminus B$ according to the feasible solution on $L\setminus B$ with
increased caps. The targets in $B$ retain their inversion counts and values
and therefore remain feasible. Target $j$ has inversion count $r$ and is
feasible because $v\in[\lo,\min(m_1,\hicmd,c_j+r)]$.
For each target $s\in L\setminus B$, exactly the targets $b\in B$ with
$b>s$ are additional inverted successors. Its inversion count therefore
increases by $\lvert\{b\in B:b>s\}\rvert$,
which equals the increase in its cap. Hence, the feasibility condition for
$s$ under the increased cap is identical to its feasibility condition when
the targets in $B$ are retrieved before $s$. Moreover, all values assigned
to targets in $B$ are at most $v$, target $j$ has value $v$, and all values
assigned to targets in $L\setminus B$ are at least $v$. Thus, the values are
nondecreasing throughout the retrieval order, which is therefore feasible.

The solution on $B$ contributes $C_B$, and the solution on $L\setminus B$
contributes $C_R$. Every $b\in B$ forms a bottom-up pair with $j$,
contributing $2r$ in total, whereas pairs between $j$ and $L\setminus B$
contribute zero. Each pair $(b,s)\in B\times(L\setminus B)$ with $b>s$
contributes exactly $2$, giving $2\cross(B)$ in total. Target $j$
additionally contributes $m_1-v$. Together, these contributions give
$\Phi(B,v)$.

\emph{(b)} Conversely, restrict the given solution to $B$ and
$L\setminus B$. Every inverted successor of a target in $B$ that precedes
it also lies in $B$, so the restriction to $B$ is feasible. For each
$s\in L\setminus B$, the additional inverted successors from $B$ are
exactly the targets $b>s$, which are accounted for by its cap increase.
Since the values are nondecreasing and target $j$ has value $v$, the two
restrictions use values in $[\lo,v]$ and $[v,\hicmd]$, respectively. Their
costs, together with the contributions $m_1-v$, $2r$, and $2\cross(B)$,
give $\Phi(B,v)$.

If $B=P_r$, then $b<s$ for every $b\in B$ and $s\in L\setminus B$.
Hence, all cap increases and $\cross(B)$ are zero.
\end{proof}


\subsection{The Promotion Swap, Core Exchange, and the Two-Stack
Theorem}\label{app:core}

\begin{lemma}[Promotion swap]\label{lem:promswap}
Let $(\sigma,v)$ be any feasible solution whose inversion set $B_1$ of
target $1$ is not an index prefix. Put $M\coloneqq\max B_1$, let
\[
a\coloneqq\max\{x<M:\ x\notin B_1\}\qquad(\text{so } 2\le a\le M-1
\text{ and } (a..M]\coloneqq\{a{+}1,\dots,M\}\subseteq B_1),
\]
and let $b$ be the target in $(a..M]$ retrieved last under $\sigma$.
Exchange the positions of $a$ and $b$ in $\sigma$, leaving the value
assigned to each position unchanged. The result is a feasible solution
in which the inversion set of target $1$ is
$B_1\setminus\{b\}\cup\{a\}$ and the cost is exactly $2$ smaller.
\end{lemma}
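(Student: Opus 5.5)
The plan is to mirror the proof of Lemma~\ref{lem:stackswap} in the windowed M2 setting. The new ingredients are the window constraint and the fact that values are attached to positions rather than to targets. Since $b\in B_1$ and $a\notin B_1$ with $a>1$, we have $b\prec_\sigma 1\prec_\sigma a$. By the choice of $b$, every target of $(a..M]$ other than $b$ precedes $b$. After the exchange, the value sequence along positions is unchanged, so monotonicity and the window bounds $\lo\le v\le\min(m_1,\hicmd)$ hold automatically. Only the cap constraints $v\le c_x+r_x$ and the inversion counts need checking.

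\textbf{Step 1: other targets.} I first show that every target $w$ strictly between $b$ and $a$ in $\sigma$ satisfies $w<a$ or $w>M$. If $w\prec_\sigma 1$ and $w>a$, then $w\in(a..M]$ is retrieved after $b$, contradicting the choice of $b$. If $1\preceq_\sigma w$ and $w>1$, then $w\notin B_1$, so $w\notin(a..M]$ and $w\ne a$. In either case $a$ and $b$ lie on the same side of $w$. Replacing $b$ by $a$ in the set of predecessors of $w$ therefore leaves $r_w$ unchanged. Targets outside the open segment keep the same predecessor set. Hence every $w\notin\{a,b\}$ keeps its position, value, and $r_w$, and so stays feasible with the same cost contribution.

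\textbf{Step 2: inversion counts of $a$ and $b$.} Let $\rho$ be the number of targets $z>M$ preceding $a$. Targets above $b$ that precede $b$ also precede $1$, so they lie in $B_1$ and hence in $(b..M]$; all of these precede $b$. This gives $r_b=M-b$. Likewise $r_a=M-a+\rho$. At the new position of $a$ (the old position of $b$), the predecessors of $a$ are $(a..M]\setminus\{b\}$. No $z>M$ precedes this position, since such a $z$ would precede $1$ and thus lie in $B_1$. So $r_a'=M-a-1$. At the new position of $b$, its predecessors are $(b..M]$ together with the $\rho$ targets $z>M$, so $r_b'=M-b+\rho$. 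The total inversion count thus drops by exactly one. Since the multiset of values is unchanged, the cost drops by exactly $2$. The inversion set of target~$1$ loses $b$ and gains $a$, and no other target crosses target~$1$.

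\textbf{Step 3: feasibility of the exchanged targets.} This is the main step, and it relies on strict integrality of the caps, $c_y\ge c_x+(y-x)$ for $x<y$. Target $a$ now carries the value $v_b$ of the old position of $b$. Since $b\prec 1$, we have $v_b\le v_1\le c_1+r_1$. Moreover $r_1\le M-2$, because $B_1\subseteq[2..M]\setminus\{a\}$, and $c_1\le c_a-(a-1)$. Together these give $v_b\le c_a+M-a-1=c_a+r_a'$. Target $b$ now carries $v_a$, and $v_a\le c_a+r_a\le c_b-(b-a)+M-a+\rho=c_b+r_b'$. The bounds $\lo\le\cdot\le\min(m_1,\hicmd)$ transfer with the position values. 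Finally, $2\le a\le M-1$ holds because $B_1$ is not a prefix of $[2..]$ and $a$ is the largest missing index below $M$. This last fact ensures that $a$ is well defined.
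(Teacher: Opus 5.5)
Your proposal is correct and follows essentially the same route as the paper's proof. It uses the same claim that every target strictly between $b$ and $a$ satisfies $w<a$ or $w>M$, the same inversion counts $r_b=M-b$, $r_a=M-a+\rho$, $r'_a=M-a-1$, $r'_b=M-b+\rho$, and the same cap-gradient chains $v_b\le c_1+r_1\le c_a+r'_a$ and $v_a\le c_a+r_a\le c_b+r'_b$ for the two exchanged targets. One small wording slip: in Step~2, ``targets above $b$'' should read targets with larger index than $b$ (i.e., below $b$ in the stack), which is clearly what your argument uses.
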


\begin{proof}
Since $b\in B_1$ and $a\notin B_1$, we have $b\prec 1\prec a$. Moreover,
since $b$ is retrieved last under $\sigma$ among the targets in $(a..M]$,
every target in $(a..M]\setminus\{b\}$ precedes $b$. Let
$\rho\coloneqq\#\{z>M:\ z\prec a\}$.

\emph{Step 1 (indices of the intermediate targets).} Every target $w$
appearing strictly between $b$ and $a$ in $\sigma$ satisfies $w<a$ or
$w>M$. For this, first suppose that $w\prec 1$. Then $w\in B_1$. If
$w>a$, the definitions of $a$ and $M$ imply that $w\in(a..M]$. However,
$w$ is retrieved after $b$, whereas $b$ is the target in $(a..M]$
retrieved last under $\sigma$. Therefore, $w<a$. If $w=1$, then $w<a$.
Finally, if $1\prec w$, then $w\notin B_1\cup\{a\}$, and hence
$w\notin(a..M]\cup\{a\}$. It follows that $w<a$ or $w>M$.

\emph{Step 2 (all other targets are unaffected).} For any target
$w\notin\{a,b\}$, the exchange changes the set of targets preceding $w$
only if $w$ appears strictly between $b$ and $a$ in $\sigma$. In that
case, $b$ is replaced by $a$ among the targets preceding $w$. By Step~1,
either $w<a<b$ or $w>M\ge b$. Thus, $a>w$ if and only if $b>w$, and the
inversion count $r_w$, as well as the cap of $w$, is unchanged.
Every position retains its assigned value. Therefore,
the sequence of values along the resulting retrieval order remains
nondecreasing, and every target other than $a$ and $b$ retains its
assigned value. In particular, $v_1$ and $r_1$ are unchanged.

\emph{Step 3 (inversion counts of the moved targets).} Every target
retrieved before $b$ is also retrieved before target $1$ and hence belongs
to $B_1$. No target with an index greater than $M$ therefore precedes
$b$. Because $b$ is retrieved last under $\sigma$ among the
targets in $(a..M]$, every target in $(b..M]$ precedes $b$. Thus, the
targets with indices greater than $b$ that precede $b$ are exactly
$b+1,\ldots,M$, and $r_b=M-b$. Every target in $(a..M]$
also precedes $1$ and therefore precedes $a$, while $\rho$ counts the targets
with indices greater than $M$ that precede $a$. Hence,
$r_a=(M-a)+\rho$.

Let $r'_a$ and $r'_b$ denote the new inversion counts after the exchange.
Target $a$ is retrieved at the former position of $b$. Every target in
$(a..M]\setminus\{b\}$ precedes this position, whereas no target with an
index greater than $M$ does. Therefore, $r'_a=M-a-1$. Target $b$ is
retrieved at the former position of $a$. It is then preceded by the
targets $b+1,\ldots,M$ and by the $\rho$ targets with indices greater
than $M$ that preceded $a$. Therefore, $r'_b=(M-b)+\rho$.

\emph{Step 4 (feasibility at the earlier position).} At its new position,
$a$ is assigned $v_b\in[\lo,\hicmd]$. Since $b\prec 1$, monotonicity and
the feasibility of target $1$ give $v_b\le v_1\le c_1+r_1$. Moreover,
$B_1\subseteq\{2,\dots,M\}\setminus\{a\}$, so $r_1\le M-2$, while the
strictly increasing integer caps give $c_1\le c_a-(a-1)$. Hence,
\[
v_b\le c_a-(a-1)+(M-2)=c_a+(M-a-1)=c_a+r'_a,
\]
and $a$ is feasible.

\emph{Step 5 (feasibility at the later position).} At its new position,
$b$ is assigned $v_a\in[\lo,\hicmd]$. The feasibility of $a$ before the
exchange gives
\[
v_a\le c_a+r_a=c_a+(M-a)+\rho.
\]
Since the caps are strictly increasing integers, $c_a\le c_b-(b-a)$.
Hence,
\[
v_a\le c_b-(b-a)+(M-a)+\rho
=c_b+(M-b)+\rho
=c_b+r'_b,
\]
and $b$ is feasible.

\emph{Step 6 (cost and the new inversion set).} Since the value assigned
to each position is unchanged, the term $\sum_x(m_1-v_x)$ is unchanged.
By Step~2, only the inversion counts of $a$ and $b$ can change.
Step~3 gives $r_a+r_b=(M-a)+\rho+(M-b)$, whereas
$r'_a+r'_b=(M-a-1)+(M-b)+\rho$.
Thus, the total inversion count decreases by one and the cost decreases by
exactly $2$. Finally, $a$ now precedes target $1$, while $b$ follows it,
and no other target changes its position relative to target $1$.
Therefore, the new inversion set of target $1$ is
$B_1\setminus\{b\}\cup\{a\}$.
\end{proof}

Lemma~\ref{lem:promswap} immediately yields the following corollary.

\stepcounter{corollary}
\begin{corollary}[Inversion set of target $1$]\label{cor:topprefix}
In every optimal solution of a windowed instance, the inversion set $B_1$
of target $1$ is an index prefix.
\end{corollary}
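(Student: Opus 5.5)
The claim follows from Lemma~\ref{lem:promswap} by contradiction. Fix a windowed instance with targets $1,\dots,q$, strictly increasing integer caps, and window $[\lo,\hicmd]$. Let $(\sigma,v)$ be an optimal solution, that is, a feasible solution of minimum cost $\sum_x(m_1-v_x)+2\sum_x r_x$. Suppose that the inversion set $B_1$ of target~$1$ is not an index prefix.

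First I check that the hypotheses of Lemma~\ref{lem:promswap} hold. $B_1$ is nonempty, since the empty set is an index prefix by definition. Because $B_1$ is not of the form $[2..1{+}|B_1|]$, some index $x$ with $2\le x<M\coloneqq\max B_1$ lies outside $B_1$. Hence $a=\max\{x<M:x\notin B_1\}$ is well defined, and $2\le a\le M-1$. The target $b$ that is retrieved last in $(a..M]$ is also well defined, since this set is nonempty and contained in $B_1$.

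Applying Lemma~\ref{lem:promswap} then gives a new solution $(\sigma',v')$ with three properties. It is feasible within the same window, because each position keeps its value, so every value still lies in $[\lo,\hicmd]$ and the sequence stays nondecreasing along the order. It obeys the same caps, because the caps are properties of the targets, not of their positions. Its cost is exactly $2$ smaller. This contradicts the optimality of $(\sigma,v)$, so $B_1$ must be an index prefix.

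The only point needing care is that the lemma's conclusion keeps us inside the same windowed instance. Step~2 of the lemma's proof settles this by keeping position values fixed and leaving all other inversion counts unchanged. Steps~4 and~5 verify the cap constraints of $a$ and $b$ against the window-restricted values. So there is no genuine obstacle beyond invoking the lemma.
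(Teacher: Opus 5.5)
Your proof is correct and follows the paper's own route. The paper says the corollary follows immediately from Lemma~\ref{lem:promswap}: a non-prefix $B_1$ admits a feasible promotion swap that lowers the cost by exactly $2$, which contradicts optimality. Your extra checks — that $a$ and $b$ are well defined, and that the swapped solution stays within the same window and caps — are the details the paper leaves implicit.
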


\begin{corollary}[Recursive structure of optimal orders]\label{cor:grammar}
For every windowed instance, the order $\sigma$ of every optimal solution
$(\sigma,v)$ is a laminar-delay order.
\end{corollary}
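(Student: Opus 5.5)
We prove Corollary~\ref{cor:grammar} by strong induction on the number of targets $q$ of the windowed instance, combining Corollary~\ref{cor:topprefix} with the decomposition of Proposition~\ref{prop:decomp}. For $q\le 1$ every order is trivially laminar-delay. For $q\ge 2$, fix an optimal solution $(\sigma,v)$ on $I=[1..q]$ with window $[\lo,\hicmd]$.

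First, Corollary~\ref{cor:topprefix} shows that $B_1$ is an index prefix, so $B_1=P_r=[2..r{+}1]$ for $r=|B_1|$. Since $B_1=P_r$, Proposition~\ref{prop:decomp}(b) applies with no cap increases and $\cross(P_r)=0$. The restriction $\sigma_L$ of the solution to $P_r$ is therefore feasible with values in $[\lo,v_1]$. The restriction $\sigma_R$ to $L\setminus P_r=[r{+}2..q]$ is feasible under the original caps with values in $[v_1,\hicmd]$. The cost splits as $C_L+(m_1-v_1)+2r+C_R$.

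Next, we show that each restriction is optimal for its own windowed instance. Suppose, for example, that a cheaper feasible solution existed on $P_r$ with window $[\lo,v_1]$. Proposition~\ref{prop:decomp}(a) would splice it with $\sigma_R$ and target $1$ at value $v_1$ into a feasible solution on $I$ of strictly smaller cost. This contradicts the optimality of $(\sigma,v)$. The same argument applies to $\sigma_R$, using that solutions feasible under the original caps remain feasible after splicing. Each subinstance is a windowed M2 instance obtained by deleting targets. Its caps remain strictly increasing integers after relabeling, and it has fewer than $q$ targets. The induction hypothesis therefore shows that $\sigma_L$ and $\sigma_R$ are laminar-delay orders of their subinstances.

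Finally, we lift the prefix property back to $I$. Then $\sigma$ takes the form $1\cdot\sigma_R$ when $r=0$, or $\sigma_L\cdot 1\cdot\sigma_R$ otherwise, where $\sigma_L$ orders the consecutive interval $[2..r{+}1]$ and $\sigma_R$ orders $[r{+}2..q]$. For a target $w$ in $\sigma_L$, all targets outside $\sigma_L$ either have smaller index (target~$1$) or are retrieved later. Hence its inversion set in $\sigma$ equals its inversion set in $\sigma_L$, which is an index prefix. For $w$ in $\sigma_R$, everything retrieved earlier outside $\sigma_R$ has a smaller index, so again the inversion sets agree. Because both subinstances consist of consecutive original indices, index prefixes under the relabeling remain index prefixes in $I$. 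Thus every inversion set in $\sigma$ is an index prefix, and by Lemma~\ref{lem:grammar} $\sigma$ is a laminar-delay order.

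The main point needing care is the optimality of the restrictions. It relies on the cost being exactly additive across the split, which holds because $\cross(P_r)=0$ and no caps are raised. It also relies on the window constraints $[\lo,v_1]$ and $[v_1,\hicmd]$ making every replacement splice monotone. Alternatively, one could apply Lemma~\ref{lem:stackswap} directly with $m=2$: any optimal order with some non-prefix inversion set would admit a strictly cheaper feasible exchange, and that exchange preserves the values at each position and hence the window. This gives a second, shorter route, which I would record as a remark.
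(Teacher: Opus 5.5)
Your proof is correct and follows the paper's argument. Both use induction on $q$, Corollary~\ref{cor:topprefix} to obtain $B_1=P_r$, Proposition~\ref{prop:decomp}(b) and~(a) for feasibility and optimality of the two restrictions, and the induction hypothesis. The only cosmetic difference is that you check the prefix property of each inversion set directly, whereas the paper exhibits the generating production, and these are equivalent by Lemma~\ref{lem:grammar}.
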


\begin{proof}
Consider an optimal solution $(\sigma,v)$ on $q$ targets. If $q\le1$, its
order is trivially a laminar-delay order. Suppose $q\ge2$, and assume the
claim holds for all smaller windowed instances. By
Corollary~\ref{cor:topprefix}, the inversion set of target $1$ is
$B_1=P_r=[2..r{+}1]$. Hence,
$\sigma=\sigma_P\cdot 1\cdot\sigma_R$,
where $\sigma_P$ and $\sigma_R$ are the restrictions of $\sigma$ to $P_r$
and $[r{+}2..q]$, respectively. By Proposition~\ref{prop:decomp}(b), these
restrictions give feasible solutions to the subinstances
$(P_r,[\lo,v_1])$ and $([r{+}2..q],[v_1,\hicmd])$. Each solution is
optimal for its subinstance; otherwise, replacing it with a lower-cost
solution via Proposition~\ref{prop:decomp}(a) would contradict the
optimality of $(\sigma,v)$. By induction, $\sigma_P$ and $\sigma_R$ are
laminar-delay orders. If $r=0$, the production
\[
S([1..q])\longrightarrow 1\cdot S([2..q])
\]
generates $\sigma$. If $r>0$, the production
\[
S([1..q])\longrightarrow
S([2..r{+}1])\cdot 1\cdot S([r{+}2..q])
\]
generates $\sigma$. Thus, $\sigma$ is a laminar-delay order.
\end{proof}

\subsubsection*{The core exchange and the two-stack theorem}

\begin{proof}[Proof of Theorem~{\normalfont\ref{thm:core}}]
Observe first that if $\Phi(B,v)=+\infty$, the claim follows because the
left-hand side is finite for every admissible $v$, by Lemma~\ref{lem:gradient}
and the feasibility criterion for the remaining interval. Consider the
case $\Phi(B,v)<+\infty$. Choose optimal solutions to the two subproblems
defining $\Phi(B,v)$. By Proposition~\ref{prop:decomp}(a), combining these
solutions with target $j$, assigned value $v$, yields a feasible solution
$S_0$ for $(I,[\lo,\hicmd])$ in which the inversion set of $j$ is $B$ and
the cost is $\Phi(B,v)$.

After shifting the indices by $j-1$, Lemma~\ref{lem:promswap} applies to
target $j$. Whenever its inversion set differs from $P_r$, perform the
exchange from Lemma~\ref{lem:promswap}. Each exchange preserves
feasibility, the position and assigned value $v$ of $j$, and the
cardinality $r$ of its inversion set, while decreasing the cost by $2$.
Let $C$ denote the current inversion set of $j$. Each exchange replaces
some $b\in C$ by a target $a\notin C$ with $a<b$, thereby decreasing
$\cross(C)$ by $b-a\ge1$. Initially, $C=B$. Since $\cross(C)$ is a
nonnegative integer and equals zero precisely when $C=P_r$, after at most
$\cross(B)$ exchanges we obtain a feasible solution whose inversion set
for $j$ is $P_r$. Let $s$ denote the number of exchanges and let $S_s$
denote the resulting solution. Then $s\le\cross(B)$, the cost of $S_s$ is
$\Phi(B,v)-2s$, and $s\ge1$ if $B\ne P_r$.

Since the inversion set of $j$ in $S_s$ is $P_r$,
Proposition~\ref{prop:decomp}(b) shows that its restrictions to $P_r$ and
$[j{+}r{+}1..k]$ are feasible for the subinstances $(P_r,[\lo,v])$ and
$([j{+}r{+}1..k],[v,\hicmd])$, respectively. Their costs are at least the
corresponding optimal costs. Including the cost $(m_1-v)+2r$ of target
$j$, we obtain
\[
\OPT(P_r,[\lo,v])+(m_1-v)+2r+\OPT([j{+}r{+}1..k],[v,\hicmd])
\;\le\;\Phi(B,v)-2s.
\]
Since $s\ge0$, this establishes the first inequality. If $B\ne P_r$, then
$s\ge1$, and the right-hand side is at most $\Phi(B,v)-2$, establishing
the second inequality.
\end{proof}

\begin{proof}[Proof of Theorem~{\normalfont\ref{thm:main}}]
Consider an interval $I=[j..k]$ and a window $[\lo,\hicmd]$. If $I$ is
empty, both sides are zero. Suppose that $I$ is nonempty and that the
result holds for shorter intervals.

First, assume that $f(j,k,\lo,\hicmd)$ is finite and consider a minimizing
choice in the recurrence. By the induction hypothesis, its subproblem
values are attained by feasible solutions. Proposition~\ref{prop:decomp}(a)
combines these solutions with target $j$ into a feasible solution for
$(I,[\lo,\hicmd])$. Hence,
$\OPT(I,[\lo,\hicmd])\le f(j,k,\lo,\hicmd)$.

For the reverse inequality, assume that $\OPT(I,[\lo,\hicmd])$ is finite
and consider an optimal solution $(\sigma,v)$. Let $B$ be the inversion
set of $j$, let $r=|B|$, and let $v_j$ be the value assigned to $j$.
Proposition~\ref{prop:decomp}(b) decomposes this solution at $j$. Replacing
the two resulting solutions by optimal solutions to their respective
subproblems cannot increase the cost, and therefore
$\Phi(B,v_j)\le\OPT(I,[\lo,\hicmd])$.
By Theorem~\ref{thm:core}, replacing $B$ by $P_r$ does not increase the
cost. The resulting subproblems are shorter intervals, so their optimal
costs equal the corresponding dynamic-program values by the induction
hypothesis. This prefix decomposition is included in the recurrence, and
thus
\[
f(j,k,\lo,\hicmd)\le\Phi(B,v_j)\le\OPT(I,[\lo,\hicmd]).
\]
The two inequalities establish the result.
\end{proof}

\subsection{The Reduction to Problem M3}\label{app:m3reduction}


\begin{proof}[Proof of Theorem~{\normalfont\ref{thm:reduction}}]
\emph{Step 1 (Retrieving stack~1 from top to bottom).}
Consider a feasible retrieval order $\pi$. Keep the positions occupied by
targets from stack~1 fixed and assign these targets to the positions from
top to bottom. For every target $b$ in stack~2 or~3, the number $\kk_1(b)$
of targets retrieved earlier from stack~1 remains unchanged. Hence, its
feasibility and contribution to $P(\pi)$ are preserved. Targets in
stack~1 are always accessible because no stack lies to their left.
Moreover, retrieving them from top to bottom eliminates every within-stack
penalty in stack~1. The resulting order is therefore feasible and has no
greater penalty.

Consequently, it suffices to consider retrieval orders in which the targets
in stack~1 are retrieved from top to bottom. In such an order, stack~1
affects the retrievals from stacks~2 and~3 only through the number of its
targets that have already been retrieved. Observe that this number,
denoted by $v=\kk_1\in[0,m_1]$, is nondecreasing throughout the retrieval
process.

\emph{Step 2 (From a retrieval order to an M3 solution).}
Consider a feasible retrieval order $\pi$ in which the targets in stack~1
are retrieved from top to bottom. Deleting these targets yields an order
$\sigma$ of $S_2\cup S_3$. For each $b\in S_2\cup S_3$, set
$v_b\coloneqq\kk_1(b)$. These values belong to $[0,m_1]$ and are
nondecreasing along $\sigma$.

For $j\in S_2$, feasibility with respect to stack~1 gives
$v_j\leq\min\{m_1,\alpha_j+r_j\}$. For $k\in S_3$, feasibility with
respect to stacks~1 and~2 gives
$v_k\leq\min\{m_1,\beta_k+r_k\}$ and
$\kk_2(k)\leq\min\{q_2,\gamma_k+r_k\}$. Here $r_j$, $r_k$, and
$\kk_2(k)$ are determined by $\sigma$: they count, respectively, the lower
targets from $S_2$ retrieved before $j$, the lower targets from $S_3$
retrieved before $k$, and the targets from $S_2$ retrieved before $k$.
Thus, $(\sigma,v)$ is feasible for problem M3.

\emph{Step 3 (From an M3 solution to a retrieval order).}
Conversely, consider a feasible solution $(\sigma,v)$ to problem M3.
Process the targets in the order $\sigma$. Before retrieving a target $b$,
retrieve targets from stack~1, from top to bottom, until exactly $v_b$
targets from that stack have been retrieved. This is possible because the
values $v_b$ are nondecreasing along $\sigma$. After processing $\sigma$,
retrieve any remaining targets from stack~1.

The stack~1 retrievals are feasible because no stack lies to their left.
When $b$ is retrieved, $\kk_1(b)=v_b$, while $r_j$, $r_k$, and
$\kk_2(k)$ are precisely the values determined by $\sigma$. Therefore,
the constraints of problem M3 are exactly the feasibility conditions of
the constructed retrieval order. Hence, this order is feasible.

\emph{Step 4 (Preservation of the penalty).}
In an order that retrieves stack~1 from top to bottom, its targets
contribute no penalty. By Lemma~\ref{lem:penalty}, a target $j\in S_2$
contributes $2r_j+(m_1-v_j)$, while a target $k\in S_3$ contributes
$3r_k+(m_1-v_k)+(q_2-\kk_2(k))$. Hence,
\[
P(\pi)=\sum_{j\in S_2}\bigl[2r_j+(m_1-v_j)\bigr]
+\sum_{k\in S_3}\bigl[3r_k+(m_1-v_k)+(q_2-\kk_2(k))\bigr].
\]
This is exactly the objective function of problem M3. Steps~2 and~3 show
that a feasible solution in either formulation yields a feasible solution
in the other with the same penalty. Together with Step~1, this proves that
the minimum penalty over all feasible retrieval orders equals the minimum
cost of problem M3.
\end{proof}

\begin{proof}[Proof of Theorem~{\normalfont\ref{thm:real}}]
\emph{Necessity.} Consider a realizable cap vector
$(m_1,\alpha,\beta,\goff)$. Within each stack, target heights strictly
decrease with their top-to-bottom indices. Consequently, $\alpha$ and
$\beta$ are strictly increasing whenever they are nonempty, and the cap
vector is admissible. If $\alpha$ is nonempty, its first entry corresponds
to the highest target in stack~2. Since this target has height at most
$n(2)-1$, we have
$\alpha_1\geq n(1)-(n(2)-1)=1-\goff$, and hence
$\alpha_1+\goff\geq1$.

\emph{Sufficiency.} Consider an admissible cap vector
$(m_1,\alpha,\beta,\goff)$ satisfying $\alpha=\emptyset$ or
$\alpha_1+\goff\geq1$. Choose an integer $n(1)$ that is at least $1$,
$m_1$, $1-\goff$, and every entry of $\alpha$ and $\beta$, and set
$n(2)\coloneqq n(1)+\goff$. If $\beta$ is nonempty, set
$n(3)\coloneqq\max\{1,q_3,n(1)-\beta_1+1\}$; otherwise, set
$n(3)\coloneqq1$.

Place the targets in stack~1 at heights $0,\ldots,m_1-1$, each target
$j\in S_2$ at height $n(1)-\alpha_j$, and each target $k\in S_3$ at
height $n(1)-\beta_k$. All these heights are nonnegative by the choice of
$n(1)$. The stack~1 targets fit because $m_1\leq n(1)$. If $\alpha$ is
nonempty, the stack~2 heights are at most
$n(1)-\alpha_1\leq n(2)-1$. Similarly, if $\beta$ is nonempty, the
stack~3 heights are at most $n(1)-\beta_1\leq n(3)-1$. Strict
monotonicity of $\alpha$ and $\beta$ ensures that the target heights
within each stack are distinct. These placements therefore define a
three-stack instance with cap vector $(m_1,\alpha,\beta,\goff)$, proving
realizability.
\end{proof}

\subsection{Dynamic-Program Soundness and the Recursive-Decomposition
Minimum}\label{app:dp}

\begin{definition}[recursive decomposition]\label{def:productions}
Let $\sigma$ be an order on targets from $S_2\cup S_3$. The order is
recursively decomposable if it contains at most one target or admits
one of the following decompositions.

Choose a stack $s\in\{2,3\}$ represented in $\sigma$, and let $(s,j_s)$ be its
uppermost remaining target. This target may appear first in $\sigma$, in which
case the suffix following it must be recursively decomposable. Alternatively,
it may be delayed behind a block containing exactly the next $b\geq 1$ targets
of stack $s$, namely $(s,j_s{+}1),\ldots,(s,j_s{+}b)$, and the first $r\geq 0$
remaining targets of the other stack $t$, namely
$(t,j_t),\ldots,(t,j_t{+}r{-}1)$, where $(t,j_t)$ is that stack's uppermost
remaining target. The latter sequence is empty when $r=0$ or stack $t$ is
absent. The two sequences may be interleaved within the block.
Both the block and the suffix following $(s,j_s)$ must themselves be
recursively decomposable.
\end{definition}

\begin{lemma}\label{lem:book}
Consider a recursive decomposition at the uppermost remaining target
$(s,j_s)$, where $s\in\{2,3\}$, and let $t$ denote the other stack. In
stack $s$, the block together with the selected target contains a consecutive
sequence of targets beginning with index $j_s$. In stack $t$, the block is
empty or contains a consecutive sequence beginning with its uppermost
remaining target. If the selected target is retrieved first, the block is
empty. The suffix contains exactly the remaining targets of the current
subproblem. Consequently, starting from $S_2$ and $S_3$, every subproblem in
the recursive decomposition contains one index interval from each stack,
possibly empty.
\end{lemma}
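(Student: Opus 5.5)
The plan is to argue directly from Definition~\ref{def:productions}, by induction on the depth of the recursive decomposition. The first four claims of the lemma concern a single decomposition step and follow from the definition. The final claim, that every subproblem consists of one index interval per stack, is the inductive statement. I will use it in the form: \emph{if the current subproblem consists of an interval $[j_2..k_2]$ of stack~2 and an interval $[j_3..k_3]$ of stack~3, each possibly empty, then the block and the suffix produced by any decomposition step again consist of one interval from each stack.} The base case is the initial subproblem, given by $S_2=[1..q_2]$ and $S_3=[1..q_3]$.

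For the single-step claims, fix the selected stack $s$ with uppermost remaining target $(s,j_s)$, and let $t$ be the other stack.

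In the immediate case, the block is empty by definition. The suffix is then the current subproblem minus $(s,j_s)$, that is, $[j_s{+}1..k_s]$ in stack $s$ and the unchanged interval $[j_t..k_t]$ in stack $t$.

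In the delayed case, the block consists of exactly $(s,j_s{+}1),\dots,(s,j_s{+}b)$ and $(t,j_t),\dots,(t,j_t{+}r{-}1)$. Adding the selected target, stack $s$ contributes the consecutive sequence $[j_s..j_s{+}b]$ beginning at $j_s$. Stack $t$ contributes either nothing or the consecutive sequence $[j_t..j_t{+}r{-}1]$ beginning at its uppermost remaining target. The suffix is everything after $(s,j_s)$ in $\sigma$. Since the order is a partition into block, selected target and suffix, the suffix consists of exactly the remaining targets of the current subproblem. These are $[j_s{+}b{+}1..k_s]$ in stack $s$ and $[j_t{+}r..k_t]$ in stack $t$.

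In both cases, an interval with a prefix removed is again an interval, possibly empty. The block is itself a pair of intervals, so the inductive statement holds for both children.

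The only point that needs care is that the definition says ``the uppermost remaining target'' of a stack within the current suborder. I must check that this coincides with the left endpoint of the current interval. This holds by the induction hypothesis, because the subproblem's stack-$s$ targets form an interval whose minimum index is its top target. The same reasoning identifies $(t,j_t)$ with the left endpoint of the stack-$t$ interval.

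One further detail: I must confirm that the block's stack-$s$ part $[j_s{+}1..j_s{+}b]$ lies within $[j_s..k_s]$, i.e.\ $b\le k_s-j_s$, and likewise $r\le k_t-j_t+1$. Both bounds hold because the block is a subset of the current subproblem.

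I expect no real obstacle. The main thing to keep straight is that the suffix is defined positionally in $\sigma$ while the lemma describes it by its target set, and the partition argument above bridges the two.
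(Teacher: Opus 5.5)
Your proposal is correct and follows essentially the same route as the paper: you read the single-step claims directly off Definition~\ref{def:productions}, observe that the block and the suffix again consist of one interval per stack, and induct from $[1..q_2]$ and $[1..q_3]$. Your explicit suffix intervals $[j_s{+}b{+}1..k_s]$ and $[j_t{+}r..k_t]$, together with your check that the ``uppermost remaining target'' is the left endpoint of the current interval, spell out details that the paper's terser proof leaves implicit.
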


\begin{proof}[Proof of Lemma~{\normalfont\ref{lem:book}}]
In the delayed case, Definition~\ref{def:productions} requires the block to
contain consecutive targets immediately below $(s,j_s)$ in stack $s$ and,
if any, consecutive targets beginning with the uppermost remaining target of
stack $t$. Together with the selected target, these form an initial index
interval in each stack. If the selected target is retrieved first, only that
target is removed. In both cases, the suffix contains exactly the remaining
targets. Thus, whenever the current subproblem contains one index interval
from each stack, so do the block and suffix. Starting from $[1..q_2]$ and
$[1..q_3]$, the claim follows by induction.
\end{proof}

\begin{proof}[Proof of Proposition~{\normalfont\ref{prop:assembly}}]
Consider a state $(j_2,k_2,j_3,k_3,\lo,\hicmd,B)$ in the recursive
construction of a complete retrieval order. Before the targets in this
subproblem are retrieved, exactly $B$ stack-2 targets have been retrieved,
all with indices smaller than $j_2$. Similarly, every previously retrieved
stack-3 target has index smaller than $j_3$. Any target with a smaller index
in its stack that remains unretrieved has been delayed until after the
current subproblem. All previously retrieved targets have assigned values
at most $\lo$.

Suppose that $f(j_2,k_2,j_3,k_3,\lo,\hicmd,B)$ is finite. We show by induction
on the number of targets in the two intervals that the recurrence constructs
a retrieval order $\sigma'$ containing exactly these targets, with
nondecreasing values in $[\lo,\hicmd]$. In the complete retrieval order:
\begin{enumerate}
\item Each target in $\sigma'$ has the same within-stack inversion count as
in $\sigma'$.
\item For each stack-3 target $k$ in $\sigma'$, $\kk_2(k)$ equals $B$ plus
the number of stack-2 targets preceding $k$ in $\sigma'$.
\item Every target in $\sigma'$ satisfies the feasibility conditions of
problem M3.
\item The targets in $\sigma'$ contribute exactly
$f(j_2,k_2,j_3,k_3,\lo,\hicmd,B)$ to the M3 objective.
\end{enumerate}

The claim is immediate when both intervals are empty. Otherwise, the
recurrence selects one of cases \textnormal{(I2)}, \textnormal{(I3)},
\textnormal{(D2)}, or \textnormal{(D3)}. We verify these cases below.

\emph{Cases (I2) and (I3).} Both cases retrieve the uppermost remaining
target of one stack first and assign it value $v$. Every previously retrieved
target in the same stack has a smaller index, while all lower targets are
retrieved later. Thus, the selected target has within-stack inversion count
zero.

In case \textnormal{(I2)}, the condition $v\leq\alpha_{j_2}$ ensures
feasibility of target $(2,j_2)$. Its contribution to the M3 objective is
$m_1-v$. After its retrieval, the number of previously retrieved stack-2
targets increases to $B+1$, and the recurrence continues with
$f(j_2{+}1,k_2,j_3,k_3,v,\hicmd,B{+}1)$.

In case \textnormal{(I3)}, target $(3,j_3)$ is preceded by exactly $B$
stack-2 targets, so $\kk_2(j_3)=B$. The conditions $v\leq\beta_{j_3}$ and
$B\leq\min(q_2,\gamma_{j_3})$ ensure its feasibility. Its contribution to
the M3 objective is $(m_1-v)+(q_2-B)$, and the recurrence continues with
$f(j_2,k_2,j_3{+}1,k_3,v,\hicmd,B)$.

In both cases, the remaining targets form a smaller subproblem with value
window $[v,\hicmd]$. By the induction hypothesis, the corresponding state
supplies a retrieval order satisfying the four properties. Retrieving the
selected target first and then following this order establishes the
properties for the current state.

\emph{Case (D2).} Target $(2,j_2)$ is delayed behind a block
containing the stack-2 targets indexed by $[j_2{+}1..d_2]$ and the stack-3
targets indexed by $[j_3..d_3]$, and is assigned value $v$. By the induction
hypothesis, $f(j_2{+}1,d_2,j_3,d_3,\lo,v,B)$ supplies a retrieval order for
this block with nondecreasing values in $[\lo,v]$. Target $(2,j_2)$ remains
unretrieved while the block is processed.

The selected target is then retrieved at value $v$. Its inverted successors are
exactly the $d_2-j_2$ stack-2 targets in the block, so
$r_{j_2}=d_2-j_2$. The condition
$v\leq\alpha_{j_2}+(d_2-j_2)$ ensures its feasibility. Its contribution to
the M3 objective is
$2(d_2-j_2)+(m_1-v)$.

The block and the selected target contain $d_2-j_2+1$ stack-2 targets. Thus,
before the suffix begins, $B+(d_2-j_2+1)$ stack-2 targets have been retrieved.
By Lemma~\ref{lem:book}, the suffix contains the intervals
$[d_2{+}1..k_2]$ and $[d_3{+}1..k_3]$. The induction hypothesis supplies a
retrieval order for this subproblem with nondecreasing values in
$[v,\hicmd]$. Retrieving the block, followed by $(2,j_2)$ and then the
suffix, establishes the four properties. Since case \textnormal{(D2)}
attains the minimum, the total contribution equals
$f(j_2,k_2,j_3,k_3,\lo,\hicmd,B)$.

\emph{Case (D3).} Target $(3,j_3)$ is delayed behind a block containing the
stack-2 targets indexed by $[j_2..d_2]$ and the stack-3 targets indexed by
$[j_3{+}1..d_3]$, and is assigned value $v$. By the induction hypothesis,
$f(j_2,d_2,j_3{+}1,d_3,\lo,v,B)$ supplies a retrieval order for this block
with nondecreasing values in $[\lo,v]$. Target $(3,j_3)$ remains unretrieved
while the block is processed.

The selected target is then retrieved at value $v$. Its inversion count is
$r_{j_3}=d_3-j_3$, and it is preceded by $K_2=B+(d_2-j_2+1)$ stack-2
targets. The conditions $v\leq\beta_{j_3}+(d_3-j_3)$ and
$K_2\leq\min(q_2,\gamma_{j_3}+(d_3-j_3))$ ensure its feasibility. Its
contribution to the M3 objective is $3(d_3-j_3)+(m_1-v)+(q_2-K_2)$.

Retrieving $(3,j_3)$ leaves the number of retrieved stack-2 targets at
$K_2$. By Lemma~\ref{lem:book}, the suffix contains the intervals
$[d_2{+}1..k_2]$ and $[d_3{+}1..k_3]$. The induction hypothesis supplies a
retrieval order for this subproblem with nondecreasing values in
$[v,\hicmd]$. Retrieving the block, followed by $(3,j_3)$ and then the
suffix, establishes the four properties. Since case \textnormal{(D3)}
attains the minimum, the total contribution equals
$f(j_2,k_2,j_3,k_3,\lo,\hicmd,B)$.

At the initial state, no target has been retrieved, $B=0$, and the value
window is $[0,m_1]$. The construction therefore yields a feasible retrieval
order of cost $f(1,q_2,1,q_3,0,m_1,0)$, proving
$\OPT\leq f(1,q_2,1,q_3,0,m_1,0)$.
\end{proof}

\begin{proof}[Proof of Proposition~{\normalfont\ref{prop:parsing-min}}]
By Proposition~\ref{prop:assembly}, every finite initial-state value is
attained by a feasible retrieval order constructed through cases
\textnormal{(I2)}, \textnormal{(I3)}, \textnormal{(D2)}, and
\textnormal{(D3)}. These cases follow the recursive decomposition in
Definition~\ref{def:productions}, so the constructed order is recursively
decomposable. The dynamic-program value is therefore no smaller than the
minimum cost among feasible solutions with recursively decomposable orders.
If no such solution exists, the initial-state value must be $+\infty$.

For the reverse inequality, consider a feasible solution $(\sigma,v)$ to
problem M3 with a recursively decomposable order, and fix a decomposition
under Definition~\ref{def:productions}. Let $\sigma'$ be a consecutive
subsequence arising in this decomposition. By Lemma~\ref{lem:book}, its
targets form one index interval from each stack. These intervals define a
dynamic-program state, with $B$ equal to the number of stack-2 targets
retrieved before $\sigma'$. The initial state has value window $[0,m_1]$.
If the selected target receives value $v^*$, the suffix receives window
$[v^*,\hicmd]$ and, in the delayed case, the block receives window
$[\lo,v^*]$. Since the values are nondecreasing along $\sigma$, every
target lies within its assigned value window.

By Lemma~\ref{lem:book}, previously retrieved targets have smaller indices
in their respective stacks, while lower targets outside $\sigma'$ are
retrieved afterward. Thus, each target's within-stack inversion count in
the complete order equals its inversion count in $\sigma'$. For each
stack-3 target $k$ in $\sigma'$, $\kappa_2(k)$ equals $B$ plus the number
of stack-2 targets preceding $k$ in $\sigma'$.

We show by induction on the number of targets in $\sigma'$ that the
corresponding state value does not exceed their total contribution to the
M3 objective. The claim is immediate when $\sigma'$ is empty.

For a nonempty $\sigma'$, choose the recurrence case corresponding to its
fixed decomposition and assign the selected target its value $v^*$. The
block, when present, and the suffix have exactly the index intervals, value
windows, and numbers of previously retrieved stack-2 targets specified by
this case. The counts established above and feasibility of $(\sigma,v)$
ensure that all conditions of the case hold. The recurrence adds exactly
the selected target's contribution to the M3 objective. By the induction
hypothesis, the block and suffix state values do not exceed their respective
contributions. Their sum, including the selected target's contribution,
therefore bounds the current state value from above.

At the initial state,
$f(1,q_2,1,q_3,0,m_1,0)$ is therefore at most the cost of $(\sigma,v)$.
Minimizing over all feasible solutions with recursively decomposable
orders proves the reverse inequality.
\end{proof}

\end{APPENDICES}

\end{document}